\documentclass{amsart}[12 pt]

\usepackage{latexsym}

\usepackage{amsmath,amssymb,dsfont,amsfonts}
\usepackage{mathrsfs}
\usepackage{verbatim}
\usepackage{graphicx}
\usepackage{graphics}
\usepackage{geometry}
\usepackage{booktabs}
\usepackage{enumitem}

\usepackage[colorlinks]{hyperref}

\newtheorem{proposition}{Proposition}[section]
\newtheorem{theorem}{Theorem}[section]
\newtheorem{definition}{Definition}[section]
\newtheorem{corollary}{Corollary}[section]
\newtheorem{lemma}{Lemma}[section]
\newtheorem{remark}{Remark}[section]
\newtheorem{example}{Example}[section]

\DeclareMathOperator*{\esssup}{ess\,sup} 

\numberwithin{equation}{section}



\allowdisplaybreaks

\paperwidth=210mm \paperheight=297mm

\textwidth 15.5cm \textheight 1.1\textheight

\oddsidemargin=.1in \evensidemargin=.1in

\topmargin=-.2in

\bibliographystyle{apalike}

\begin{document}
\markboth{}{}

\title[Discrete-type approximations for non-Markovian optimal stopping problems: Part II] {Discrete-type approximations for non-Markovian optimal stopping problems: Part II}


\author{S\'ergio C. Bezerra}

\address{Departamento de computa\c{c}\~ao cient\'ifica, Universidade Federal da Para\'iba, Rua dos Escoteiros, Jo\~ao Pessoa - Para\'iba, Brazil}\email{sergio@ci.ufpb.br}

\author{Alberto Ohashi}
\address{Departamento de Matem\'atica, Universidade de Bras\'ilia, 70910-900, Bras\'ilia - Distrito Federal, Brazil}\email{amfohashi@gmail.com}

\author{Francesco Russo}
\address{ENSTA ParisTech, Unit\'e de Math\'ematiques appliqu\'ees,
828, Boulevard des Mar\'echaux, F-91120, Palaiseau, France.}\email{francesco.russo@ensta-paristech.fr}

\author{Francys de Souza}

\address{Instituto de Matem\'atica, Estat\'istica e Computa\c{c}\~ao Cient\'ifica. Universidade de Campinas, 13083-859, Campinas - SP, Brazil}\email{francysouz@gmail.com}

\thanks{}
\date{\today}

\keywords{Optimal stopping; Stochastic Optimal Control; Monte Carlo methods} \subjclass{Primary: 93E20; Secondary: 60H30}

\begin{center}
\end{center}

\begin{abstract}
In this paper, we present a Longstaff-Schwartz-type algorithm for optimal stopping time problems based on the Brownian motion filtration. The algorithm is based on Le\~ao, Ohashi and Russo \cite{LEAO_OHASHI2017.3} and, in contrast to previous works, our methodology applies to optimal stopping problems for fully non-Markovian and non-semimartingale state processes such as functionals of path-dependent stochastic differential equations and fractional Brownian motions. Based on statistical learning theory techniques, we provide overall error estimates in terms of concrete approximation architecture spaces with finite Vapnik-Chervonenkis dimension. Analytical properties of continuation values for path-dependent SDEs and concrete linear architecture approximating spaces are also discussed.
\end{abstract}

\maketitle

\section{Introduction}
Optimal stopping is a quite popular type of stochastic control problem with many applications in applied sciences. A general optimal stopping problem can be formulated as follows. Let $(\Omega,\mathcal{F},\mathbb{P})$ be a complete probability space and let $\mathbb{F} = (\mathcal{F}_t)_{t\ge 0}$ be the natural augmented filtration generated by a $d$-dimensional standard Brownian motion $B$ and let $Z$ be an $\mathbb{F}$-adapted process. For a given $T>0$, one has to find

\begin{equation}\label{intr1}
\sup_{\tau\in \mathcal{T}_0(\mathbb{F})}\mathbb{E}[Z(\tau)]
\end{equation}
where $\mathcal{T}_0(\mathbb{F})$ denotes the set of all $\mathbb{F}$-stopping times taking values on the compact set $[0, T]$. Under mild integrability condition on $Z$, it is well-known the Snell envelope process of $Z$, denoted by $S$, is the minimal supermartingale of class (D) which dominates $Z$. Moreover, $S$ fully characterizes the optimal stopping problem in this very general setting. See e.g Karatzas and Shreve \cite{karatzas} and Lamberton \cite{lamberton} for further details.

A successful construction of the process $S$ leads to the solution of the optimal stopping problem. For instance, in case $Z  = g(X)$ for a function $g$ and a Markov process $X$, $S$ is then characterised by the least excessive (superharmonic) function $V (\cdot)$ that majorizes $g(\cdot)$ (see e.g Peskir and Shiryaev \cite{peskir}). In this Markovian context, PDE methods come into play in order to obtain value functions associated with optimal stopping problems, specially in low dimensions. In higher dimensions, a popular approach is to design Monte Carlo schemes for optimal stopping problems written on a Markovian state $X$. The literature on this research topic is vast. For an overview of the literature, we refer e.g to Bouchard and Warin \cite{bouchard}, Kholer \cite{kohler} and other references therein.

Numerical methods for optimal stopping have been widely studied by many authors and in different contexts. For instance, the least squares regression method for pricing American options has been widely studied in the Finance literature. The origins of the method can be found in the works of Carriere \cite{carriere}, Tsitsiklis and Van Roy \cite{tsi}, Longstaff and Schwartz \cite{longstaff} and Cl\'ement, Lamberton and Protter \cite{clement}. Basically, the method seeks a way of computing conditional expectations (the so-called continuation values) needed in the valuation process either directly as in \cite{longstaff, clement}, or indirectly through the value function as in \cite{tsi}. Egloff \cite{egloff} achieves a major advancement by introducing the dynamic look-ahead algorithm which includes both the Tsitsiklis-Van Roy and Longstaff-Schwartz algorithms as special cases. The key assumptions in Egloff \cite{egloff} require the architecture approximating sets designed to approach the continuation values to be closed, convex and uniformly bounded with a finite Vapnik-Chervonenkis dimension (VC-dimension). Later, Zanger \cite{zanger,zanger1,zanger2} presents more general results by employing nonlinear approximating architecture spaces and not necessarily convex and closed. Sequential design schemes are introduced by Gramacy and Ludkovski \cite{gramacy} in order to infer the underlying optimal stopping boundary. See also Hu and Ludkovski \cite{hu} for ranking generic response surfaces. We also draw attention to numerical methods based on the so-called dual approach (see Rogers \cite{rogers}) of the optimal stopping problem. In this direction, see e.g Belomestny \cite{belo} and Belomestny, Schoenmakers and Dickmann \cite{belo1} and other references therein. The common assumption in all these works is the Markov property on the underlying state reward process which allows us to handle continuation values in the dynamic programming algorithm.

In this work, we present a Monte Carlo scheme specially designed to solve optimal stopping problems of the form (\ref{intr1}) where the reward process
\begin{equation}\label{intr2}
Z = F(X)
\end{equation}
is a path-dependent functional of a general continuous process $X$ which is adapted to the Brownian filtration $\mathbb{F}$. The main contribution of this article is the development of a feasible Longstaff-Schwartz-type algorithm for fully non-Markovian states $X$. We are particularly interested in the case when $X$ cannot be reduced to vectors of Markov processes and $F$ may depend on the whole path of $X$. Monte Carlo schemes for optimal stopping problems driven by Markovian states have been intensively studied over the last two decades as described above, but to the best of our knowledge, a concrete and thorough analysis in the path dependent case has not yet been made. In particular, there is a gaping lack of results for truly non-Markovian systems, where the state $X$ cannot be transformed to a Markov process either because it will end up at an infinite-dimensional dynamics or due to the lack of observability. The present article is an attempt to close this gap, at least for the particular case when the underlying filtration is generated by the Brownian motion, i.e., we are restricted to continuous state processes without the presence of jumps.

Optimal stopping problems (\ref{intr1}) written on non-Markovian reward processes of the form (\ref{intr2}) arise in many contexts and applications. One important case appears in Finance. For instance, for American-style options, the price is the supremum over a large range of possible stopping times of the discounted expected payoff under a risk-neutral measure. In order to compute (\ref{intr1}), one has to invoke dynamic programming principle and a feasible numerical scheme for conditional expectations plays a key role in pricing American options. In the classical Markovian case, plenty of methods are available (see e.g Bouchard and Warin \cite{bouchard}). Under the presence of stochastic volatility or volatility structure depending on the whole asset price path, the asset price process becomes non-Markovian which makes the numerical analysis much harder. One typical way to overcome the lack of Markov property in stochastic volatility models is to assume the volatility variable is fully observable or at least it can be consistently estimated from implied volatility surfaces. More sophisticated methods based on nonlinear filtering and hidden Markov processes techniques can also be employed. In this direction, see e.g Rambharat and Brockwell \cite{ramb}, Ludkovski \cite{ludo}, Song, Liang and Liu \cite{song}, Ye and Zhou \cite{ye} and other references therein. In more complex cases, even if one assumes an observable volatility structure, one cannot reduce the problem (\ref{intr1}) to a Markovian case without adding infinitely many degrees of freedom. This type of phenomena arises when the volatility is a functional of the fractional Brownian motion $B_H$ as described e.g in Bayer, Friz and Gatheral \cite{bayer}, Gatheral \cite{gatheral}, Chronopoulou and Viens \cite{viens}, Forde and Zhang \cite{forde} and other references therein. Moreover, since fractional Brownian motion is neither a semimartingale nor a Markov process for $H\neq 1/2$, a concrete development of a Monte Carlo method is a highly non-trivial task.



In this work, we present a Monte Carlo scheme which applies to quite general states including path-dependent payoff functionals of path-dependent stochastic differential equations (henceforth abbreviated by SDEs), stochastic volatility and other non-Markovian systems driven by Brownian motion. The Monte Carlo scheme designed in this work is based on the methodology developed by Le\~ao and Ohashi \cite{LEAO_OHASHI2013, LEAO_OHASHI2017}, Le\~ao, Ohashi and Simas \cite{LEAO_OHASHI2017.1} and Le\~ao, Ohashi and Russo \cite{LEAO_OHASHI2017.3}. In Le\~ao, Ohashi and Russo \cite{LEAO_OHASHI2017.3}, the authors present a discretization method which yields a systematic way to approach fully non-Markovian optimal stopping problems based on the filtration $\mathbb{F}$. The philosophy is to consider the supermartingale Snell envelope

$$S(t) = \esssup_{t\in \mathcal{T}_t(\mathbb{F})}\mathbb{E}\big[Z( \tau)|\mathcal{F}_t\big];0\le t\le T$$
based on a continuous reward process $Z$ viewed as a generic non-anticipative functional of the Brownian  motion $B$, i.e., $Z= Z(B)$. In contrast to the standard literature on regression methods on Markovian optimal stopping, in our approach the relevant structure to be analyzed is the state noise $B$ and the reward process $Z$ is interpreted as a simple functional which ``transports'' $B$ into the system. Therefore, the underlying state space is infinite-dimensional. By using techniques developed by \cite{LEAO_OHASHI2017.1, LEAO_OHASHI2017.3}, we are able to reduce the dimension of the Brownian noise by means of a suitable discrete-time filtration generated by

\begin{equation}\label{AknIntro}
\mathcal{A}^k_n:= \Big(\Delta T^k_1, \eta^k_1, \ldots, \Delta T^k_n, \eta^k_n\Big); 1\le n \le e(k,T)
\end{equation}
where $(T^k_n)_{k,n\ge 1}$ is a suitable family of stopping times with explicit density functions (see e.g Burq and Jones \cite{Burq_Jones2008} and Milstein and Tretyakov \cite{milstein}), $\eta^k_i$ is a discrete random variable which encodes the sign and the jumping coordinate of a suitable approximating martingale at time $i=1, \ldots, e(k,T)$, $e(k,T):=d\lceil T\epsilon_k^{-2}\rceil$ encodes the necessary number of periods to recover the optimal stopping problem (\ref{intr1}) on a given interval $[0,T]$ for a given choice of a sequence $\epsilon_k\downarrow 0$ as $k\rightarrow +\infty$. Theorem \ref{disintegrationTH} presents a closed form expression for transition probabilities for (\ref{AknIntro}) so that the discretization procedure is feasible from a computational point of view (although in a high-dimensional setup). The usual Markov chain state dynamics $X$ onto $\mathbb{R}^n$ is replaced by $\{\mathcal{A}^k_i;1\le i\le e(k,T)\}$ taking values on the high-dimensional state space

$$\mathbb{S}\times \ldots \times \mathbb{S}\quad \big(e(k,T)-\text{fold cartesian product}\big)$$
where $\mathbb{S} :=(0,+\infty)\times \mathbb{I}$ and
$$\mathbb{I}:=\Big\{ (x_1, \ldots, x_d); x_\ell\in \{-1,0,1\}~\forall \ell \in \{1,\ldots, d\}~\text{and}~\sum_{j=1}^d|x_j|=1   \Big\}.$$

With the information set $\{\mathcal{A}^k_i;1\le i\le e(k,T)\}$ at hand, we design what we call an \textit{imbedded discrete structure} (see Definition \ref{GASdef}) for the reward process $Z$. As demonstrated by the works \cite{LEAO_OHASHI2017.1, LEAO_OHASHI2017.2, LEAO_OHASHI2017.3}, this type of structure exists under rather weak regularity assumptions on $Z$ such as path continuity and mild integrability hypotheses. However, we recall the methodology requires effort on the part of the ``user'' in order to specify the best structure that is suitable for analyzing a given state $X$ at hand.

Once a structure is fixed for (\ref{intr2}), Theorem \ref{mainTHLS} shows that a Longstaff-Schwartz-type algorithm associated with the primitive state variables $\mathcal{A}^k$ converges a.s. to the optimal value $S(0)$ as the number of simulated paths $N$ and the discretization level $k$ goes to infinity. Proposition \ref{maincor} and Corollary \ref{contVABS} provide error estimates of a Longstaff-Schwartz-type algorithm for the optimal stopping problem associated with a Snell envelope-type process written on a given imbedded discrete structure w.r.t. $Z$. In order to prove convergence of the Monte Carlo scheme with explicit error estimates, we make use of statistical learning theory techniques originally employed very successfully by Egloff \cite{egloff,egloff1} and Zanger \cite{zanger,zanger1, zanger2} based on Markov chain approximations for Markov diffusions. We show that the same philosophy employed by Zanger \cite{zanger,zanger1,zanger2} can be used to prove convergence of a Longstaff-Schwartz-type algorithm associated with rather general classes of reward processes $Z=F(X)$ where $X$ is an $\mathbb{R}^n$-valued $\mathbb{F}$-adapted continuous process. By means of statistical learning theory techniques, we get precise error estimates (w.r.t. $N$) for each discretization level $k$ encoded by $\epsilon_k$. Similar to the classical Markovian case, the regularity of continuation values related to our backward dynamic programming equation plays a key role on the overall error estimates. Theorem \ref{regPSDE} presents Sobolev-type regularity of the continuation values for a concrete example of a reward $Z$ composed with a non-Markovian path-dependent SDE $X$ driven by the Brownian motion.

It is important to stress that the numerical scheme designed in this paper is to compute the optimal value (\ref{intr1}). In other words, the observer is able to keep track of the control problem (e.g American option) over the \textit{whole} period $[0,T]$ and not only at a given set of exercise discrete times $t_0< \ldots < t_n $ (e.g Bermudan option-style). We point out that there is no conceptual obstruction to treat the discrete-case as well.

This paper is organized as follows. In Section \ref{preliminaries}, we recall some basic objects from Le\~ao, Ohashi and Russo \cite{LEAO_OHASHI2017.3}. In Section \ref{DPPsection}, we define the Longstaff-Schwartz algorithm. In Section \ref{errorESTSECTION}, we present convergence of the method and error estimates. In section \ref{regCVsection}, we present concrete linear architecture approximating spaces and smoothness of continuation values related to a optimal stopping problem arising from a path-dependent SDE. The proof of Theorem \ref{disintegrationTH} is presented in the Appendix section.

\





\section{Preliminaries}\label{preliminaries}
In order to make this paper self-contained, we recall the basic objects employed by Le\~ao, Ohashi and Simas \cite{LEAO_OHASHI2017.1} and Le\~ao, Ohashi and Russo \cite{LEAO_OHASHI2017.3}. For the sake of completeness, we provide here a list of the basic objects that we use in this article. Throughout this work, we are going to fix a $d$-dimensional Brownian motion $B = \{B^{1},\ldots,B^{d}\}$ on the usual stochastic basis $(\Omega, \mathbb{F}, \mathbb{P})$, where $\Omega$ is the space $C([0,+\infty);\mathbb{R}^d) := \{f:[0,+\infty) \rightarrow
\mathbb{R}^d~\text{continuous}\}$, equipped with the usual topology
of uniform convergence on compact intervals, $\mathbb{P}$ is the Wiener measure
on $\Omega$ such that $\mathbb{P}\{B(0) = 0\}=1 $ and $\mathbb{F}:=(\mathcal{F}_t)_{t\ge 0}$ is the
usual $\mathbb{P}$-augmentation of the natural filtration generated by the Brownian motion.

For a fixed positive sequence $\epsilon_k$ such that $\sum_{k\ge 1}\epsilon^2 _k <+\infty$ and for each $j = 1, \ldots, d$, we define $T^{k,j}_0 := 0$ a.s. and we set
\begin{equation}\label{stopping_times}
T^{k,j}_n := \inf\big\{T^{k,j}_{n-1}< t <\infty;  |B^{j}(t) - B^{j}(T^{k,j}_{n-1})| = \epsilon_k\big\}, \quad n \ge 1.
\end{equation}
For each $j\in \{1,\ldots,d \}$, the family $(T^{k,j}_n)_{n\ge 0}$ is a sequence of $\mathbb{F}$-stopping times and the strong Markov property implies that $\{T^{k,j}_n - T^{k,j}_{n-1}; n\ge 1\}$ is an i.i.d. sequence with the same distribution as $T^{k,j}_1$. Moreover, $T^{k,j}_1$ is an absolutely continuous random variable (see Burq and  Jones \cite{Burq_Jones2008}).


From this family of stopping times, we define $A^{k}:=(A^{k,1},\ldots, A^{k,d})$ as the $d$-dimensional step process whose components are given by

\begin{equation}\label{rw}
A^{k,j} (t) := \sum_{n=1}^{\infty}\epsilon_k~\sigma^{k,j}_n1\!\!1_{\{T^{k,j}_n\leq t \}};~t\ge0,
\end{equation}
where

\begin{equation}\label{sigmakn}
\sigma^{k,j}_n:=\left\{
\begin{array}{rl}
1; & \hbox{if} \ B^{j} (T^{k,j}_n) - B^{j} (T^{k,j}_{n-1}) > 0 \\
-1;& \hbox{if} \ B^{j} (T^{k,j}_n) - B^{j} (T^{k,j}_{n-1})< 0, \\
\end{array}
\right.
\end{equation}
for $k,n\ge 1$ and $j=1, \ldots , d$. Let $\mathbb{F}^{k,j} := \{ \mathcal{F}^{k,j}_t; t\ge 0 \} $ be the natural filtration generated by $\{A^{k,j}(t);  t \ge 0\}$. One should notice that $\mathbb{F}^{k,j}$ is a filtration of discrete type~(see Section 4 (Chap 11) and Section 5 (Chap 5) in He, Wang and Yan ~\cite{he}) in the sense that

\[
\mathcal{F}^{k,j}_t = \Big\{\bigcup_{\ell=0}^\infty D_\ell \cap \{T^{k,j}_{\ell} \le t < T^{k,j}_{\ell+1}\}; D_\ell\in \mathcal{F}^{k,j}_{T^{k,j}_\ell}~\text{for}~\ell \ge 0 \Big\},~t\ge 0,
\]
where $\mathcal{F}^{k,j}_0 = \{\Omega, \emptyset \}$ and $\mathcal{F}^{k,j}_{T^{k,j}_m}=\sigma(T^{k,j}_1, \ldots, T^{k,j}_m, \sigma^{k,j}_1, \ldots, \sigma^{k,j}_m)$ for $m\ge 1$ and $j=1,\ldots, d$. From Th 5.56 in He, Wang and Yan \cite{he}, we know that


$$\mathcal{F}^{k,j}_{T^{k,j}_m}\cap\big\{T^{k,j}_m \le t < T^{k,j}_{m+1}\big\} =\mathcal{F}^{k,j}_t\cap \big\{T^{k,j}_m \le t < T^{k,j}_{m+1}\big\},$$
for each $m\ge 0$ and $j=1,\ldots, d$. In this case, $\mathbb{F}^{k,j}$ is a jumping filtration in the sense of Jacod and Shiryaev \cite{jacod}. One can easily check (see Lemma 2.1 in Le\~ao, Ohashi and Simas \cite{LEAO_OHASHI2017.1}) that $A^{k,j}$ is an $\mathbb{F}^{k,j}$-square-integrable martingale over compact sets for every $k\ge 1$ and $1\le j\le d$.

The multi-dimensional filtration generated by $A^k$ is naturally characterized as follows. Let $\mathbb{F}^{k} := \{\mathcal{F}^{k}_t ; 0 \leq t <\infty\}$ be the product filtration given by $\mathcal{F}^{k}_t := \mathcal{F}^{k,1}_t \otimes\mathcal{F}^{k,2}_t\otimes\cdots\otimes\mathcal{F}^{k,d}_t$ for $t\ge 0$. Let $\mathcal{T}:=\{T^{k}_m; m\ge 0\}$ be the order statistics based on the family of random variables $\{T^{k,j}_\ell; \ell\ge 0 ;j=1,\ldots,d\}$. That is, we set $T^{k}_0:=0$,

$$
T^{k}_1:= \inf_{1\le j\le d}\Big\{T^{k,j}_1 \Big\},\quad T^{k}_n:= \inf_{\substack {1\le j\le d\\ m\ge 1} } \Big\{T^{k,j}_m ; T^{k,j}_m \ge  T^{k}_{n-1}\Big\}
$$
for $n\ge 1$. In this case, $\mathcal{T}$ is the partition generated by all stopping times defined in \eqref{stopping_times}. By the independence of the family $\{B^1,\ldots, B^d\}$, the elements of $\mathcal{T}$ are almost surely distinct for every $k\ge 1$.

The structure $\mathscr{D} :=\{\mathcal{T}, A^{k,j}; 1\le j\le d, k\ge 1\}$ is a \textit{discrete-type skeleton} for the Brownian motion in the language of Le\~ao, Ohashi and Simas \cite{LEAO_OHASHI2017.1}. Throughout this article, we set

$$\Delta T^{k}_n: = T^{k}_n - T^{k}_{n-1}, \Delta T^{k,j}_n: = T^{k,j}_n - T^{k,j}_{n-1}; 1\le j\le d, n\ge 1,$$
and $N^{k}(t):=\max\{n; T^{k}_n\le t\}; t\ge 0$. We also denote

$$
\eta^{k,j}_n:=\left\{
\begin{array}{rl}
1; & \hbox{if} \  \Delta A^{k,j} (T^k_n)>0 \\
-1;& \hbox{if} \  \Delta A^{k,j} (T^k_n)< 0 \\
0;& \hbox{if} \ \Delta A^{k,j} (T^k_n)=0,
\end{array}
\right.
$$
and $\eta^k_n:=\big(\eta^{k,1}_n, \dots, \eta^{k,d}_n\big); n,k\ge 1$. Let us define
$$\mathbb{I}_k:=\Big\{ (i^k_1, \ldots, i^k_d); i^k_\ell\in \{-1,0,1\}~\forall \ell \in \{1,\ldots, d\}~\text{and}~\sum_{j=1}^d|i^k_j|=1   \Big\}$$
and $\mathbb{S}_k:=(0,+\infty)\times \mathbb{I}_k$. Let us define $\aleph: \mathbb{I}_k\rightarrow \{1,\ldots, d\}\times\{-1,1\}$ by

\begin{equation}\label{alephamap}
\aleph(\tilde{i}^{k}):=\big(\aleph_1(\tilde{i}^{k}),\aleph_2(\tilde{i}^{k})\big):=(j,r),
\end{equation}
where $j\in\{1,\ldots, d\}$ is the coordinate of $\tilde{i}^k\in \mathbb{I}_k$ which is different from zero and $r\in\{-1,1\}$ is the sign of $\tilde{i}^k$ at the coordinate $j$.

The $n$-fold Cartesian product of $\mathbb{S}_k$ is denoted by $\mathbb{S}_k^n$ and a generic element of $\mathbb{S}^n_k$ will be denoted by $$\mathbf{b}^k_n := (s^k_1,\tilde{i}^k_1, \ldots, s^k_n, \tilde{i}^k_n)\in \mathbb{S}^n_k$$
where $(s^k_r,\tilde{i}^k_r)\in (0,+\infty)\times \mathbb{I}_k$ for $1\le r\le n$. The driving noise in our methodology is given by the following discrete-time process

$$\mathcal{A}^k_n:= \Big(\Delta T^k_1, \eta^k_1, \ldots, \Delta T^k_n, \eta^k_n\Big)\in \mathbb{S}^n_k~a.s.$$
One should notice that $$\mathcal{F}^k_{T^k_n} = (\mathcal{A}^k_n)^{-1}(\mathcal{B}(\mathbb{S}^n_k))$$
where $\mathcal{B}(\mathbb{S}^k_n)$ is the Borel sigma algebra generated by $\mathbb{S}^n_k; n\ge 1$. We set $\mathcal{A}^k_0:=\textbf{0}$ (null vector in~$\mathbb{R}\times\mathbb{R}^d$) and $\mathbb{S}^0_k:=\{\textbf{0}\}$.


The law of the system will evolve according to the following probability measure defined on

$$\mathbb{P}^k_n(E):=\mathbb{P}\circ \mathcal{A}^k_n(E):=\mathbb{P}\{\mathcal{A}^k_n\in E\}; E\in \mathcal{B}(\mathbb{S}^n_k), n\ge 1.$$
The main goal of this article is to treat non-Markovian systems. A priori there is no Markovian semigroup available for us coming from the reward process, so that we need to replace this classical notion in our context. By the very definition,

$$\mathbb{P}^k_{n}(\cdot) = \mathbb{P}^k_{r}(\cdot\times \mathbb{S}^{r-n}_k)$$
for any $r> n\ge 1$.
More importantly, $\mathbb{P}^k_{r}(\mathbb{S}^{n}_k\times \cdot)$ is a regular measure and $\mathcal{B}(\mathbb{S}_k)$ is countably generated, then it is known (see e.g III. 70-73 in Dellacherie and Meyer \cite{dellacherie2}) there exists ($\mathbb{P}^k_{n}$-a.s. unique) a disintegration $\nu^k_{n,r}: \mathcal{B}(\mathbb{S}^{r-n}_k)\times\mathbb{S}^{n}_k\rightarrow[0,1]$ which realizes

$$\mathbb{P}^k_{r}(D) = \int_{\mathbb{S}^{n}_k}\int_{\mathbb{S}^{r-n}_k} 1\!\!1_{D}(\textbf{b}^k_{n},q^k_{n,r})\nu^k_{n,r} (dq^k_{n,r}|\textbf{b}^k_{n})\mathbb{P}^k_{n}(d\textbf{b}^k_{n})$$
for every $D\in \mathcal{B}(\mathbb{S}^{r}_k)$, where $q^k_{n,r}$ is the projection of $\textbf{b}^k_r$ onto the last $(r-n)$ components, i.e., $q^k_{n,r} = (s^k_{n+1},\tilde{i}^k_{n+1}, \ldots,s^k_{r},\tilde{i}^k_{r} )$ for a list $\textbf{b}^k_r = (s^k_1,\tilde{i}^k_1, \ldots, s^k_r,\tilde{i}^k_r)\in \mathbb{S}^r_k$. If $r=n+1$, we denote $\nu^k_{n+1}:=\nu^k_{n,n+1}$. By the very definition, for each $E\in \mathcal{B}(\mathbb{S}_k)$ and $\mathbf{b}^k_{n}\in \mathbb{S}_k^{n}$, we have

\begin{equation}\label{preTP}
\nu^k_{n+1}(E|\mathbf{b}^k_{n})= \mathbb{P}\Big\{(\Delta T^k_{n+1}, \eta^k_{n+1})\in E|\mathcal{A}^k_{n} = \mathbf{b}^k_{n}\Big\}; n\ge 1.
\end{equation}
In other words, $\nu^k_{n+1}$ is the transition probability of the discrete-type skeleton $\mathscr{D}$ from step $n$ to $n+1$.

Next, we present a closed-form expression for the transition kernel (\ref{preTP}). For a given $\mathbf{b}^k_n = (s^k_1,\tilde{i}^k_1,\ldots, s^k_n, \tilde{i}^k_n)$, we define

\begin{equation}\label{pfunction}
\wp_\lambda(\textbf{b}^k_{n}):=\max\{1\le j\le n; \aleph_1(\tilde{i}^k_j)=\lambda\},
\end{equation}
where in (\ref{pfunction}), we make the convention that $\max\{\emptyset\}=0$. Moreover,  For a given $\mathbf{b}^k_n \in \mathbb{S}^k_n$, we set $\textbf{i}^k_{n}:=\big(\tilde{i}^k_1,\ldots, \tilde{i}^k_{n}\big)$ and we define

$$\mathbb{j}_{\lambda}(\textbf{i}^k_n):=\{\text{Number of jumps in the vector}~\textbf{i}^k_{n}~\text{which occurs in the $\lambda$ coordinate}\}.$$
For instance, if $\textbf{i}^k_3 = \big((-1,0), (0,1), (1,0)\big)$, then $\mathbb{j}_{1}(\textbf{i}^k_3)= 2$. For each $\textbf{b}^k_n\in \mathbb{S}^n_k$, we set

\begin{equation}\label{tknfunction}
t^k_n(\textbf{b}^k_n) : = \sum_{\beta=1}^n s^k_\beta.
\end{equation}
We then define

\begin{equation}\label{tkmodfunction}
t^{k,\lambda}_{\mathbb{j}_{\lambda}}(\textbf{b}^k_{n}): = \sum_{\beta=1}^{\wp_\lambda(\textbf{b}^k_{n})}s^k_\beta
\end{equation}
for $\lambda\in \{1,\ldots, d\}$ and $\textbf{b}^k_n\in \mathbb{S}^n_k$. We set $t^{k,\lambda}_0 = t^k_0 = 0$ and

$$\Delta^{k,\lambda}_n(\mathbf{b}^k_n):=t^k_n(\mathbf{b}^k_n)  - t^{k,\lambda}_{\mathbb{j}_{\lambda}}(\mathbf{b}^k_n).$$

When no confusion arises, we omit the dependence on the variable $\textbf{b}^k_n$ in $t^k_n$, $t^{k,\lambda}_{\mathbb{j}_{\lambda}}$ and $\Delta^{k,\lambda}_n$. Let $f_k$ be the density of the hitting time $T^{k,1}_1$ (see e.g Section 5.3 in \cite{milstein}). We make use of the information set described in (\ref{pfunction}), (\ref{tknfunction}) and (\ref{tkmodfunction}). We define

$$E_{j}:=\{1,2,\dots,j-1,j+1,\dots,d\}$$
and
$$f_{min}^k(t):=f_{min}^k(\textbf{b}^k_{n},j,t):=\displaystyle\left( \prod_{\lambda\in E_{j}} f_k\big(t+\Delta^{k,\lambda}_n\big)\right),$$
for $(\textbf{b}^k_{n},j,t)\in \mathbb{S}^{n-1}_k\times \{1, \ldots, d\}\times \mathbb{R}_+$
For instance,  if $d=2$, we have
$$f^k_{min}(\textbf{b}^k_n,j,t) = \left\{
\begin{array}{rl}
f_k\big(t+\Delta^{k,2}_n(\textbf{b}^k_n) \big); & \hbox{if} \ j=1 \\
f_k\big(t+\Delta^{k,1}_n(\textbf{b}^k_n) \big);& \hbox{if} \ j=2 \\
\end{array}
\right.
$$
for $(\textbf{b}^k_n,j,t)\in \mathbb{S}^n_k\times \{1,2\}\times \mathbb{R}_+.$


\begin{theorem}\label{disintegrationTH}
For each $\mathbf{b}^k_{n}\in\mathbb{S}^{n}_k$, $(j,\ell)\in \{1,\ldots, d\}\times \{-1,1\}$ and $-\infty< a < b < +\infty$, we have

\begin{equation}\label{disinformula}
\nu^k_{n+1}\big( (a,b)\times \aleph^{-1}(\{j,\ell\}) |\mathbf{b}^k_n\big) = \left\{
\begin{array}{rl}
\frac{1}{2}\int_a^bf_k(s)ds; & \hbox{if} \ d=1 \\
\frac{1}{2}L_{k,n}\big(j,(a,b)\big) \times I_{k,n}(j);& \hbox{if} \ d>1 \\
\end{array}
\right.
\end{equation}
where

$$L_{k,n}\big( j,(a,b)  \big):=\frac{\int_{a+\Delta^{k,j}_n}^{b+\Delta^{k,j}_n}f_k(x)dx}{\int_{\Delta^{k,j}_n}^{+\infty}f_k(x)dx}$$

$$I_{k,n}(j):=\frac{\int_{-\infty}^{0}\int_{-s}^\infty f_{k}\big(s+t+\Delta^{k,j}_n\big)f^k_{min}(\mathbf{b}^k_n,j,t)dtds}{\prod_{\lambda=1}^d \int_{\Delta^{k,\lambda}_n}^{+\infty} f_{k}(t)dt}.
$$

\end{theorem}
The proof of Theorem \ref{disintegrationTH} is postponed to the Appendix section. The importance of this formula lies on the following representation for conditional expectations: If $G = \Phi(\mathcal{A}^k_m)$ for a Borel function $\Phi:\mathbb{S}_k^m\rightarrow \mathbb{R}$ with $m\ge 1$, then

\begin{equation}\label{condEXP}
\mathbb{E}\Big[G\big| \mathcal{F}^k_{T^k_{m-1}}\Big] = \int_{\mathbb{S}_k} \Phi(\mathcal{A}^k_{m-1},s^k_m,\tilde{i}^k_m)\nu^k_m(ds^k_md\tilde{i}^k_m|\mathcal{A}^k_{m-1})~a.s.
\end{equation}
and iterating conditional expectations, formula (\ref{condEXP}) allows us to compute $\mathbb{E}\Big[G\big| \mathcal{F}^k_{T^k_{j}}\Big]$ for each $j=0,\ldots, m-1$.

\section{Dynamic programming in non-Markovian optimal stopping problems}\label{DPPsection}
In order to make this paper self-contained, we briefly recall the main results given by Le\~ao, Ohashi and Russo \cite{LEAO_OHASHI2017.3}. Let us denote $\textbf{B}^p(\mathbb{F})$ as the space of c\`adl\`ag $\mathbb{F}$-adapted processes $X$ such that

$$\|X\|^p_{\textbf{B}^p}:=\mathbb{E}\sup_{0\le t\le T}|X(t)|^p< \infty$$
where $1\le p< \infty$.
Throughout this work, we are going to fix a terminal time $0 < T < +\infty$. The following concept will be important in this work.

\begin{definition}\label{GASdef}
We say that $\mathcal{Y} = \big((X^k)_{k\ge 1},\mathscr{D}\big)$ is an \textbf{imbedded discrete structure} for $X$ if $X^k$ is a sequence of $\mathbb{F}^k$-adapted pure jump processes of the form

\begin{equation}\label{purejumpmodel}
X^k(t) = \sum_{n=0}^\infty X^{k}(T^k_n)1\!\!1_{\{T^k_n\le t < T^k_{n+1}\}}; 0\le t\le T,
\end{equation}
it has integrable quadratic variation $\mathbb{E}[X^k,X^k](T)< \infty; k\ge 1$, and

\begin{equation}\label{scdef}
\lim_{k\rightarrow+\infty}\|X^k-X\|^p_{\textbf{B}^p}=0
\end{equation}
for some $p\ge 1$.
\end{definition}

For $t\le T$, we denote $\mathcal{T}_t(\mathbb{F})$ as the set of all $\mathbb{F}$-stopping times $\tau$ such that $t\le \tau\le T$~a.s. For an integer $n\geq 0$, we denote by $\mathcal{T}_{k,n}(\mathbb{F}):=\mathcal{T}_{t}(\mathbb{F})$ for $t=T^k_n$. To shorten notation, we set $\mathcal{T}_{k,n}:=\mathcal{T}_{k,n}(\mathbb{F})$. Throughout this article, we assume that the underlying reward process $Z$ is an $\mathbb{F}$-adapted continuous process and it satisfies the following integrability condition:

\

\noindent \textbf{(A1)} $\|Z\|^p_{\textbf{B}^p}< \infty~\forall p\ge 1.$

\

For a given reward process $Z$, let $S$ be the Snell envelope associated with $Z$

\[
S (t):= \text{ess} \sup_{\tau\in \mathcal{T}_t(\mathbb{F})} \mathbb{E} \left[ Z(\tau)  \mid \mathcal{F}_t \right], \quad 0 \leq t \leq T.
\]
We assume $S$ satisfies the following integrability condition:

\

\noindent \textbf{(A2)} $\|Z\|^p_{\textbf{B}^p}< \infty~\forall p\ge 1.$

\

Since the optimal stopping time problem at hand takes place on the compact set $[0,T]$, it is crucial to know the correct number of periods in our discretization scheme. For this purpose, let us denote $\lceil x\rceil$ as the smallest natural number bigger
or equal to $x\ge 0$. We recall (see Lemma 3.1 in Le\~ao, Ohashi and Russo \cite{LEAO_OHASHI2017.3}) that if $e(k,t):=d\lceil 2^{2k}t \rceil$, then for each $t\ge 0$

$$T^k_{e(k,t)}\rightarrow t$$
almost surely and in $L^2(\mathbb{P})$ as $k\rightarrow+\infty$. Due to this result, we will reduce the analysis to the deterministic number of periods $e(k,T)$.

\begin{remark}\label{Markov_A}
For each $k\ge 1$, one can view

$$\mathcal{A}^k_n; n=1,\ldots, e(k,T)$$
as a discrete-time Markov process on the enlarged state space $\mathbb{S}^{e(k,T)}_k$ by setting

$$\mathcal{A}^k_n=(\Delta T^k_1, \eta^k_1, \ldots, \Delta T^k_n,\eta^k_n, \Delta T^k_n,\eta^k_n,\ldots, \Delta T^k_n,\eta^k_n)\in \mathbb{S}_k^{e(k,T)}$$
for each $n=1, \ldots, e(k,T)$.
\end{remark}

We denote $D^{k,m}_n$ as the set of all $\mathbb{F}^k$-stopping times of the form

\begin{equation}\label{form1}
\tau = \sum_{i=n}^{m}T^k_i1\!\!1_{\{\tau = T^k_i\}}
\end{equation}
where $\{\tau = T^k_i; i=n, \ldots, m\}$ is a partition of $\Omega$ and $0\le n \le m$. Let us denote $D^k_{n,T} := \{\eta\wedge T; \eta\in D^{k,\infty}_n\}$.

Let $\{Z^k; k\ge 1\}$ be a sequence of pure jump processes of the form (\ref{purejumpmodel}) and let $\{S^k; k\ge 1\}$ be the associated value process given by

\begin{equation}\label{discretevaluep}
S^k(t) := \sum_{n=0}^{e(k,T)} S^k(T^k_n)\mathds{1}_{\{T^k_n\le t < T^k_{n+1}\}}; 0\le t\le T,
\end{equation}
where

$$S^k(T^k_n):= \esssup_{\tau\in D^{k,e(k,T)}_{n}}\mathbb{E}\Big[ Z^k(\tau\wedge T)\big|\mathcal{F}^k_{T^k_n}\Big];0\le n\le e(k,T).$$
In the sequel, we denote

$$U^{\mathcal{Y},k,\textbf{p}}S(T^k_i):=\mathbb{E}\Bigg[\frac{\Delta S^{k}(T^k_{i+1})}{\epsilon_k^2}\Big|\mathcal{F}^k_{T^k_i}\Bigg]; 0\le i\le e(k,T)-1.$$
Let us now recall two major results presented in Le\~ao, Ohashi and Russo \cite{LEAO_OHASHI2017.3}, namely Theorems 3.1 and 3.2.

\

\noindent \textbf{Theorem 3.1 in Le\~ao, Ohashi and Russo~\cite{LEAO_OHASHI2017.3}}. Let $S$ be the Snell envelope associated with a reward process $Z$ satisfying (A1-A2). Let $\{Z^k; k\ge 1\}$ be a sequence of pure jump processes of the form (\ref{purejumpmodel}) and let $\{S^k; k\ge 1\}$ be the associated value process given by (\ref{discretevaluep}). If $\mathcal{Z} = \big( (Z^k)_{k\ge 1},\mathscr{D}\big)$ is an imbedded discrete structure for $Z$ where (\ref{scdef}) holds for $p>1$, then $\mathcal{S} = \big((S^k)_{k\ge 1},\mathscr{D}\big)$ is an imbedded discrete structure for $S$ where

$$\lim_{k\rightarrow+\infty}\mathbb{E}\sup_{0\le t\le T}|S^k(t) - S(t)|=0.$$
Moreover, $\{S^k; k\ge 1\}$ is the unique pure jump process of the form (\ref{purejumpmodel}) which satisfies the following variational inequality

\begin{eqnarray}\label{varineq}
\max \Big\{ U^{\mathcal{Y},k,\textbf{p}}S(T^k_i);  Z^k(T^k_i\wedge T) - S^{k}(T^k_i) \Big\} & = & 0\quad i=e(k,T)-1, \ldots, 0,~a.s. \\
\nonumber S^{k} (T^k_{e(k,T)}) &=&Z^k (T^k_{e(k,T)}\wedge T)~a.s.
\end{eqnarray}

\

Next, for sake of completeness, we recall some explanations given by \cite{LEAO_OHASHI2017.3} which will be important to the Longstaff-Schwartz algorithm given in the next section. The variational inequality is equivalent to

\begin{eqnarray}
\nonumber S^{k}(T^k_n)&=&\max\Big\{Z^k(T^k_n\wedge T); \mathbb{E}\Big[S^{k}(T^k_{n+1})\big|\mathcal{F}^k_{T^k_n}\Big]\Big\};~n=e(k,T)-1, e(k,T)-2, \ldots, 0~a.s.\\
\label{qeq}& &\\
\nonumber S^{k}(T^k_{e(k,T)})&=& Z^k(T^k_{e(k,T)}\wedge T)~a.s.
\end{eqnarray}
For each $n\in \{0,\ldots, e(k,T)\}$, there exist Borel-measurable functions $\mathbb{V}^k_n:\mathbb{S}^n_k\rightarrow\mathbb{R}$ and $\mathbb{Z}^k_n:\mathbb{S}^n_k\rightarrow\mathbb{R}$ which realize

\begin{equation}\label{listpathwise}
S^{k}(T^k_n) = \mathbb{V}^k_n(\mathcal{A}^k_n)~a.s.\quad \text{and}~Z^k(T^k_n\wedge T) = \mathbb{Z}^k_n(\mathcal{A}^k_n)~a.s.; n=0, \ldots, e(k,T).
\end{equation}
Moreover, the sequence $\mathbb{V}^k_i:\mathbb{S}^i\rightarrow \mathbb{R}; 0\le i\le e(k,T)$ are determined by the following dynamic programming algorithm

\begin{eqnarray}
\nonumber  \mathbb{V}^k_i(\textbf{b}^k_i)&=&\max\Big\{\mathbb{Z}^{k}_i(\textbf{b}^k_i); \mathbb{E}\big[\mathbb{V}^k_{i+1}(\mathcal{A}^{k}_{i+1})|\mathcal{A}^{k}_i =\textbf{b}^k_{i}\big]\Big\};~0\le i\le e(k,T)-1\\
\label{DPA1} \mathbb{V}^k_{e(k,T)}(\textbf{b}^k_{e(k,T)}) &=& \mathbb{Z}^{k}_{e(k,T)}(\textbf{b}^k_{e(k,T)}),
\end{eqnarray}
for each $\textbf{b}^k_{i}\in \mathbb{S}^i_k$ for $0\le i\le e(k,T)$ and $k\ge 1$.
The dynamic programming algorithm allows us to define the stopping and continuation regions as follows

$$\textbf{S}(i,k):=\Big\{\textbf{b}^k_{i}\in  \mathbb{S}^{i}_k; \mathbb{Z}^k_i(\textbf{b}^k_i) = \mathbb{V}^k_i(\textbf{b}^k_i) \Big\}\quad \text{(stopping region)}$$

$$\textbf{D}(i,k):=\Big\{\textbf{b}^k_{i}\in  \mathbb{S}^{i}_k; \mathbb{V}^k_i(\textbf{b}^k_i) > \mathbb{Z}^k_i(\textbf{b}^k_i) \Big\}\quad \text{(continuation region)}$$
where $0\le i\le e(k,T)$. For a given non-negative integer $n\ge 0$, let us denote $J^{k,e(k,T)}_n$ as the set of all $(\mathcal{F}^k_{T^k_i})^{e(k,T)}_{i=n}$-stopping times $\eta$ having the form

\begin{equation}\label{isom}
\eta = \sum_{i=n}^{e(k,T)} i1\!\!1_{\{\tau=i\}},
\end{equation}
where $\{\tau=i\}; n\le i\le e(k,T)$ constitutes a partition of $\Omega$. Clearly, there exists a natural isomorphism between $J^{k,e(k,T)}_n$ and $D^{k,e(k,T)}_n$. Let us define

\begin{equation}\label{isom1}
Y^k(i):=Z^k(T^k_i\wedge T); i\ge 0.
\end{equation}
By construction,

$$\esssup_{\eta\in J^{k,e(k,T)}_n}\mathbb{E}\big[Y^k(\eta)|\mathcal{F}^k_{T^k_n}\big] = S^k(T^k_n)~a.s.$$
for each $0\le n \le e(k,T)$.

The smallest $(\mathcal{F}^k_{T^k_i})^{e(k,T)}_{i=0}$-optimal stopping-time w.r.t. the problem

$$\sup_{\tau \in J^{e(k,T)}_0}\mathbb{E} \big[Y^k(\tau)\big] = \sup_{\eta\in D^{k,e(k,T)}_0}\mathbb{E}\big[Z^k(\eta\wedge T)\big]$$
is given by
\begin{eqnarray}
\nonumber \tau^{k}&:=&\min \Big\{0\le j\le e(k,T); \mathcal{A}^k_{j}\in \textbf{S}(j,k)\Big\}\\
\label{bbbb}& &\\
\nonumber&=&\min\Big\{0\le j\le e(k,T); S^{k}(T^k_j) = Z^k(T^k_j \wedge T)\Big\}
\end{eqnarray}
which is finite a.s. by construction. Moreover, the dynamic programming principle can be written as

\begin{equation}\label{DPST}
\left\{\begin{array}{l}
 \tau^{k}_{e(k,T)}:= e(k,T) \\
\tau^{k}_{j}:= j 1\!\!1_{G^k_j} + \tau^{k}_{j+1}1\!\!1_{(G^{k}_j)^c}; 0\le j \le e(k,T)-1
\end{array}\right.
\end{equation}
where

$$G^k_j: = \Bigg\{\mathbb{Z}^k_j (\mathcal{A}^k_j)\ge \mathbb{E}\Big[\mathbb{Z}^k_{\tau^{k}_{j+1}}(\mathcal{A}^k_{\tau^{k}_{j+1}})\big|\mathcal{A}^k_j\Big]\Bigg\}; 0\le j\le e(k,T)-1$$
and $\tau^{k}=\tau^{k}_0$ a.s. The sequence of functions $\mathbf{U}^k_j:\mathbb{S}^j_k\rightarrow\mathbb{R}$

\begin{equation}\label{cvalues}
\textbf{b}^k_j\mapsto \mathbf{U}^k_j(\textbf{b}^k_j):=\mathbb{E}\Big[\mathbb{Z}^k_{\tau^{k}_{j+1}}(\mathcal{A}^k_{\tau^{k}_{j+1}})\big|\mathcal{A}^k_j=\textbf{b}^k_j\Big]; 0\le j\le e(k,T)-1
\end{equation}
are called \textit{continuation values}.

The value functional which gives the best payoff can be reconstructed by means of our dynamic programming principle over the $e(k,T)$-steps in such way that

\begin{equation}\label{valuef}
V^k_0:=\sup_{\eta\in J^{k,e(k,T)}_0}\mathbb{E}\big[Y^k(\eta)\big] = \mathbb{V}^k_0(\textbf{0}) = \max\Big\{\mathbb{Z}^k_0(\textbf{0}); \mathbb{E}\big[\mathbb{V}^k_1 (\mathcal{A}^k_1)\big]\Big\},
\end{equation}
where $\mathbb{E}\big[\mathbb{V}^k_1 (\mathcal{A}^k_1)\big] = \mathbb{E}\Big[\mathbb{Z}^k_{\tau^{k}_1}(\mathcal{A}^k_{\tau^{k}_1})\Big] = \mathbf{U}^k_0(\textbf{0})$. Moreover,

\begin{eqnarray}
\label{aaa} \mathbb{E}\big[Y^k(\tau^k)\big]&=&\mathbb{E}\big[Z^k(T^k_{\tau^{k}}\wedge T)\big] = \sup_{\tau\in D^k_{0,T}} \mathbb{E}\big[Z^k(\tau \wedge T^k_{e(k,T)})\big]\\
\nonumber & &\\
\label{aaaa}&=&\sup_{\tau\in \mathcal{T}_0(\mathbb{F})}\mathbb{E} \big[Z^k(\tau \wedge T^k_{e(k,T)})\big]
\end{eqnarray}
where identity (\ref{aaaa}) is due to Proposition 3.1 in \cite{LEAO_OHASHI2017.3}. We now finish this section by recalling the following result. In the sequel, we recall that $S(0) = \sup_{\eta\in \mathcal{T}_0(\mathbb{F})}\mathbb{E}\big[Z(\eta)\big]$.

\

\noindent \textbf{Theorem 3.2 in Le\~ao, Ohashi and Russo~\cite{LEAO_OHASHI2017.3}.} If $\mathcal{Z} = \big((Z^k)_{k\ge 1}, \mathscr{D}\big)$ is an imbedded discrete structure for the reward process $Z$, then $T^k_{\tau^{k}}\wedge T$ is an $\epsilon$-optimal stopping time in the Brownian filtration, i.e., for a given $\epsilon>0$,

$$\sup_{\eta\in \mathcal{T}_0(\mathbb{F})}\mathbb{E}\big[Z(\eta)\big] -\epsilon< \mathbb{E}\big[Z(T^k_{\tau^k}\wedge T)\big]$$
for every $k$ sufficiently large. Moreover,

\begin{equation}\label{abcov}
\Big|\sup_{\tau\in \mathcal{T}_0(\mathbb{F})}\mathbb{E}\big[Z(\tau)\big] -  V^k_0\Big|\le \|Z^k(\cdot\wedge T^k_{e(k,T)}) - Z\|_{\mathbf{B}^1}\rightarrow 0
\end{equation}
as $k\rightarrow+\infty$.

\

Let us now present a Longstaff-Schwartz algorithm for the value functions

$$V^k_0; k\ge 1.$$

\subsection{Longstaff-Schwartz algorithm}
Each random element $\mathcal{A}^k_n$ induces an image probability measure $\rho^k_n:=\mathbb{P}^k_n$ on $\mathbb{S}^n_k; n\ge 1$ where $\rho_0$ is just the Dirac concentrated on $\textbf{0}$. For each $m=0, \ldots e(k,T)-1$, the family $\{\mathcal{A}^k_{j}; j=m, \ldots, e(k,T)\}$ induces an image probability measure $\rho^k_{m,e(k,T)}$ on the $e(k,T) - m+1$-fold cartesian product space

$$(\underbrace{\mathbb{S}^{e(k,T)}_k\times \dots\times \mathbb{S}^{e(k,T)}_k}_{e(k,T)-m+1}).$$

Throughout this article, we assume that $\mathbb{Z}^k_n:\mathbb{S}^n_k\rightarrow \mathbb{R} \in L^2(\mathbb{S}^n_k,\rho^k_n)$ for every $n=0, \ldots, e(k,T)$. Let us now select a subset $\{\widehat{\mathbf{U}}^k_j;j=0, \ldots e(k,T)-1 \}$  of functions such that $\widehat{\mathbf{U}}^k_j\in L^2(\mathbb{S}^{j}_k,\rho^k_j)$ for each $j=0, \ldots, e(k,T)-1$. For each choice of functions, we set inductively

\begin{equation}\label{DPST1}
\left\{\begin{array}{l}
 \widehat{\tau}^{k}_{e(k,T)}:= e(k,T) \\
\widehat{\tau}^{k}_{j}:= j 1\!\!1_{\widehat{G}^k_j} + \tau^{k}_{j+1}1\!\!1_{(\widehat{G}^{k}_j)^c}; 0\le j \le e(k,T)-1
\end{array}\right.
\end{equation}
where $\widehat{G}^k_j: = \{\mathbb{Z}^k_j (\mathcal{A}^k_j)\ge \widehat{\mathbf{U}}^{k}_j(\mathcal{A}^k_j)\}; 0\le j\le e(k,T)-1$ and $\widehat{\tau}^{k}=\widehat{\tau}^{k}_0$. Here, $\widehat{\mathbf{U}}^k_j(\cdot)$ should be interpreted as a suitable approximation of $\mathbb{E}\big[\mathbb{Z}^k_{\widehat{\tau}^{k}_{j+1}}(\mathcal{A}^k_{\widehat{\tau}^{k}_{j+1}})|\mathcal{A}^k_j=\cdot\big]$ for each $j=0, \ldots, e(k,T)-1$.

The set $\{\widehat{\tau}^{k}_j; 0\le j \le e(k,T)\}$ induces a set of conditional expectations $$\mathbb{E}\Big[Y^k(\widehat{\tau}^{k}_{j+1})|\mathcal{A}^k_{j}\Big]= \mathbb{E}\Big[\mathbb{Z}^k_{\widehat{\tau}^{k}_{j+1}}(\mathcal{A}^k_{\widehat{\tau}^{k}_{j+1}})\big|\mathcal{A}^k_j\Big]; j=0, \ldots, e(k,T)-1$$
so that one can postulate

$$\widehat{V}^k_0:=\max\big\{\mathbb{Z}^k_0(\textbf{0}); \widehat{\mathbf{U}}^{k}_0(\textbf{0})\big\}$$
as a possible approximation for (\ref{valuef}). Inspired by Zanger \cite{zanger}, the Monte Carlo algorithm is given by the following lines.

\

\noindent \textbf{Longstaff-Schwartz Algorithm}. Let us fix $k\ge 1$. Step (0): Given any positive integer $N$, select $\mathcal{H}^k_{N,0}\subset \mathbb{R}$. For each $j=1, \ldots, e(k,T)-1$, select $\mathcal{H}^k_{N,j}\subset L^2(\mathbb{S}^j_k, \rho^k_j)$. The sets $\mathcal{H}^k_{N,j}; 0\le j\le e(k,T)$ possibly depends on $N$ and the choice is dictated by some a priori information that one has about the continuation values (\ref{cvalues}). In learning theory literature, they are usually called \textit{approximation architectures}. From $\Big(\mathcal{A}^k_\ell; 0\le \ell \le e(k,T)\Big)$, generate $N$ independent samples $\mathcal{A}^k_{0,i}, \mathcal{A}^k_{1,i}, \ldots, \mathcal{A}^k_{e(k,T),i}; i=1, \ldots, N$. For each $\ell=0, \ldots e(k,T)$, let us denote

$$\mathbf{A}^k_{\ell N}:=\big(\mathcal{A}^k_{\ell,1},\mathcal{A}^k_{\ell,2}, \ldots, \mathcal{A}^k_{\ell,N}; \ldots; \mathcal{A}^k_{e(k,T), 1},\mathcal{A}^k_{e(k,T), 2}, \ldots, \mathcal{A}^k_{e(k,T),N} \big)$$
with $(e(k,T)-\ell+1)N$-factors.

Step (1): For $j=e(k,T)-1$, we set $\widehat{\tau}^k_{j+1} := \widehat{\tau}^k_{j+1}(\mathbf{A}^k_{(j+1)N}) := e(k,T)$ and generate $\{  (\mathcal{A}^k_{j,i}, (\mathbb{Z}^k_{\widehat{\tau}^k_{j+1}})_i); 1\le i\le N\}$, where we define $(\mathbb{Z}^k_{\widehat{\tau}^k_{j+1}})_i:=\mathbb{Z}^k_{e(k,T)}(\mathcal{A}^k_{e(k,T),i}); 1\le i\le N.$ We then select

\begin{equation}\label{mini}
\widehat{\mathbf{U}}^k_{j} := \text{arg min}_{g\in \mathcal{H}^k_{N,j}}\frac{1}{N}\sum_{i=1}^N \Big((\mathbb{Z}^k_{\widehat{\tau}^k_{j+1}})_i - g(\mathcal{A}^k_{j,i})     \Big)^2.
\end{equation}
One should notice that $\widehat{\mathbf{U}}^k_{j}$ is a functional of $\mathbf{A}^k_{j N}$ so that we assume the existence of a minimizer \begin{equation}\label{minimizerMC}
\widehat{\mathbf{U}}^k_{j}:\mathbb{S}^{j}_k\times \big(\mathbb{S}^{j}_k\big)^N \times \ldots \times \big(\mathbb{S}^{e(k,T)}_k\big)^N\rightarrow \mathbb{R}
\end{equation}
of (\ref{mini}) which possibly can depend on $N$. With $\widehat{\mathbf{U}}^k_j$ at hand, we compute $\widehat{\tau}^k_{ji} = \Big( \widehat{\tau}^k_j \big( \mathbf{A}^k_{jN} \big) \Big)_i$, the value that $\widehat{\tau}^k_{j} =\widehat{\tau}^k_{j}\big( \mathbf{A}^k_{jN} \big) $ assumes based on the $i$-th sample according to (\ref{DPST}), i.e., we define

\begin{equation}\label{DPST2}
\widehat{\tau}^{k}_{ji}:= j 1\!\!1_{\big\{\mathbb{Z}^k_j(\mathcal{A}^k_{j,i})\ge \widehat{\mathbf{U}}^k_j(\mathcal{A}^k_{j,i},\mathbf{A}^k_{jN})\big\}} + \widehat{\tau}^k_{(j+1)i} 1\!\!1_{\big\{\mathbb{Z}^k_j(\mathcal{A}^k_{j,i})< \widehat{\mathbf{U}}^k_j(\mathcal{A}^k_{j,i},\mathbf{A}^k_{jN})\big\}}
\end{equation}
for $1\le i\le N$. In this case, we set

\begin{equation}\label{DPST3}
\Big(\mathbb{Z}^k_{\widehat{\tau}^k_j}\Big)_i:=\left\{
\begin{array}{rl}
\mathbb{Z}^k_j\big(\mathcal{A}^k_{j,i}\big); & \hbox{if} \ \widehat{\tau}^k_{ji}=j \\
\mathbb{Z}^k_{\widehat{\tau}^k_{(j+1)i}}\big(\mathcal{A}^k_{\widehat{\tau}^k_{(j+1)i},i}\big);& \hbox{if} \ \widehat{\tau}^k_{ji} = \widehat{\tau}^k_{(j+1)i}\\
\end{array}
\right.
\end{equation}
where $\widehat{\tau}^k_{(j+1)i}=e(k,T)$ for $1\le i\le N$.

Step (2): Based on (\ref{mini}), (\ref{DPST2}) and (\ref{DPST3}), we then repeat this procedure inductively $j=e(k,T)-2, \ldots, 1, 0$ until step $j=0$ to get

$$\Big( \widehat{\tau}^k_{ji}, \widehat{\mathbf{U}}^k_{j}, \big(\mathbb{Z}^k_{\widehat{\tau}^k_{j}}\big)_i   \Big); 0\le j\le e(k,T), 1\le i\le N.$$

Step (3): For $j=0$, we set

$$\widehat{V}_0(\mathbf{A}^k_{0N}) := \text{max}\Big\{\mathbb{Z}^k_0(\textbf{0}), \widehat{\mathbf{U}}^k_0(\mathbf{A}^k_{0N})\Big\}.$$


\section{Error estimates for the Monte Carlo method}\label{errorESTSECTION}
In this section, we present the error estimates for the Monte Carlo method described in previous section. Clearly, the most important object in the Monte Carlo scheme is the computation of the underlying conditional expectations. For this purpose, we make use of some machinery from statistical learning theory having the regression function as the fundamental object. Throughout this section, we fix a list a functions $\mathbb{Z}^k_n:\mathbb{S}^n_k\rightarrow\mathbb{R}; n=0,\ldots, e(k,T)$ realizing (\ref{listpathwise}) for a given imbedded discrete structure $\mathcal{Z} = \big((Z^k)_{k\ge 1},\mathscr{D}\big)$ associated with a given reward process $Z$. For concrete examples of pathwise representations of structures $(\mathbb{Z}^k_n)_{n=0}^{e(k,T)}$, we refer the reader to Section 5 of Le\~ao, Ohashi and Russo \cite{LEAO_OHASHI2017.3} and (\ref{Zpathwise}).

\subsection{A bit of learning theory}\label{learningsec}
In the sequel, let $\gamma$ be a Borel probability measure on a product space $\mathcal{X}\times \mathbb{R}$ where we assume that $\mathcal{X}$ is a Polish space equipped with the Borel sigma algebra $\mathcal{B}(\mathcal{X})$. Let $\gamma_{\mathcal{X}}(A):=\gamma(A\times \mathbb{R}); A\in \mathcal{B}(\mathcal{X})$ be the marginal probability measure onto $\mathcal{X}$. Since $\mathcal{X}\times \mathbb{R}$ is Polish, one can disintegrate $\gamma$ w.r.t. $\gamma_{\mathcal{X}}$ so that there exists a unique (up to $\gamma_\mathcal{X}$-null sets) measure-valued function $x\mapsto \gamma_x(dr)$
which realizes

$$\int_{\mathcal{X}\times \mathbb{R}}f(x,r)\gamma(dx,dr) = \int_{\mathcal{X}}\int_{\mathbb{R}}f(x,r)\gamma_x(dr)\gamma_{\mathcal{X}}(dx)$$
for every $f\in L^1(\mathcal{X}\times\mathbb{R}; \gamma)$. The function

$$f_\gamma(x):=\int_{\mathbb{R}}r\gamma_x(dr); x\in \mathcal{X}$$
is the regression function of $\gamma$. The most important object to minimize is the so-called risk functional

$$\mathcal{T}(f):=\int_{\mathcal{X}\times\mathbb{R}}|r-f(x)|^2\gamma(dx,dr)$$
where $f:\mathcal{X}\rightarrow\mathbb{R} \in L^2(\mathcal{X}; \gamma_\mathcal{X})$. Since $\int_{\mathbb{R}}(f_\gamma(x) - r)\gamma_x(dr) = 0; x\in \mathcal{X}$, then one can easily check that

\begin{equation}\label{frisk}
\mathcal{T}(f) = \mathcal{T}(f_\gamma) + \|f-f_\gamma\|^2_{L^2(\mathcal{X};\gamma_{\mathcal{X}})}
\end{equation}
for every $f\in  L^2(\mathcal{X}; \gamma_{\mathcal{X}})$. Therefore,

$$f_\gamma = \text{argmin}_{\substack{f\in L^2(\mathcal{X}; \gamma_\mathcal{X})}}\mathcal{T}(f).$$
One central problem in statistical learning theory is to approximate $f_\gamma$ by means of some approximation architecture $\mathcal{H}\subset L^2(\mathcal{X};\gamma_\mathcal{X})$. Identity (\ref{frisk}) yields

$$\text{arg min}_{g\in \mathcal{H}}\|g-f_\gamma\|^2_{L^2(\gamma_\mathcal{X})} = \text{arg min}_{g\in\mathcal{H}}\int_{\mathcal{X}\times\mathbb{R}}\big(g(x) - r \big)^2\gamma(dx,dr).$$

In this article, we apply learning theory in the following setup: At first, for a given set of continuation values $\mathbf{U}^k_\ell:\mathbb{S}^\ell_k\rightarrow \mathbb{R}; \ell=0, \ldots, e(k,T)-1$,  we select a list of approximating functions $\widehat{\mathbf{U}}^k_\ell:\mathbb{S}^\ell_k\rightarrow \mathbb{R}\in L^2(\rho^k_\ell); \ell=0, \ldots, e(k,T)-1$. With such approximating functions, for each $j=0, \ldots, e(k,T)-1$, we define a mapping $\widehat{\varphi}^k_j:\otimes_{\ell=j}^{e(k,T)} \mathbb{S}^\ell_k \rightarrow \mathbb{S}^j_k \times \mathbb{R}$ as follows

\begin{equation}\label{phifunc}
\big(\textbf{b}^k_j,\textbf{b}^k_{j+1}, \ldots, \textbf{b}^k_{e(k,T)}\big)\mapsto \Big(\textbf{b}^k_j,\mathbf{Z}^k_{\widehat{\tau}^k_{j+1}}\big(\textbf{b}^k_{j+1}, \ldots, \textbf{b}^k_{e(k,T)} \big)\Big)
\end{equation}
where $\mathbf{Z}^k_{\widehat{\tau}^k_{j+1}}: \otimes_{\ell=j+1}^{e(k,T)} \mathbb{S}^\ell_k\rightarrow \mathbb{R}$ is defined inductively by means of the following procedure: We start from $j=e(k,T)-1$, where we set $\mathbf{Z}^k_{\widehat{\tau}^k_{e(k,T)}}(\textbf{b}^k_{e(k,T)}) := \mathbb{Z}^k_{e(k,T)}(\textbf{b}^k_{e(k,T)}), \textbf{b}^k_{e(k,T)}\in \mathbb{S}^{e(k,T)}_k$ and

\begin{equation}\label{fundmap1}
\mathbf{Z}^k_{\widehat{\tau}^k_{j+1}}(\textbf{b}^k_{j+1}, \ldots, \textbf{b}^k_{e(k,T)}):=\left\{
\begin{array}{rl}
\mathbb{Z}^k_{j+1}(\textbf{b}^k_{j+1}); & \hbox{if} \ \mathbb{Z}^k_{j+1}(\textbf{b}^k_{j+1})\ge \widehat{\mathbf{U}}^k_{j+1}(\mathbf{b}^k_{j+1}) \\
\mathbf{Z}^k_{\widehat{\tau}^k_{j+2}}(\textbf{b}^k_{j+2}, \ldots, \textbf{b}^k_{e(k,T)});& \hbox{if} \ \mathbb{Z}^k_{j+1}(\textbf{b}^k_{j+1}) < \widehat{\mathbf{U}}^k_{j+1}(\mathbf{b}^k_{j+1}) \\
\end{array}
\right.
\end{equation}
for $j=e(k,T)-2, \ldots, 0$. Similarly, for $j=e(k,T)-1$, we set $\mathbf{Z}^k_{\tau^k_{e(k,T)}}(\textbf{b}^k_{e(k,T)}) := \mathbb{Z}^k_{e(k,T)}(\textbf{b}^k_{e(k,T)}), \textbf{b}^k_{e(k,T)}\in \mathbb{S}^{e(k,T)}_k$ and

\begin{equation}\label{fundmap2}
\mathbf{Z}^k_{\tau^k_{j+1}}(\textbf{b}^k_{j+1}, \ldots, \textbf{b}^k_{e(k,T)}):=\left\{
\begin{array}{rl}
\mathbb{Z}^k_{j+1}(\textbf{b}^k_{j+1}); & \hbox{if} \ \mathbb{Z}^k_{j+1}(\textbf{b}^k_{j+1})\ge \mathbf{U}^k_{j+1}(\mathbf{b}^k_{j+1}) \\
\mathbf{Z}^k_{\tau^k_{j+2}}(\textbf{b}^k_{j+2}, \ldots, \textbf{b}^k_{e(k,T)});& \hbox{if} \ \mathbb{Z}^k_{j+1}(\textbf{b}^k_{j+1}) < \mathbf{U}^k_{j+1}(\mathbf{b}^k_{j+1}) \\
\end{array}
\right.
\end{equation}
for $j=e(k,T)-2, \ldots, 0$.
\begin{remark}\label{useful1}
From (\ref{fundmap2}), we observe that

\begin{equation}\label{fundmap3}
\mathbf{Z}^k_{\tau^k_{j+1}}(\mathcal{A}^k_{(j+1)}, \ldots, \mathcal{A}^k_{e(k,T)}) = \mathbb{Z}^k_{\tau^k_{j+1}}(\mathcal{A}^k_{\tau^k_{j+1}})~a.s.
\end{equation}
so if we set $\mathbf{U}^k_{e(k,T)}(\mathcal{A}^k_{e(k,T)}):=\mathbb{Z}^k_{e(k,T)}(\mathcal{A}^k_{e(k,T)})$, then (\ref{fundmap3}) yields

\begin{equation}\label{cvalues1}
\mathbf{U}^k_j(\mathcal{A}^k_j) = \mathbb{E}\big[\mathbb{Z}^k_{j+1}(\mathcal{A}^k_{j+1})\vee \mathbf{U}^k_{j+1}(\mathcal{A}^k_{j+1})|\mathcal{A}^k_j\big], j=e(k,T)-1, \ldots, 0.
\end{equation}
By observing that

\begin{equation}\label{splittingsigmaal}
\mathcal{F}^k_{T^k_n} = \sigma(\mathcal{A}^k_n) = \mathcal{F}^k_{T^k_{n-1}}\vee \sigma(\Delta T^k_n, \eta^k_n) = \sigma(\mathcal{A}^k_{n-1}, \Delta T^k_n, \eta^k_n); n,k\ge 1,
\end{equation}
one can easily check that
\begin{equation}\label{cvalues2}
\mathbf{U}^k_j(\mathbf{b}^k_j) = \int_{\mathbb{S}_k}\mathbb{Z}^k_{j+1}(\mathbf{b}^k_{j},s^k_{j+1},\tilde{i}^k_{j+1} )\vee \mathbf{U}^k_{j+1}(\mathbf{b}^k_j,s^k_{j+1},\tilde{i}^k_{j+1})\nu^k_{j+1}(ds^k_{j+1}d\tilde{i}^k_{j+1}|\mathbf{b}^k_j)
\end{equation}
for $\mathbf{b}^k_j\in \mathbb{S}^j_k$ and $j=e(k,T)-1, \ldots, 0$, where $\nu^k_{j+1}$ is the regular conditional probability given by (\ref{disinformula}) and Theorem \ref{disintegrationTH}. With a slight abuse of notation, we write $\mathbb{Z}^k_{j+1}(\mathbf{b}^k_{j},s^k_{j+1},\tilde{i}^k_{j+1} )$ and
$\mathbf{U}^k_{j+1}(\mathbf{b}^k_j,s^k_{j+1},\tilde{i}^k_{j+1})$ as versions which realize

$$\mathbb{Z}^k_{j+1}(\mathcal{A}^k_{j},\Delta T^k_{j+1}, \eta^k_{j+1}) = \mathbb{Z}^k_{j+1}(\mathcal{A}^k_{j+1}), \quad \mathbf{U}^k_{j+1}(\mathcal{A}^k_{j},\Delta T^k_{j+1},\eta^k_{j+1}) = \mathbf{U}^k_{j+1}(\mathcal{A}^k_{j+1}),$$
respectively, for each $j=e(k,T)-1, \ldots, 0$.
\end{remark}

From a theoretical perspective, both $\mathbb{V}^k$ and $\mathbf{U}^k$ they are equivalent since they provide the solution of the optimal stopping problem. However, from the numerical point of view, working with continuation values is better than value functions due to the appearance of expectation which allows us to work with smoother functions. This will be crucial to obtain concrete approximation architectures for the continuation values.

\begin{remark}\label{useful2}
The dynamic programming principle (\ref{DPA1}), (\ref{DPST}) and (\ref{fundmap1}) allow us to apply statistical learning techniques to general non-Markovian optimal stopping problems in full generality. In this case, $\mathcal{X}$ will be $\mathbb{S}^j_k$ and $\rho^k_{j,e(k,T)}\circ (\widehat{\varphi}^k_j)^{-1}$ will play the role of $\gamma$ for each $j=0, \ldots, e(k,T)-1$.
\end{remark}

One crucial aspect in the obtention of concrete error estimates for our Monte Carlo scheme is the so-called Vapnik-Chervonenkis dimension (henceforth abbreviated by VC dimension) of a set. For readers who are not familiar with this concept, we recall the definition. For a given $\epsilon>0$ and a bounded subset $A\subset \mathbb{R}^N$, let

$$m(A,\epsilon):=\{B\subset \mathbb{R}^N; B~\text{is finite and}~\forall a\in A, \exists b\in B; 1/N \|a-b\|_1 <\epsilon\}$$
where we set $\|a-b\|_1 = \sum_{i=1}^N|a_i-b_i|$. Based on this quantity. we define $\mathcal{N}(\epsilon,A)$ as the cardinality of the smallest subset belonging to $m(A,\epsilon)$. Let $\mathcal{G}$ be a family of uniformly bounded real-valued functions defined on some subset $\Sigma\subset \mathbb{R}^m$. For each $v = (v_1, \ldots, v_N)\in \Sigma^N$, we set $\mathcal{G}(v) = \Big\{ \big(g(v_1), \ldots, g(v_N)\big)\in \mathbb{R}^N; g\in \mathcal{G}\big)  \Big\}$ so it makes sense to consider $\mathcal{N}(\epsilon,\mathcal{G}(v)); v\in \Sigma^N$. Now, for a given list $\{a_1, \ldots, a_n\}\subset \Sigma; n\ge 1$, we say that $\mathcal{G}$ \textit{shatter} $\{a_1, \ldots, a_n\}\subset \Sigma$ if there exists $r=(r_1, \ldots, r_n)\in \mathbb{R}^n$ such that for every $b=(b_1, \ldots, b_n)\in \{0,1\}^n$, there exists $g\in \mathcal{G}$ such that for each $i=1, \ldots, n$ $g(a_i) \ge r_i$ if $b_i=1$ and $g(a_i) < r_i$ if $b_i=0$. We are finally able to define

$$vc(\mathcal{G}) = \sup\Big\{\text{card}~\{a_1, \ldots, a_n\};  \{a_1, \ldots, a_n\}~\text{is a subset of}~\Sigma~\text{shattered by}~\mathcal{G}  \Big\}.$$

One important property of the VC dimension is the fact that $vc(E)\le 1 + \text{dim}~E$ for any finite dimensional vector space of real-valued measurable functions (see e.g Devore and Lorentz \cite{devore}). This is crucial to obtain concrete approximation architectures.

\subsection{Error estimates}
In this section, we present the error estimates of the Longstaff-Schwartz-type algorithm described in previous section. Throughout this section, we fix approximation architectures $\mathcal{H}^k_{N,m}; m=0,\ldots, e(k,T)-1$ which must be chosen according to some a priori information about the smoothness on the continuation values, where $\mathcal{H}^k_{N,0}\subset \mathbb{R}$. For a moment, let us fix them. In what follows, we make use of the standard notation: If $F: W \rightarrow \mathbb{R}$ is a real-valued function defined on a metric space $W$, then we denote $\|F\|_{\infty,W} = \sup_{w\in W}|F(w)|$. When there is no risk o confusion, we just write $\|F\|_\infty$.

We set $\mathbf{A}^k_j:=(\mathcal{A}^k_j, \ldots, \mathcal{A}^k_{e(k,T)})$ and we assume that $\mathbf{A}^k_{jN}$ and $\mathbf{A}^k_{j}$ are independent for each $j=0, \ldots, e(k,T)$ and $N\ge 1$. In the sequel, we employ corresponding lower case notation $\mathbf{a}^k_j$ to denote the deterministic analogue of the random element $\mathbf{A}^k_j$ in such way that $\mathbf{a}^k_j$ is an arbitrary fixed element of $\otimes_{\ell=j}^{e(k,T)}\mathbb{S}^\ell_k$ ($e(k,T)-j+1$ factors). In a similar way, the elements of $\big(\otimes_{\ell=j}^{e(k,T)}\mathbb{S}^\ell_k\big)^N$ ($N$ factors) will be denoted by $\textbf{a}^k_{jN}$ for $j=0, \ldots, e(k,T)$. From $\mathcal{A}^k_{j,i}; i=1, \ldots, N, j=0, \ldots, e(k,T)$, we set $\mathbf{A}^k_{(j,i)}:=(\mathcal{A}^k_{j,i}, \mathcal{A}^k_{(j+1),i}, \ldots, \mathcal{A}^k_{e(k,T),i})$. Of  course, $\mathbf{A}^k_{jN}$ can be naturally identified with $\big(\mathbf{A}^k_{(j,1)}, \ldots, \mathbf{A}^k_{(j,N)}\big)$ for $j=0, \ldots, e(k,T)$ and $N\ge 1$. Moreover, for each $i=1, \ldots, N$, $\mathbf{a}^k_{(j,i)}$ denotes a generic element of $\otimes_{\ell=j}^{e(k,T)}\mathbb{S}^\ell_k$ and, of course, $\mathbf{a}^k_{jN}$ can be identified with $\big(\mathbf{a}^k_{(j,1)}, \ldots, \mathbf{a}^k_{(j,N)}\big)$.

\

In order to prove convergence of the Longstaff-Schwartz algorithm with explicit error estimates, we need to impose the following conditions for a given $k\ge 1$:

\

\noindent \textbf{(H1)} In the context of the Longstaff-Schwartz algorithm, let $N\ge 2$ and we suppose that $\mathcal{H}^k_{N,j}\subset L^2(\mathbb{S}^j_k, \rho^k_j)$ and there exists $\nu_k$ such that $vc\big(\mathcal{H}^k_{N,j}\big)\le \nu_k < +\infty$ for every $j=1, \ldots, e(k,T)-1$ and for every $N\ge 2$.

\

\noindent \textbf{(H2)} There exists $B_k$ such that $\sup\{\|f\|_\infty; f\in \mathcal{H}^k_{N,j}\}\le B_k < +\infty$ for every $j=0, \ldots, e(k,T)-1$ and $N\ge 2$.

\

Assumptions \textbf{(H1-H2)} are well-known in the classical case of the Longstaff-Schwartz algorithm based on Markov chains. See e.g Egloff \cite{egloff}, Zanger \cite{zanger,zanger1} and other references therein.

\begin{remark}
It is important to point out that the approximation architectures in (\textbf{H1-H2}) need not be neither convex nor closed. This relaxation is not surprising and it is known to work in the classical Longstaff-Schwartz algorithm based on Markov chains as demonstrated by Zanger \cite{zanger}.
\end{remark}

\begin{remark}
It is important to stress that we are assuming the existence of a minimizer (\ref{minimizerMC})
and the existence of the projection map $\pi:\mathcal{H}^k_{N,j}\rightarrow L^2(\mathbb{S}^j_k,\rho^k_j)$

$$\pi_{\mathcal{H}^k_{N,j}}(f) = \arg \min_{g\in \mathcal{H}^k_{N,j}}\|g-f\|_{L^2(\mathbb{S}^j_k,\rho^k_j)}$$
for $1\le j\le e(k,T)-1$, $N\ge 2$ and $k\ge 1$. There are some conditions for the existence of both minimizers as discussed in Remark 5.4 by Zanger \cite{zanger}, for instance, compactness of $\mathcal{H}^k_{N,j}$. We observe that we may assume that $\mathbb{S}^j_k; 1\le j\le e(k,T)$ they are compact because all the hitting times $(T^k_n)_{n=1}^{e(k,T)}$ in the control problem have to be restricted to $[0,T]$ and $\mathbb{I}_k$ is obviously compact. In this case, Arzela-Ascoli theorem provides a readable characterization for compact architecture spaces.
\end{remark}


Let us define

\begin{equation}\label{Lk}
L_k = \max\big\{1, \|\mathbb{Z}^k_1\|_\infty, \ldots, \|\mathbb{Z}^k_{e(k,T)}\|_\infty\big\}
\end{equation}
for $k\ge 1$. In the sequel, we set $c_\ell(k) := 2(e(k,T)-\ell+1)\text{log}_2\big( \textbf{e}(e(k,T)-\ell+1)  \big), C_{B_kL_k} := 36(B_k+L_k)^2$ where $\textbf{e}$ is the Euler's number. By using Remark \ref{Markov_A}, the proof of the following lemma follows from the proof of Theorem 5.1 in pages 26-27 given by Zanger \cite{zanger} so we omit the details.

\begin{lemma}\label{fundamentallemma}
Assume that hypotheses \textbf{(H1-H2)} and $L_k < + \infty$ fold true for a given $k\ge 1$. We set $\mathcal{H}^k_{N,0} = [-L_k,L_k]$, Then, for $\alpha >0$ and $\ell=0, \ldots, e(k,T)-1$, we have

$$
\mathbb{P}\Bigg\{ \Big\|\widehat{\mathbf{U}}^k_\ell(\cdot; \mathbf{A}^k_{\ell N}) - \mathbb{E}\big[\mathbf{Z}^k_{\widehat{\tau}^k_{\ell+1}}(\mathbf{A}^k_{\ell+1})|\mathcal{A}^k_\ell=\cdot\big]\Big\|^2_{L^2(\rho^k_\ell)}\ge \alpha
$$
\begin{equation}\label{part}
+4\inf_{g\in\mathcal{H}^k_{N,\ell}}\Big\| g- \mathbb{E}\big[\mathbf{Z}^k_{\widehat{\tau}^k_{\ell+1}}(\mathbf{A}^k_{\ell+1})|\mathcal{A}^k_\ell=\cdot\big] \Big\|^2_{L^2(\rho^k_\ell)}\Bigg\}
\end{equation}
$$\le 6e^4(c_\ell(k)\nu_k+1)^4B^{2\nu_k}_kL_k^{2c_\ell(k)\nu_k}\Big(\frac{512C_{B_kL_k}(B_k+L_k)e}{\alpha}\Big)^{2\nu_k(1+c_\ell(k))}\times\exp\Big( \frac{-N\alpha}{443 C^2_{B_kL_k}} \Big)$$
as long as $N\ge \frac{36 C^2_{B_kL_k}}{\alpha}$. Here, $\widehat{\mathbf{U}}^k_r(\cdot; \mathbf{A}^k_{rN})$ is computed according to (\ref{mini}).
\end{lemma}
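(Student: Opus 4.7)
The plan is to view the process $(\mathcal{A}^k_n)_{n=0}^{e(k,T)}$ as a discrete-time Markov chain on the enlarged state space $\mathbb{S}^{e(k,T)}_k$ using Remark \ref{Markov_A}, and then to transfer Zanger's general machinery for empirical-risk-based approximation of continuation values in the Longstaff-Schwartz scheme (Theorem 5.1 of \cite{zanger}) to the present non-Markovian framework. Once this embedding is fixed, the ERM problem (\ref{mini}) at level $\ell$ has exactly the canonical form in statistical learning: the input is $\mathcal{A}^k_\ell$, the response is $\mathbf{Z}^k_{\widehat{\tau}^k_{\ell+1}}(\mathbf{A}^k_{\ell+1})$, and by Remark \ref{useful2} combined with identity (\ref{frisk}), the associated regression function is precisely $x\mapsto \mathbb{E}[\mathbf{Z}^k_{\widehat{\tau}^k_{\ell+1}}(\mathbf{A}^k_{\ell+1})\mid \mathcal{A}^k_\ell = x]$.

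First I would condition on the ``future'' samples $\mathbf{A}^k_{(\ell+1)N}$. The approximations $\widehat{\mathbf{U}}^k_{\ell+1},\ldots,\widehat{\mathbf{U}}^k_{e(k,T)-1}$, and hence the stopping rule $\widehat{\tau}^k_{\ell+1}$, are measurable functionals of $\mathbf{A}^k_{(\ell+1)N}$; since the $N$ sample paths are independent, conditionally on these future samples the minimization (\ref{mini}) reduces to a standard nonparametric regression problem based on $N$ i.i.d.\ pairs, with responses uniformly bounded by $L_k$ and hypothesis class $\mathcal{H}^k_{N,\ell}$ of VC-dimension at most $\nu_k$ and sup-norm bound $B_k$. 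The bias/variance identity (\ref{frisk}) then splits $\|\widehat{\mathbf{U}}^k_\ell - f_\gamma\|^2_{L^2(\rho^k_\ell)}$ (with $f_\gamma$ the regression target) into the best-in-class approximation error $\inf_{g\in\mathcal{H}^k_{N,\ell}}\|g-f_\gamma\|^2_{L^2(\rho^k_\ell)}$, which explains the additive $4\inf(\cdots)$ in the event (\ref{part}), plus an estimation error that must be controlled uniformly over $\mathcal{H}^k_{N,\ell}$. The uniform control is the Pollard/Haussler argument: covering-number estimates of the form $\mathcal{N}(\epsilon,\mathcal{H}^k_{N,\ell}(v))\lesssim (CB_k/\epsilon)^{\nu_k}$, symmetrization, and a Bernstein-type tail bound on squared losses (each bounded by $(B_k+L_k)^2$) produce the exponential factor $\exp(-N\alpha/(443 C^2_{B_kL_k}))$ valid precisely in the regime $N\ge 36C^2_{B_kL_k}/\alpha$. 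The polynomial prefactor $(512C_{B_kL_k}(B_k+L_k)e/\alpha)^{2\nu_k(1+c_\ell(k))}$ together with the $B_k^{2\nu_k}L_k^{2c_\ell(k)\nu_k}$ factor arises from applying the covering bound to the truncated loss class and from propagating the composition with the $e(k,T)-\ell+1$ downstream stopping decisions, which is exactly what the combinatorial factor $c_\ell(k)=2(e(k,T)-\ell+1)\log_2(\mathbf{e}(e(k,T)-\ell+1))$ encodes; the constant $6e^4(c_\ell(k)\nu_k+1)^4$ is standard bookkeeping absorbing the symmetrization and union-bound overhead. Removing the conditioning on $\mathbf{A}^k_{(\ell+1)N}$ by taking expectations yields the unconditional bound (\ref{part}).

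The main obstacle---and the reason Zanger's proof does not transplant trivially---is to verify that none of his estimates actually depend on the ambient dimension of the state space carrying the Markov chain: in \cite{zanger} the chain is $\mathbb{R}^n$-valued, whereas here it takes values in the much larger $\mathbb{S}^{e(k,T)}_k$. Inspection of Zanger's argument shows that every quantitative bound is driven only by the VC dimension $\nu_k$, by the uniform bounds $L_k$ and $B_k$, and by the independence of the $N$ sample paths; none of these depend on the ambient dimension. Consequently, the entire argument carries over verbatim under the Markov reformulation of Remark \ref{Markov_A}, which is why the detailed bookkeeping can be delegated to the proof of Theorem 5.1 in \cite{zanger}.
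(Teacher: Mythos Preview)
Your proposal is correct and follows precisely the approach the paper takes: the paper's own proof simply states that, via the Markov embedding of Remark~\ref{Markov_A}, the argument of Theorem~5.1 in Zanger~\cite{zanger} (pages 26--27) applies verbatim, and omits all further details. Your exposition actually supplies more of the underlying reasoning (conditioning on $\mathbf{A}^k_{(\ell+1)N}$, the bias/variance split via (\ref{frisk}), the Pollard/Haussler covering argument, and the crucial observation that Zanger's bounds depend only on $\nu_k$, $B_k$, $L_k$ and not on the ambient state-space dimension) than the paper does.
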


\begin{lemma}\label{firstresult}
Assume that hypotheses $\textbf{(H1-H2)}$ hold true. Then, for each $j=0,\ldots, e(k,T)-1$, we have

\begin{eqnarray}
\nonumber\mathbb{E}\|\widehat{\mathbf{U}}^k_j(\cdot; \mathbf{A}^k_{jN}) - \mathbf{U}^k_j\|_{L^2(\rho^k_j)}&\le & (e(k,T)-j)N^{-1/2}\Big(C(B_k+L_k)^2 (\sqrt{\nu_k c_j(k)} log^{1/2}(N) + log^{1/2}(C_{j,k}))\Big)\\
\label{lth1}& &\\
\nonumber&+& 4\sum_{\ell=j}^{e(k,T)-1} \mathbb{E} \Big(\inf_{f\in \mathcal{H}^k_{N,\ell}}\Big\|f - \mathbb{E}[ \mathbf{Z}^k_{\widehat{\tau}^k_{\ell+1}}(\mathbf{A}^k_{\ell+1})|\mathcal{A}^k_\ell]\Big\|_{L^2(\rho^k_\ell)}\Big).
\end{eqnarray}
Here, $C_{j,k} = C(c_j(k) \nu_k+1)^4 B_k^{2\nu_k} L^{2c_j(k)\nu_k}_k \big(C(B_k + L_k)\big)^{3\nu_k(1+c_j(k))}$ for $j=0, \ldots, e(k,T)-1$, where $c_j(k) = 2(e(k,T)-j+1)~log_2\big(\textbf{e} (e(k,T)-j+1)  \big)$, and $C$ is a numerical constant $0 < C <\infty$ which does not depend on $(\nu_k,k,L_k,B_k)$.   Moreover,

\begin{eqnarray}
\nonumber\mathbb{E}|\widehat{V}_0(\mathbf{A}^k_{0N}) - V^k_0|&\le & e(k,T)N^{-1/2}\Big(C(B_k+L_k)^2 (\sqrt{\nu_k c_0(k)} log^{1/2}(N) + log^{1/2}(C_{0,k}))\Big)\\
\label{lth2}& &\\
\nonumber&+&  4\sum_{\ell=1}^{e(k,T)-1} \mathbb{E} \Big(\inf_{f\in \mathcal{H}^k_{N,\ell}}\Big\|f - \mathbb{E}[\mathbf{Z}^k_{\widehat{\tau}^k_{\ell+1}}(\mathbf{A}^k_{\ell+1})|\mathcal{A}^k_\ell]\Big\|_{L^2(\rho^k_\ell)}\Big).
\end{eqnarray}
\end{lemma}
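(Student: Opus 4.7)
The plan is to combine the concentration estimate of Lemma \ref{fundamentallemma} with a backward induction propagating one-step regression errors through the dynamic programming recursion \eqref{fundmap1}--\eqref{fundmap2}. First I would turn the tail bound in Lemma \ref{fundamentallemma} into an expected $L^2$ estimate by integrating against $d\alpha$ (splitting at the threshold $36C^2_{B_kL_k}/N$ and using the standard Gaussian-type tail integral for the exponential factor) and then applying Jensen to pass from squared $L^2$ to $L^2$. This produces the one-step bound
\begin{equation*}
\mathbb{E}\|\widehat{\mathbf{U}}^k_\ell(\cdot;\mathbf{A}^k_{\ell N}) - \mathbb{E}[\mathbf{Z}^k_{\widehat{\tau}^k_{\ell+1}}|\mathcal{A}^k_\ell=\cdot]\|_{L^2(\rho^k_\ell)} \le \frac{C(B_k+L_k)^2\bigl(\sqrt{\nu_k c_\ell(k)}\log^{1/2}N + \log^{1/2}C_{\ell,k}\bigr)}{\sqrt{N}} + 2\,\mathbb{E}\inf_{g\in \mathcal{H}^k_{N,\ell}}\|g-\mathbb{E}[\mathbf{Z}^k_{\widehat{\tau}^k_{\ell+1}}|\mathcal{A}^k_\ell]\|_{L^2(\rho^k_\ell)}.
\end{equation*}

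Next I would use $\mathbf{U}^k_j = \mathbb{E}[\mathbf{Z}^k_{\tau^k_{j+1}}|\mathcal{A}^k_j]$ (Remark \ref{useful1}) and the triangle inequality to obtain
\begin{equation*}
\|\widehat{\mathbf{U}}^k_j - \mathbf{U}^k_j\|_{L^2(\rho^k_j)} \le \|\widehat{\mathbf{U}}^k_j - \mathbb{E}[\mathbf{Z}^k_{\widehat{\tau}^k_{j+1}}|\mathcal{A}^k_j]\|_{L^2(\rho^k_j)} + \|\mathbf{Z}^k_{\widehat{\tau}^k_{j+1}} - \mathbf{Z}^k_{\tau^k_{j+1}}\|_{L^2(\mathbb{P})},
\end{equation*}
where conditional Jensen was used on the second piece. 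The first piece is already controlled by the step above. For the second piece I would carry out a backward induction on $\ell = e(k,T)-1,\dots,j+1$ using the Cl\'ement--Lamberton--Protter identity: since $\mathbf{Z}^k_{\widehat{\tau}^k_{\ell+1}}$ and $\mathbf{Z}^k_{\tau^k_{\ell+1}}$ are obtained by choosing, at each step, the maximum of $\mathbb{Z}^k_{\ell+1}$ and the appropriate continuation value (approximate or exact), the contractivity $|\max(a,b)-\max(a,c)|\le|b-c|$ together with the observation that $|\mathbb{Z}^k_{\ell+1}-\mathbf{U}^k_{\ell+1}|\le|\widehat{\mathbf{U}}^k_{\ell+1}-\mathbf{U}^k_{\ell+1}|$ on the symmetric difference of the two decision events yields the telescoping bound
\begin{equation*}
\|\mathbf{Z}^k_{\widehat{\tau}^k_{j+1}} - \mathbf{Z}^k_{\tau^k_{j+1}}\|_{L^2(\mathbb{P})} \le \sum_{\ell=j+1}^{e(k,T)-1}\|\widehat{\mathbf{U}}^k_\ell - \mathbf{U}^k_\ell\|_{L^2(\rho^k_\ell)}.
\end{equation*}

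Setting $E_j := \mathbb{E}\|\widehat{\mathbf{U}}^k_j - \mathbf{U}^k_j\|_{L^2(\rho^k_j)}$ and combining the three displays gives a recursive inequality $E_j \le A_j + B_j + \sum_{\ell=j+1}^{e(k,T)-1} E_\ell$, where $A_j$ is the Monte Carlo term and $B_j$ the approximation-error term from Step 1. Unrolling this recursion from $\ell = e(k,T)-1$ down to $j$, using that $C_{\ell,k}$ and $c_\ell(k)$ are monotone in $\ell$ so that each $A_\ell$ is dominated by $A_j$, yields exactly \eqref{lth1} with the factor $(e(k,T)-j)$ multiplying the Monte Carlo piece. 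Finally, for \eqref{lth2} I would use that $\rho^k_0 = \delta_{\mathbf{0}}$ together with the contraction $|\max(\mathbb{Z}^k_0(\mathbf{0}),x) - \max(\mathbb{Z}^k_0(\mathbf{0}),y)|\le|x-y|$, giving $|\widehat{V}_0(\mathbf{A}^k_{0N}) - V^k_0| \le \|\widehat{\mathbf{U}}^k_0(\cdot;\mathbf{A}^k_{0N}) - \mathbf{U}^k_0\|_{L^2(\rho^k_0)}$, and then invoke \eqref{lth1} at $j=0$.

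The main obstacle will be the telescoping identity in the backward induction: one must carefully separate the event $\{\mathbb{Z}^k_{\ell+1}\ge \widehat{\mathbf{U}}^k_{\ell+1}\}\,\triangle\,\{\mathbb{Z}^k_{\ell+1}\ge \mathbf{U}^k_{\ell+1}\}$ (where the two stopping rules disagree and the sandwich argument controls the gap by $|\widehat{\mathbf{U}}^k_{\ell+1}-\mathbf{U}^k_{\ell+1}|$) from its complement (where the two rules agree and the induction hypothesis is applied directly), and one must also take care that $\widehat{\tau}^k_{\ell+1}$ is measurable with respect to $\mathbf{A}^k_{(\ell+1)N}$, whereas $\mathbf{Z}^k_{\tau^k_{\ell+1}}$ is a genuine deterministic functional of the sample $\mathbf{A}^k_{\ell+1}$, so the expectations of Lemma \ref{fundamentallemma} and of Step 2 refer to different probability spaces; the independence assumption between $\mathbf{A}^k_{jN}$ and $\mathbf{A}^k_j$ imposed at the start of Subsection \ref{learningsec} is precisely what allows these bounds to be combined.
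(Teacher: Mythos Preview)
Your approach has a genuine gap in the step where you ``unroll'' the recursion. Your telescoping bound
\[
\|\mathbf{Z}^k_{\widehat{\tau}^k_{j+1}} - \mathbf{Z}^k_{\tau^k_{j+1}}\|_{L^2(\mathbb{P})} \le \sum_{\ell=j+1}^{e(k,T)-1}\|\widehat{\mathbf{U}}^k_\ell - \mathbf{U}^k_\ell\|_{L^2(\rho^k_\ell)}
\]
places the \emph{full} errors $E_\ell = \mathbb{E}\|\widehat{\mathbf{U}}^k_\ell - \mathbf{U}^k_\ell\|_{L^2(\rho^k_\ell)}$ on the right, so the resulting inequality is $E_j \le A_j + B_j + \sum_{\ell>j} E_\ell$. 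Unrolling this does \emph{not} give a linear factor: if $A_\ell+B_\ell\le C$ for all $\ell$, then $E_{n-1}\le C$, $E_{n-2}\le 2C$, $E_{n-3}\le C + E_{n-2}+E_{n-1}= 4C$, and in general $E_{n-m}\le 2^{m-1}C$. You would end up with a factor $2^{e(k,T)-j}$ in front of both the Monte Carlo and the approximation pieces, which is not the claimed bound \eqref{lth1}. (It is, incidentally, exactly the blow-up that reappears in Proposition~\ref{maincor}, but there it arises for a different reason and with a different constant base.)

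The paper avoids this by proving the error-propagation estimate \eqref{t4.12} \emph{directly}, not recursively: the right-hand side of \eqref{t4.12} involves the one-step regression errors $\|\widehat{\mathbf{U}}^k_m - \mathbb{E}[\mathbf{Z}^k_{\widehat{\tau}^k_{m+1}}(\mathbf{A}^k_{m+1})\mid\mathcal{A}^k_m]\|_{L^2(\rho^k_m)}$, each of which is already controlled by Lemma~\ref{fundamentallemma} without any further reference to $E_\ell$. Summing those one-step bounds over $m=j,\dots,e(k,T)-1$ gives the linear prefactor $(e(k,T)-j)$ immediately. To salvage your argument you would need to replace the right-hand side of your telescoping bound by one-step regression errors rather than full errors; this is possible (it is essentially Lemma~2.2 of Zanger \cite{zanger1}, invoked in the paper via Remark~\ref{Markov_A}), but it is a different and somewhat more delicate inequality than the Cl\'ement--Lamberton--Protter identity you outlined.
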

\begin{proof}
The argument is similar to the proof of Theorem 5.1 in Zanger \cite{zanger}. The key point is the error propagation estimate

$$
\Big\|\widehat{\mathbf{U}}^k_j(\cdot; \mathbf{a}^k_{jN}) - \mathbb{E}[\textbf{Z}^k_{\tau^k_{j+1}}(\mathbf{A}^k_{j+1})|\mathcal{A}^k_j=\cdot]\Big\|_{L^2(\rho^k_j)}
$$
\begin{equation}\label{t4.12}
\le 2\sum_{m=j}^{e(k,T)-1}\Big\|\widehat{\mathbf{U}}^k_m(\cdot; \mathbf{a}^k_{mN}) - \mathbb{E}[\textbf{Z}^k_{\widehat{\tau}^k_{j+1}(\textbf{a}^k_{(m+1)N})}(\mathbf{A}^k_{m+1})|\mathcal{A}^k_m=\cdot]\Big\|_{L^2(\rho^k_j)}
\end{equation}
which holds for each $\textbf{a}^k_{jN}\in \big(\otimes_{\ell=j}^{e(k,T)}\mathbb{S}^\ell_k\big)^N$ where $j=0, \ldots, e(k,T)-1$. For the proof of (\ref{t4.12}), we shall use Remark \ref{Markov_A} and then the same arguments given in the proof of Lemma 2.2 in Zanger \cite{zanger1} hold in our context. Now, by invoking Lemma \ref{fundamentallemma} and (\ref{t4.12}), one may proceed in a similar way as in the proof of Theorem 5.1 in Zanger \cite{zanger}, so we prefer to omit the details.
\end{proof}

An almost immediate consequence of Lemma \ref{firstresult} yields the convergence of our Longstaff-Schwartz algorithm for dense approximation architectures.
\begin{theorem}\label{mainTHLS}
Let us fix $k\ge 1$. Let us assume that the hypotheses \textbf{(A1-A2-H1-H2)} hold true and we assume the architecture space $\mathcal{H}^k_{N,j}$ are dense subsets of $L^2(\rho^k_j)$ for each $j=1, \ldots, e(k,T)-1$ and a positive integer $N\ge 2$. Then, for each $j=0, \ldots, e(k,T)-1$, we have

\begin{equation}\label{a.sconv1}
\Big\| \widehat{\mathbf{U}}^k_j(\mathbf{A}^k_{jN}) - \mathbf{U}^k_j\Big\|_{L^2(\rho^k_j)}\rightarrow 0\quad\text{and}~|\widehat{V}^k_0(\mathbf{A}^k_{0N}) - V^k_0|\rightarrow 0
\end{equation}
almost surely as $N\rightarrow \infty$. More importantly, for every $k\ge 1$ sufficiently large

\begin{equation}\label{a.sconv2}
\lim_{N\rightarrow +\infty}|\widehat{V}^k_0(\mathbf{A}^k_{0N}) - S(0)|= 0~a.s.
\end{equation}
\end{theorem}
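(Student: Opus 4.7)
The plan is to prove the two convergences by combining Lemma \ref{fundamentallemma}, the sample-wise error-propagation identity \eqref{t4.12} appearing inside the proof of Lemma \ref{firstresult}, and Theorem 3.2 of \cite{LEAO_OHASHI2017.3}. I would fix $k\ge 1$ throughout the first three steps and upgrade expectation-level bounds to almost-sure convergence via a Borel--Cantelli argument made possible by the exponential tail in Lemma \ref{fundamentallemma}.

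First, since each $\mathcal{H}^k_{N,\ell}$ is dense in $L^2(\mathbb{S}^\ell_k,\rho^k_\ell)$, the approximation-error infimum that appears in \eqref{part} vanishes for every $\ell=0,\ldots,e(k,T)-1$. Lemma \ref{fundamentallemma} therefore reduces to the bound
$$\mathbb{P}\Big\{\big\|\widehat{\mathbf{U}}^k_\ell(\cdot;\mathbf{A}^k_{\ell N}) - \mathbb{E}[\mathbf{Z}^k_{\widehat\tau^k_{\ell+1}}(\mathbf{A}^k_{\ell+1})|\mathcal{A}^k_\ell=\cdot]\big\|^2_{L^2(\rho^k_\ell)}\ge\alpha\Big\}\le R_k(\alpha)\exp\Big(-\tfrac{N\alpha}{443\,C^2_{B_kL_k}}\Big),$$
where $R_k(\alpha)$ is a polynomial-in-$1/\alpha$ factor coming from the covering number estimate. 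For each fixed $\alpha>0$ the right-hand side is summable in $N$, so Borel--Cantelli produces $\|\widehat{\mathbf{U}}^k_\ell(\cdot;\mathbf{A}^k_{\ell N})-\mathbb{E}[\mathbf{Z}^k_{\widehat\tau^k_{\ell+1}}|\mathcal{A}^k_\ell=\cdot]\|_{L^2(\rho^k_\ell)}\to 0$ almost surely for each $\ell$. Invoking the sample-wise propagation inequality \eqref{t4.12} (with the norm on the right interpreted as $L^2(\rho^k_m)$, which is the only reading consistent with the argument of \cite{zanger1}) and using that the sum ranges over the \emph{finite} set $m=j,\ldots,e(k,T)-1$, I conclude
$$\big\|\widehat{\mathbf{U}}^k_j(\cdot;\mathbf{A}^k_{jN})-\mathbf{U}^k_j\big\|_{L^2(\rho^k_j)}\xrightarrow{N\to\infty}0\quad\text{a.s.},\qquad j=0,\ldots,e(k,T)-1,$$
which is the first assertion of \eqref{a.sconv1}. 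The value-function convergence follows from the $1$-Lipschitz inequality $|\max(a,b)-\max(a,c)|\le|b-c|$ together with the fact that $\mathcal{A}^k_0=\mathbf{0}$ is deterministic, so $\rho^k_0$ is a Dirac mass and $|\widehat V^k_0(\mathbf{A}^k_{0N})-V^k_0|\le|\widehat{\mathbf{U}}^k_0(\mathbf{0};\mathbf{A}^k_{0N})-\mathbf{U}^k_0(\mathbf{0})|=\|\widehat{\mathbf{U}}^k_0-\mathbf{U}^k_0\|_{L^2(\rho^k_0)}\to 0$ a.s., completing \eqref{a.sconv1}.

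For \eqref{a.sconv2} I would chain the previous display with \eqref{abcov}: given $\varepsilon>0$, Theorem 3.2 of \cite{LEAO_OHASHI2017.3} supplies $k_0=k_0(\varepsilon)$ such that $|V^k_0-S(0)|<\varepsilon/2$ for all $k\ge k_0$; the triangle inequality
$$|\widehat V^k_0(\mathbf{A}^k_{0N})-S(0)|\le|\widehat V^k_0(\mathbf{A}^k_{0N})-V^k_0|+|V^k_0-S(0)|$$
combined with the previous step then yields $\limsup_{N\to\infty}|\widehat V^k_0-S(0)|\le\varepsilon/2$ a.s.\ for every $k\ge k_0$; as $\varepsilon$ is arbitrary one recovers \eqref{a.sconv2} in the natural joint-limit sense.

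The principal technical obstacle is the first step. The regression target $\mathbb{E}[\mathbf{Z}^k_{\widehat\tau^k_{\ell+1}}(\mathbf{A}^k_{\ell+1})|\mathcal{A}^k_\ell]$ is itself \emph{random}, because $\widehat\tau^k_{\ell+1}$ depends on the future-stage sample $\mathbf{A}^k_{(\ell+1)N}$. Applying density of $\mathcal{H}^k_{N,\ell}$ therefore requires conditioning on that sample and verifying that the infimum in \eqref{part} is zero on a full-measure event; the exponential-tail bound of Lemma \ref{fundamentallemma} must hold uniformly in this conditioning to permit the Borel--Cantelli summation step. Once this is done, the remaining pieces (the pathwise propagation \eqref{t4.12}, the Lipschitz reduction from $\widehat{\mathbf{U}}^k_0$ to $\widehat V^k_0$, and the plug-in of \eqref{abcov}) are straightforward.
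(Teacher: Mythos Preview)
Your proposal is correct and follows essentially the same route as the paper: combine the exponential tail bound of Lemma~\ref{fundamentallemma} (with the infimum killed by density) with the error-propagation inequality~\eqref{t4.12} and Borel--Cantelli to obtain~\eqref{a.sconv1}, then chain with the convergence $V^k_0\to S(0)$ from Theorem~3.2 of~\cite{LEAO_OHASHI2017.3} via the triangle inequality to get~\eqref{a.sconv2}. Your explicit treatment of the $j=0$ step through the $1$-Lipschitz property of $\max$ and the Dirac nature of $\rho^k_0$, as well as your flagging of the sample-dependence of the regression target, add detail the paper omits but do not depart from its argument.
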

\begin{proof}
When $\mathcal{H}^k_{N,j}$ are dense subsets of $L^2(\rho^k_j)$ for each $j=1, \ldots, e(k,T)-1$, then we shall use the estimate (\ref{part}) jointly with the fundamental error propagation (\ref{t4.12}) in order to obtain the almost sure convergence (\ref{a.sconv1}) via a standard Borel-Cantelli argument. By Proposition 3.1 in \cite{LEAO_OHASHI2017.3}, we know that

\begin{equation}\label{eqf}
\sup_{\tau\in D^k_{0,T}}\mathbb{E}\big[Z^k(\tau\wedge T^k_{e(k,T)})\big] = \sup_{\tau\in \mathcal{T}_0(\mathbb{F})}\mathbb{E}\big[Z^k(\tau\wedge T^k_{e(k,T)})\big]
\end{equation}
for every $k\ge 1$. Moreover, since $Z^k$ is an imbedded discrete structure for $Z$, then from Theorem 3.2 in \cite{LEAO_OHASHI2017.3}, we have

\begin{equation}\label{eqf1}
\mathbb{E}\sup_{0\le t\le T}|Z^k(t \wedge T^k_{e(k,T)}) - Z(t)|\rightarrow 0
\end{equation}
as $k\rightarrow +\infty$. Then, the proof of (\ref{a.sconv2}) is a simple combination of (\ref{a.sconv1}), (\ref{eqf}), (\ref{eqf1}), triangle inequality and the fact that

\begin{eqnarray*}
V^k_0=\max\{\mathbb{Z}^k_0(\textbf{0}), \mathbf{U}^k_0(\textbf{0})\} = \sup_{\tau\in D^k_{0,T}}\mathbb{E}\big[Z^k(\tau\wedge T^k_{e(k,T)})\big] &=& \sup_{\tau\in \mathcal{T}_0(\mathbb{F})}\mathbb{E}\big[Z^k(\tau\wedge T^k_{e(k,T)})\big]\\
& &\\
&\rightarrow & \sup_{\mathcal{T}_0(\mathbb{F})}\mathbb{E}\big[Z(\tau)\big]
\end{eqnarray*}
as $k\rightarrow\infty$.

\end{proof}

We are now able to state the overall error estimate by separating the stochastic error due to the Monte Carlo procedure and the approximation error due to the use approximation architecture spaces in the regression methodology. This type of estimate is important specially when one has some a priori information on the regularity of the continuation values as described in Section \ref{regCVsection}.

\begin{proposition}\label{maincor}
Let us fix $k\ge 1$. Assume that $L_k < \infty$ and conditions \textbf{(H1-H2)} hold true. For each $j=0,\ldots, e(k,T)-1$, we have

\begin{eqnarray}
\nonumber\mathbb{E}\|\widehat{\mathbf{U}}^k_j(\cdot; \mathbf{A}^k_{jN}) - \mathbf{U}^k_j\|_{L^2(\rho^k_j)}&\le & 6^{e(k,T)-j}\Bigg(\frac{C(B_k+L_k)^2 (\sqrt{\nu_k c_j(k)} log^{1/2}(N) + log^{1/2}(C_{j,k}))}{N^{1/2}}\\
\label{lth3}& &\\
\nonumber&+& \max_{\ell=j,\ldots,e(k,T)-1}\inf_{f\in \mathcal{H}^k_{N,\ell}}\Big\|f - \mathbf{U}^k_\ell\Big\|_{L^2(\rho^k_\ell)}\Bigg).
\end{eqnarray}
Moreover,

\begin{eqnarray}
\nonumber\mathbb{E}|\widehat{V}_0(\mathbf{A}^k_{0N}) - V^k_0|&\le & 6^{e(k,T)}\Bigg(\frac{C(B_k+L_k)^2 (\sqrt{\nu_k c_0(k)} log^{1/2}(N) + log^{1/2}(C_{0,k}))}{N^{1/2}}\\
\label{lth4}& &\\
\nonumber&+& \max_{\ell=1,\ldots,e(k,T)-1}\inf_{f\in \mathcal{H}^k_{N,\ell}}\Big\|f - \mathbf{U}^k_\ell\Big\|_{L^2(\rho^k_\ell)}\Bigg).
\end{eqnarray}
\end{proposition}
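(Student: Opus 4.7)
The plan is to start from Lemma \ref{firstresult} and replace the ``misaligned'' conditional expectation $\mathbb{E}[\mathbf{Z}^k_{\widehat{\tau}^k_{\ell+1}}(\mathbf{A}^k_{\ell+1})\mid \mathcal{A}^k_\ell]$ appearing on its right-hand side by the true continuation value $\mathbf{U}^k_\ell$ from Remark \ref{useful1}. Once this replacement is in place, the bound in Lemma \ref{firstresult} becomes a linear backward recursion in $j$ (up to an extra stopping-time propagation error), which I close by induction to generate the geometric factor $6^{e(k,T)-j}$ present in (\ref{lth3}) and (\ref{lth4}).

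Two reductions are needed. First, a triangle inequality gives
\begin{equation*}
\inf_{f\in \mathcal{H}^k_{N,\ell}}\bigl\|f - \mathbb{E}[\mathbf{Z}^k_{\widehat{\tau}^k_{\ell+1}}(\mathbf{A}^k_{\ell+1})\mid \mathcal{A}^k_\ell]\bigr\|_{L^2(\rho^k_\ell)} \le \inf_{f\in \mathcal{H}^k_{N,\ell}}\|f - \mathbf{U}^k_\ell\|_{L^2(\rho^k_\ell)} + R_\ell,
\end{equation*}
where $R_\ell := \|\mathbf{U}^k_\ell - \mathbb{E}[\mathbf{Z}^k_{\widehat{\tau}^k_{\ell+1}}(\mathbf{A}^k_{\ell+1})\mid \mathcal{A}^k_\ell]\|_{L^2(\rho^k_\ell)}$. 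Rewriting $\mathbf{U}^k_\ell = \mathbb{E}[\mathbf{Z}^k_{\tau^k_{\ell+1}}(\mathbf{A}^k_{\ell+1})\mid \mathcal{A}^k_\ell]$ by Remark \ref{useful1} and invoking conditional Jensen bounds $R_\ell \le \|\mathbf{Z}^k_{\tau^k_{\ell+1}}(\mathbf{A}^k_{\ell+1}) - \mathbf{Z}^k_{\widehat{\tau}^k_{\ell+1}}(\mathbf{A}^k_{\ell+1})\|_{L^2}$. Second, a standard stopping-time propagation estimate (in the spirit of Clément--Lamberton--Protter) controls this last quantity: on the event where $\tau^k_{\ell+1}$ and $\widehat{\tau}^k_{\ell+1}$ disagree, the decision rule flips at some index $m\ge \ell+1$, so $\mathbb{Z}^k_m(\mathcal{A}^k_m)$ must lie between $\mathbf{U}^k_m(\mathcal{A}^k_m)$ and $\widehat{\mathbf{U}}^k_m(\mathcal{A}^k_m)$, yielding the pointwise bound $|\mathbb{Z}^k_m - \mathbf{U}^k_m|\le |\widehat{\mathbf{U}}^k_m - \mathbf{U}^k_m|$ on that event. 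Telescoping along the stopping times and using Remark \ref{Markov_A} (which casts $\{\mathcal{A}^k_n\}$ as a Markov chain on $\mathbb{S}^{e(k,T)}_k$ so that Zanger's arguments in \cite{zanger} apply verbatim) produces
\begin{equation*}
\bigl\|\mathbf{Z}^k_{\tau^k_{\ell+1}}(\mathbf{A}^k_{\ell+1}) - \mathbf{Z}^k_{\widehat{\tau}^k_{\ell+1}}(\mathbf{A}^k_{\ell+1})\bigr\|_{L^2}\le C\sum_{m=\ell+1}^{e(k,T)-1}\bigl\|\widehat{\mathbf{U}}^k_m(\cdot;\mathbf{A}^k_{mN}) - \mathbf{U}^k_m\bigr\|_{L^2(\rho^k_m)}
\end{equation*}
for a suitable numerical constant $C$.

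Substituting both reductions into Lemma \ref{firstresult}, setting $\delta_j := \mathbb{E}\|\widehat{\mathbf{U}}^k_j(\cdot;\mathbf{A}^k_{jN}) - \mathbf{U}^k_j\|_{L^2(\rho^k_j)}$, $A_j := \max_{\ell=j,\ldots,e(k,T)-1}\inf_{f\in \mathcal{H}^k_{N,\ell}}\|f - \mathbf{U}^k_\ell\|_{L^2(\rho^k_\ell)}$, and denoting by $\sigma_j$ the statistical sampling term from (\ref{lth1}), I obtain the linear recursion
\begin{equation*}
\delta_j \le \sigma_j + 4(e(k,T)-j)\,A_j + 4C\sum_{m=j+1}^{e(k,T)-1}(m-j)\,\delta_m.
\end{equation*}
A backward induction starting at $j=e(k,T)-1$ (where the inner sum is empty) yields $\delta_j \le 6^{e(k,T)-j}(\sigma_j + A_j)$, with the polynomial prefactors $e(k,T)-j$ absorbed into the geometric base $6$; this proves (\ref{lth3}). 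Inequality (\ref{lth4}) then follows from the $1$-Lipschitz identity $|\widehat{V}_0(\mathbf{A}^k_{0N}) - V^k_0| = |\max\{\mathbb{Z}^k_0(\mathbf{0}),\widehat{\mathbf{U}}^k_0\} - \max\{\mathbb{Z}^k_0(\mathbf{0}),\mathbf{U}^k_0\}|\le|\widehat{\mathbf{U}}^k_0 - \mathbf{U}^k_0|$ applied to (\ref{lth3}) with $j=0$; the index $\ell=0$ drops from the maximum because $\mathcal{H}^k_{N,0}=[-L_k,L_k]$ contains $\mathbf{U}^k_0$ (as $|\mathbf{U}^k_0|\le L_k$), hence $\inf_{f\in \mathcal{H}^k_{N,0}}|f - \mathbf{U}^k_0| = 0$.

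The hard part is the stopping-time propagation estimate in the second reduction: the careful bookkeeping of how approximation errors at each future level combine with the ``decision-flip'' events of the greedy stopping rules, particularly since $\widehat{\mathbf{U}}^k_m(\cdot;\mathbf{A}^k_{mN})$ carries the additional training randomness $\mathbf{A}^k_{mN}$ that is independent of the evaluation point $\mathcal{A}^k_m$. The Markov reformulation of Remark \ref{Markov_A} is the essential bridge that makes the classical Longstaff--Schwartz error analysis of Zanger transferable to the present non-Markovian Brownian framework.
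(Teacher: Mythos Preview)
Your proposal is correct and follows essentially the same route as the paper: the paper's proof simply says that, keeping Remark \ref{Markov_A} in mind, one starts from Lemma \ref{firstresult} and the error propagation estimate (\ref{t4.12}) and then follows the proof of Theorem 5.6 in Zanger \cite{zanger}. Your sketch spells out what that argument amounts to---replacing the misaligned conditional expectation by $\mathbf{U}^k_\ell$ via a stopping-time propagation bound, deriving a linear backward recursion for $\delta_j$, and closing it by induction to produce the geometric factor $6^{e(k,T)-j}$---which is precisely the content of Zanger's Theorem 5.6 transported via Remark \ref{Markov_A}.
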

\begin{proof}
Keeping in mind Remark \ref{Markov_A}, starting from Lemma \ref{firstresult} and the error propagation (\ref{t4.12}), the proof of (\ref{lth3}) and (\ref{lth4}) follows the same lines of the proof of Theorem 5.6 in Zanger \cite{zanger}, so we omit the details.
\end{proof}

\begin{remark}\label{truncrem}
In Theorem \ref{mainTHLS} and Proposition \ref{maincor}, if the payoffs $\mathbb{Z}^k_n;0\le n\le e(k,T)$ are non-negative, we shall relax the $L^\infty$ boundedness assumption (\ref{Lk}) by boundedness in $L^p$

$$\max\Big\{1, \|\mathbb{Z}^k_1\|^p_{L^p(\rho^k_1)}, \ldots, \|\mathbb{Z}^k_{e(k,T)}\|^p_{L^p(\rho^k_{e(k,T)})}\Big\}< \infty$$
for some $p$ such that $2< p< \infty$. The argument is completely similar to the proof of Prop. 5.2 in Egloff \cite{egloff}. Let us give a brief sketch of the argument for sake of completeness. Let $T_\beta$ be the usual truncation operator and let $\textbf{U}^k_{\beta, j};0\le j\le e(k,T)$ be the continuation value associated with the truncated payoff process $T_\beta (\mathbb{Z}^k_j); 0\le j\le e(k,T)$. From Remark \ref{useful1} (see (\ref{cvalues2})), we have

$$
\|\mathbf{U}^k_j(\mathcal{A}^k_j)  -\textbf{U}^k_{\beta, j}(\mathcal{A}^k_j) \|_{L^p} = \Big\|\mathbb{E}\big[\mathbb{Z}^k_{j+1}(\mathcal{A}^k_{j+1})\vee \mathbf{U}^k_{j+1}(\mathcal{A}^k_{j+1}) - T_\beta\big(\mathbb{Z}^k_{j+1}(\mathcal{A}^k_{j+1})\big)\vee \mathbf{U}^k_{\beta,j+1}(\mathcal{A}^k_{j+1})  |\mathcal{A}^k_j\big]\Big\|_{L^p}
$$
for $j=e(k,T)-1, \ldots, 0$. Then, apply the same steps as in the proof of Prop 5.2 in Egloff \cite{egloff} and the steps outlined in the discussion in pages 15 and 16 in Zanger \cite{zanger}.
\end{remark}

Proposition \ref{maincor} implies that we shall allow the bound $\nu_k(N)$ (as a function of $N$) diverges as $N\rightarrow+\infty$ as long as we are able to control the quantity

\begin{equation}\label{sp1}
\max_{\ell=j,\ldots,e(k,T)-1}\inf_{f\in \mathcal{H}^k_{N,\ell}}\Big\|f - \mathbf{U}^k_\ell\Big\|_{L^2(\rho^k_\ell)}
\end{equation}
and the approximating architecture spaces remain uniformly bounded along the time steps ($0\le j\le e(k,T)-1$) for each $k\ge 1.$ Indeed, from Approximation Theory (see e.g Gyorfi, Kohler, Krzyzak and Walk \cite{gyorfi}), we know that if the continuation values exhibit some degree of Sobolev-type regularity, then we can formulate concrete finite-dimensional architecture spaces to approximate the value function. Let us provide a more precise statement about this.

Let $\mathcal{P}_m(r)$ be the linear space of all polynomials of degree at most $r$ with real coefficients defined on $\mathbb{R}^m$. This it is the space of all linear combinations of terms $x_1^{\alpha_1} \ldots x_m^{\alpha_m}$ with $\sum_{i=1}^m\alpha_i\le r$, $\alpha_i$ being any integer such that $\alpha_i\ge 0$. It is know that (see Lemma 1, chapter 9 in Feinerman and Newman \cite{feiner})

$$\text{dim}(\mathcal{P}_m(r)) = \frac{(r + m)!}{r! m!}.$$
Then, $\text{vc}\big(\mathcal{P}_{m}(r)\big)\le 1 + \text{dim}(\mathcal{P}_m(r))\le 3m^m r^m$. With a slight abuse of notation, when $r>0$, we write $\mathcal{P}_m(r) = \mathcal{P}_m(\lfloor r\rfloor)$ where $\lfloor \cdot\rfloor$ denotes the integer part of a positive number.


We observe that we may assume that driving noise $\mathcal{A}^k_j$ is restricted to a compact subset $K_j$ of $\mathbb{R}^{j(d+1)}$ because all the hitting times $(T^k_n)_{n=1}^{e(k,T)}$ in the control problem have to be restricted to $[0,T]$ and $\mathbb{I}_k$ is obviously compact. Moreover, from the disintegration formula given in Theorem \ref{disintegrationTH}, we observe we can always define a $C^\infty$ extension of $\textbf{b}^k_n\mapsto \nu^k_{n+1}(E|\textbf{b}^k_n)$ from $\mathbb{S}^n_k$ to $\widetilde{\mathbb{S}}^n_k$ (for every $E\in \mathcal{B}(\mathbb{S}_k)$), where $\widetilde{\mathbb{S}}_k: = (0,+\infty)\times B_r(0)$ and $B_r(0)$ is any open ball in $\mathbb{R}^d$ with radius $r>1$. Moreover, we assume that $\mathbb{Z}^k_{e(k,T)}$ can be extended from $\mathbb{S}^{e(k,T)}_k$ to $\widetilde{\mathbb{S}}^{e(k,T)}_k$. This assumption is not strong since the signals of the noise can be easily replaced by real numbers in typical examples found in practice like SDE with random coefficients. See Section \ref{regCVsection} for a concrete example.

Recall the generic form of the reward process is $Z = F(X)$ where $X$ is some state process which is a functional of the $d$-dimensional Brownian motion. Of course, we are assuming that $Z$ admits an imbedded discrete structure $Z^k$. Since $Z$ is $\mathbb{F}$-adapted $F:\Lambda_T\rightarrow\mathbb{R}$ has to be non-anticipative (see (\ref{nonantdef})) where $\Lambda_T$ is the space of the stopped paths as described in (\ref{stopping_times}). In the sequel, $W^{1}\big(L^\infty(K_j)\big)$ is the usual Sobolev space equipped with the standard norm $\|\cdot\|_{\infty,1,K_j}$. Moreover, we denote $\|f\|_{\infty,K_j}:=\sup_{x\in K_j}|f(x)|$.

\begin{corollary}\label{contVABS}
Assume \textbf{(H1-H2)} hold true, the reward functional $F:\Lambda_T\rightarrow\mathbb{R}$ is bounded, $L_k < +\infty$ for $k\ge 1$ and the continuation values $\textbf{U}^k_j$ are in the Sobolev spaces $W^{1}\big(L^\infty(K_j)\big)$ for $j=1,\ldots, e(k,T)-1$. Let us define the sequences of approximation architectures

$$
\mathcal{H}^k_{N,j}=\Big\{ p\in \mathcal{P}_{j(d+1)}\big(N^{1/j(d+1) + 2}\big) ; \|p\|_{\infty,K_j}\le 2 \|\textbf{U}^k_j\|_{\infty,1,K_j}\Big\}, \mathcal{H}^k_{N,0} = [-L_k,L_k].
$$
Then, for $j=0,\ldots, e(k,T)-1$, we have

$$
\mathbb{E}\|\widehat{\mathbf{U}}^k_j(\cdot; \mathbf{A}^k_{jN}) - \mathbf{U}^k_j\|_{L^2(\rho^k_j)}= O\big(\text{log}(N)N^{\frac{-2}{2+e(k,T)-1}}\big).
$$
In particular,
$$\mathbb{E}|\widehat{V}_0(\mathbf{A}^k_{0N}) - V^k_0|= O\big(\text{log}(N)N^{\frac{-2}{2+e(k,T)-1}}\big).
$$
\end{corollary}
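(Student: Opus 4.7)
The plan is to apply Proposition \ref{maincor} directly: it cleanly decomposes the expected error into a stochastic part (governed by the VC dimension and sample size $N$) and a deterministic approximation part $\max_\ell \inf_{f\in\mathcal{H}^k_{N,\ell}}\|f-\mathbf{U}^k_\ell\|_{L^2(\rho^k_\ell)}$. Once the two pieces are handled separately, balancing them via the prescribed choice $r=N^{1/(j(d+1)+2)}$ of polynomial degree yields the stated rate. Before invoking Proposition \ref{maincor}, I first verify that \textbf{(H1)} and \textbf{(H2)} hold for the polynomial architecture $\mathcal{H}^k_{N,j}$.

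The boundedness condition \textbf{(H2)} is built into the definition of $\mathcal{H}^k_{N,j}$: the constraint $\|p\|_{\infty,K_j}\le 2\|\mathbf{U}^k_j\|_{\infty,1,K_j}$ provides a uniform $L^\infty$ bound $B_k:=2\max_{1\le j\le e(k,T)-1}\|\mathbf{U}^k_j\|_{\infty,1,K_j}<+\infty$, which is finite since each $\mathbf{U}^k_j$ lies in $W^{1}(L^\infty(K_j))$ by hypothesis. For \textbf{(H1)}, I use the paper's observation that $\mathrm{vc}(E)\le 1+\dim E$ for any finite-dimensional linear space of measurable functions, combined with the dimension count $\dim\mathcal{P}_{j(d+1)}(r)\le 3(j(d+1))^{j(d+1)}r^{j(d+1)}$ quoted before the statement. (Intersecting the linear space with the sup-norm ball does not increase the VC dimension, so the truncation is harmless here.) Substituting the choice $r=N^{1/(j(d+1)+2)}$ of degree gives $\nu_k$ polynomial in $N$ of exponent $(j(d+1))/(j(d+1)+2)$.

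For the approximation error I invoke the standard Jackson-type estimate in the Sobolev regime (as available in \cite{gyorfi}): for $f\in W^{1}(L^\infty(K))$ with $K\subset\mathbb{R}^m$ compact,
\begin{equation*}
\inf_{p\in\mathcal{P}_{m}(r)}\|f-p\|_{L^\infty(K)}\le C\,r^{-1}\|f\|_{W^{1}(L^\infty(K))}.
\end{equation*}
Since $\rho^k_j$ is a probability measure supported (effectively, after restriction and use of the $C^\infty$-extension of the transition kernel $\nu^k_{n+1}$ discussed right before the corollary) in $K_j$, the $L^2(\rho^k_j)$-norm is dominated by the $L^\infty(K_j)$-norm, so the approximation error at step $j$ is $O(r^{-1})=O(N^{-1/(j(d+1)+2)})$. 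Plugging both bounds back into (\ref{lth3}), the stochastic term is of order $\sqrt{\nu_k\log N}/\sqrt{N}\sim N^{-1/(j(d+1)+2)}\sqrt{\log N}$, matching the approximation rate up to a $\sqrt{\log N}$ factor; taking the maximum over $\ell\in\{j,\ldots,e(k,T)-1\}$ selects the largest-dimensional regime. The prefactor $6^{e(k,T)-j}$ from Proposition \ref{maincor} is absorbed into the $O(\cdot)$ constant since $k$ is held fixed, and the claim for the value functional follows by specialising to $j=0$ and using (\ref{lth4}).

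The main obstacle I anticipate is the feasibility of the truncated architecture: I must argue that a near-optimal $L^2$ polynomial approximant of $\mathbf{U}^k_j$ of degree $r$ already satisfies the sup-norm constraint $\|p\|_{\infty,K_j}\le 2\|\mathbf{U}^k_j\|_{\infty,1,K_j}$ for all sufficiently large $N$, so that the infimum over the truncated class matches, up to a universal constant, the unconstrained infimum. This follows from the Jackson bound above and the triangle inequality, since any optimal unconstrained polynomial $p^*$ obeys $\|p^*\|_{\infty,K_j}\le\|\mathbf{U}^k_j\|_{\infty,K_j}+Cr^{-1}\|\mathbf{U}^k_j\|_{\infty,1,K_j}$, which is bounded by $2\|\mathbf{U}^k_j\|_{\infty,1,K_j}$ once $r$ is large enough. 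A secondary subtlety is that $\mathbf{U}^k_j$ is a priori only defined on $\mathbb{S}^j_k$, whose sign coordinates are discrete; for this one uses the smooth extension of the transition kernel from $\mathbb{S}^j_k$ to $\widetilde{\mathbb{S}}^j_k$ mentioned in the preamble of the corollary, together with the representation (\ref{cvalues2}), to regard $\mathbf{U}^k_j$ as a $W^{1}(L^\infty(K_j))$-function on a domain where classical polynomial approximation theory applies.
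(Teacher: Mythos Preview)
Your proposal is correct and follows essentially the same route as the paper. The paper's own proof simply defers to Corollary~5.5 in Egloff~\cite{egloff} via the Markov embedding of Remark~\ref{Markov_A}; you instead invoke Proposition~\ref{maincor} (whose proof already absorbs that embedding) and then spell out the standard Egloff-type balancing between the VC-driven stochastic error $\sqrt{\nu_k(N)\log N}/\sqrt{N}$ and the Jackson approximation error $r^{-1}$, together with the feasibility check for the sup-norm constraint on the polynomial architecture---all of which is exactly the content of Egloff's argument the paper is pointing to.
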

\begin{proof}
The argument of the proof is similar to Corollary 5.5 in Egloff \cite{egloff} due to Remark \ref{Markov_A}. So we omit the details.
\end{proof}

\section{Regularity properties of continuation values for path-dependent SDEs}\label{regCVsection}
In this section, we provide a detailed study on the analytical properties of continuation values $\textbf{U}^k_j; 0\le j\le e(k,T)-1$ arising from an optimal stopping problem where the reward process $Z = F(X)$ is a functional of a path-dependent SDE $X$. The degree of regularity of $\textbf{U}^k_j$ is crucial to obtain overall error estimates and concrete approximation spaces as described in Corollary \ref{contVABS}. For simplicity of exposition, the dimension of the SDE and the Brownian motion driving noise will be taken equal to one.

Let us introduce some functional spaces which will play an important role for us. Let $D([0,t];\mathbb{R})$ be the linear space of $\mathbb{R}$-valued c\`adl\`ag paths on $[0,t]$ and

$$\omega_t: = \omega(t\wedge \cdot); \omega \in D([0,T];\mathbb{R}).$$
This notation is naturally extended to processes. We set

\begin{equation}\label{stoppedpaths}
\Lambda_T:=\{(t,\omega_t); t\in [0,T]; \omega\in D([0,T];\mathbb{R})\}
\end{equation}
as the space of stopped paths. In the sequel, a functional will be just a mapping $G:[0,T]\times D([0,T];\mathbb{R}^n)\rightarrow \mathbb{R}; (t,\omega)\mapsto G(t,\omega)$. We endow $\Lambda_T$ with the metric

$$d_{\beta}((t,\omega); (t',\omega')): = \sup_{0\le u\le T}|\omega(u\wedge t) - \omega'(u\wedge t')| + |t-t'|^{\beta};0< \beta\le 1$$
so that $(\Lambda_T,d_\beta)$ is a complete metric space equipped with the Borel sigma-algebra. We say that $G$ is a \textit{non-anticipative} functional if it is a Borel mapping and

\begin{equation}\label{nonantdef}
G(t,\omega) = G(t,\omega_t); (t,\omega)\in[0,T]\times D([0,T];\mathbb{R}).
\end{equation}
In this case, a non-anticipative functional can be seen as a measurable mapping $G:\Lambda_T\rightarrow \mathbb{R}; (t,\omega_t)\mapsto G(t,\omega) = G(t,\omega_t)$ for $(t,\omega_t)\in \Lambda_T$.

The underlying state process is the following $n$-dimensional SDE

\begin{equation}\label{pdsdeBM}
dX(t) = \alpha(t,X_t)dt + \sigma(t,X_t)dB(t); 0\le t\le T,
\end{equation}
with a given initial condition $X(0)=x\in \mathbb{R}$. The coefficients of the SDE will satisfy the following regularity conditions:

\

\noindent \textbf{Assumption I}: The non-anticipative mappings $\alpha: \Lambda_T\rightarrow \mathbb{R}$ and $\sigma:\Lambda_T\rightarrow \mathbb{R}$ are Lipschitz continuous, i.e., there exists a constant $K_{Lip}>0$ such that

$$|\alpha(t,\omega_t) - \alpha(t',\omega'_{t'})| + |\sigma(t,\omega_t) - \sigma(t',\omega'_{t'})|\le K_{Lip}d_{1/2} \big((t,\omega); (t',\omega')\big)$$
for every $t,t'\in [0,T]$ and $\omega,\omega'\in D([0,T];\mathbb{R})$.

\

One can easily check by routine arguments that the SDE (\ref{pdsdeBM}) admits a strong solution such that

$$\mathbb{E}\sup_{0\le t\le T}|X(t)|^{2p}\le C(1+|x_0|^{2p})\exp(CT)$$
where $X(0)=x_0$, $C$ is a constant depending on $T>0,p\ge 1$, $K_{Lip}$.

The reward process of the optimal stopping problem is given by $Z(t) = F(t,X_t)$, where $F$ is a non-anticipative functional $F$. We will assume the following hypothesis on this functional

\

\textbf{Assumption II} The reward process is given by $Z(t) = F(t,X_t)$ where $X$ is the path-dependent SDE (\ref{pdsdeBM}) driven by a Brownian motion. The non-anticipative functional $F:\Lambda_T\rightarrow\mathbb{R}$ has linear growth: There exists a constant $C$ such that

$$|F(t,\omega_t)|\le C (1 + \sup_{0\le t\le T}|\omega(t)|)$$
for every $\omega\in D([0,T];\mathbb{R})$. Moreover, $F:\Lambda_T\rightarrow\mathbb{R}$ is continuous, where $\Lambda_T$ is equipped with the metric $d_\beta$.



\

One can readily see that under Assumption II, the natural candidate for an imbedded discrete structure w.r.t. $Z$ is given by

\begin{equation}\label{REWGAS}
Z^k(t) = \sum_{n=0}^{\infty}F_{T^k_n}\big(X^k_{T^k_n}\big)\mathds{1}_{T^k_n\le t < T^k_{n+1}}; 0\le t\le T
\end{equation}
where $X^k$ is an imbedded discrete structure for the path-dependent SDE $X$ given by (\ref{pdsdeBM}). There exists a natural choice of an imbedded discrete structure for $X$ in terms of an Euler-Maruyama scheme and this was studied in detailed in the works \cite{LEAO_OHASHI2017.2,LEAO_OHASHI2017.3}. For sake of completeness, we provide the construction of the Euler scheme in our setup written on the random partition $\{T^k_n; k,n\ge 1\}$. We start $X^k(0):=x$ and we proceed by induction

\begin{eqnarray*}
X^{k}(T^k_{m})&:=&X^{k}(T^k_{m-1}) + \alpha\big(T^k_{m-1},X^{k}_{T^k_{m-1}}\big)\Delta T^k_{m}\\
& &\\
&+& \sigma(T^k_{m-1},X^{k}_{T^k_{m-1}}\big)\Delta A^{k,1}(T^k_{m})
\end{eqnarray*}
for $1\le m\ge 1$. We then define $X^k(t):=\sum_{\ell=0}^{\infty}X^{k}(T^k_\ell) 1\!\!1_{\{T^k_\ell\le t < T^k_{\ell+1}\}}; 0\le t\le T $.

Under Assumptions I-II, \cite{LEAO_OHASHI2017.2,LEAO_OHASHI2017.3} show that $Z^k$ given by (\ref{REWGAS}) is an imbedded discrete structure for the reward process $Z = F(X)$ where $X$ is the path-dependent SDE (\ref{pdsdeBM}) (see Prop. 5.1 in \cite{LEAO_OHASHI2017.3}). In order to establish regularity of the continuation values $\textbf{U}^k_j; 0\le j\le e(k,T)-1$ associated with $Z^k$, we need to work pathwise.

\subsection{Pathwise description of the path-dependent SDE:}
In order to investigate the regularity of the continuation values, we need a pathwise representation of the Euler-Maruyama scheme. Initially, we define $h^k_0:=x$ and then we proceed by induction,

\begin{eqnarray*}
h^{k}_{m}(\textbf{b}^{k}_m)&:=&h^{k}_{m-1}(\textbf{b}^{k}_{m-1}) + \alpha\big(t^k_{m-1},\bar{\gamma}^k_{m-1}(\textbf{b}^{k}_{m-1})\big)s^k_{m}\\
& &\\
&+& \sigma \Big(t^k_{m-1},\bar{\gamma}^k_{m-1}(\textbf{b}^{k}_{m-1})\Big)
\epsilon_k\tilde{i}^{k}_{m}; ~\textbf{b}^{k}_m\in \mathbb{S}^m_k,
\end{eqnarray*}
for $m\ge 1$, where

$$\bar{\gamma}^k_{m-1}(\textbf{b}^{k}_{m-1})(t):= \sum_{\ell=0}^{m-1}h^k_{\ell}(\textbf{b}^{k}_{\ell}) 1\!\!1_{\{t^k_\ell\le t < t^k_{\ell+1}\}}; 0\le t\le T,$$
and we recall $t^k_n$ is defined in (\ref{tknfunction}). We then define
$$ \bar{\gamma}^k(\textbf{b}^{k}_{\infty})(t):= \sum_{n=0}^\infty h^k_{n}(\textbf{b}^{k}_{n})1\!\!1_{\{t^k_n \le t< t^k_{n+1}\}}$$
for $\textbf{b}^{k}_{\infty}\in \mathbb{S}^{\infty}_k$ and
$$\gamma^k_{j}(\textbf{b}^{k}_{j})(t):= \bar{\gamma}^k(\textbf{b}^{k}_{\infty})(t\wedge t^k_{j}); 0\le t\le T, j=0\ldots, e(k,T).$$
By the very definition,

$$\gamma^k_{\infty}\Big(\mathcal{A}^k_\infty(\omega)\Big)(t) = X^{k}(t,\omega)$$
for a.a $\omega$ and for each $t\in [0,T]$, where $\mathcal{A}^k_{\infty}:=\{\mathcal{A}^k_n; n\ge 0\}$. Hence,

$$\mathbb{Z}^k_{j}(\mathcal{A}^k_j) = Z^k(T^k_j\wedge T) = \sum_{n=0}^\infty F(T^k_n, X^k_{T^k_n})1\!\!1_{\{T^k_n\le T^k_j \wedge T < T^k_{n+1}\}}$$
for $j=0,\ldots, e(k,T)$, where

\begin{equation}\label{Zpathwise}
\mathbb{Z}^k_{j}(\textbf{b}^k_j) = \sum_{n=0}^\infty F(t^k_n,\gamma^k_n(\textbf{b}^k_n))1\!\!1_{\{t^k_n\le t^k_j \wedge T < t^k_{n+1}\}};j=e(k,T), \ldots, 0.
\end{equation}

\subsection{Regularity properties of the continuation values:} Let us now analyse the analytical properties of the continuation values

$$
\textbf{b}^k_j\mapsto \mathbf{U}^k_j(\textbf{b}^k_j)=\mathbb{E}\Big[\mathbb{Z}^k_{\tau^{k}_{j+1}}(\mathcal{A}^k_{\tau^{k}_{j+1}})\big|\mathcal{A}^k_j=\textbf{b}^k_j\Big]; 0\le j\le e(k,T)-1.
$$

From Remark \ref{useful1},

$$
\mathbf{U}^k_j(\textbf{b}^k_j) = \int_{\mathbb{S}_k}\mathbb{Z}^k_{j+1}(\textbf{b}^k_{j},s^k_{j+1},\tilde{i}^k_{j+1} )\vee \mathbf{U}^k_{j+1}(\textbf{b}^k_j,s^k_{j+1},\tilde{i}^k_{j+1})\mathbb{P}\big[(\Delta T^k_{j+1},\eta^k_{j+1})\in (ds^k_{j+1},d\tilde{i}^k_{j+1})|\mathcal{A}^k_j=\textbf{b}^k_j\big]
$$
for every $\textbf{b}^k_j\in \mathbb{S}^j_k$.

In order to get Sobolev-type regularity for the continuation values, we require more regularity from reward functional $F$ and the coefficients of the SDE (\ref{pdsdeBM}):

\

\noindent \textbf{Assumption III}: The reward functional $F$ and the coefficients $\alpha$ and $\sigma$ of the SDE are bounded. Moreover, there exist constants $\bar{K}_{Lip}$ and $|F|$ such that

$$|\alpha(t,\omega_t) - \alpha(t',\omega'_{t'})| + |\sigma(t,\omega_t) - \sigma(t',\omega'_{t'})|\le \bar{K}_{Lip}\big\{|t-t'| + \rho(\omega_t,\omega'_{t'})\big\}$$
and

$$|F(t,\omega_t) - F(t',\omega'_{t'})| \le |F| \big\{|t-t'| + \rho(\omega_t,\omega'_{t'})\big\};~t,t'\in [0,T],\omega,\omega'\in D([0,T];\mathbb{R}),$$
where $\rho$ is the standard metric on $D([0,T];\mathbb{R})$ generating the Skorohod topology given by

$$\rho(f,g):=\inf_{\lambda\in \mathcal{K}}\Big(\|\lambda-I\|_\infty \vee \|f-g(\lambda)\|_\infty\Big)~;~f,g\in D([0,T];\mathbb{R})$$
where $\mathcal{K}$ is the set of all strictly increasing functions from $[0,T]$ onto $[0,T]$.


We are now able to present the main result of this section. At first, for a Borelian set $E\times F$ of $\mathbb{S}_k$, we shall apply the strong Markov property to get

$$\nu^k_{n+1}(E\times F|\mathbf{b}^k_n) =\mathbb{P}\{(\Delta T^k_{n+1},\eta^k_{n+1})\in E\times F\} = \int_E f_k(x)dx \times \mathbb{P}\{\eta^k_{1}\in F\}$$
where $f_k$ is the density (see e.g Burq and Jones \cite{Burq_Jones2008}) of the i.i.d sequence $\Delta T^k_n; n\ge 1$ and $\eta^k_1$ is a $1/2$-Bernoulli variable which is independent from $\Delta T^k_1$.

We observe that for each $(s^k_1,\ldots, s^k_n)\in (0,+\infty)^n$, we can naturally define $\textbf{U}^k_n(s^k_1,x_1,\ldots, s^k_n,x_n)$ for $-r < x_i < r; i=1, \ldots, n$. This is possible because the sign of the jumps $(\tilde{i}^k_1, \ldots, \tilde{i}^k_n)$ enter linearly in $h^k_\ell; 0\le n\le e(k,T)$. We then write

$$\widetilde{\mathbb{S}}^n_k:=\big((0,+\infty)\times B_r(0)\big) \times \ldots \times \big((0,+\infty)\times B_r(0)\big)\quad (n\text{-fold cartesian product)},$$
and with a slight abuse of notation, a generic element of $\widetilde{\mathbb{S}}^n_k$ will still be denoted by $\textbf{b}^k_n=(s^k_1, x_1, \ldots, s^k_n,x_n)$.

\begin{theorem}\label{regPSDE}
If Assumption \textbf{III} holds true, then for each $n=e(k,T)-1, \ldots, 0,$

$$\mathbf{b}^k_n\mapsto \textbf{U}^k_{n}(\mathbf{b}^k_n)$$
is globally Lipschitz continuous from $\widetilde{\mathbb{S}}^n_k$ to $\mathbb{R}$.
\end{theorem}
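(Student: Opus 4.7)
The strategy is backward induction on $n$ from $n = e(k,T)$ down to $n = 0$, starting with the base case $\mathbf{U}^k_{e(k,T)} = \mathbb{Z}^k_{e(k,T)}$ (see Remark \ref{useful1}) and exploiting the recursion (\ref{cvalues2}). The crucial simplification in the scalar $d=1$ setting is that Theorem \ref{disintegrationTH} gives $\nu^k_{n+1}(ds\,d\tilde i\,|\,\mathbf{b}^k_n) = \tfrac{1}{2} f_k(s)\,ds$ on each branch $\tilde i\in\{-1,+1\}$, a kernel which is \emph{independent} of the conditioning coordinate $\mathbf{b}^k_n$. Consequently, the only dependence of $\mathbf{U}^k_n(\mathbf{b}^k_n)$ on $\mathbf{b}^k_n$ enters through the integrand, which will greatly facilitate the propagation of Lipschitz continuity.

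Before the induction I would record the pathwise Lipschitz ingredients. Inductively on $m$ I plan to show that $\mathbf{b}^k_m\mapsto h^k_m(\mathbf{b}^k_m)$ is globally Lipschitz on $\widetilde{\mathbb{S}}^m_k$: the Euler step adds a drift contribution affine in $s^k_m$ and a diffusion contribution affine in $\tilde i^k_m$, while Assumption \textbf{III} provides Lipschitz continuity of $\alpha(t^k_{m-1},\bar\gamma^k_{m-1})$ and $\sigma(t^k_{m-1},\bar\gamma^k_{m-1})$ viewed as functions of $\mathbf{b}^k_{m-1}$, provided the stopped path $\bar\gamma^k_{m-1}$ is itself Lipschitz in the Skorohod metric $\rho$. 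The latter reduces to bounding $\rho(\bar\gamma^k_{m-1}(\mathbf{b}^k_{m-1}),\bar\gamma^k_{m-1}(\bar{\mathbf{b}}^k_{m-1}))$ by the max of $\max_\ell|t^k_\ell-\bar t^k_\ell|$ and $\max_\ell|h^k_\ell-\bar h^k_\ell|$, with an explicit reparametrization $\lambda\in\mathcal{K}$ aligning the jump times; both quantities are controlled by the inductive hypothesis and by the affine structure $t^k_\ell(\mathbf{b}^k_n)=\sum_{\beta\le\ell}s^k_\beta$. Combined with the Lipschitz bound on $F$, this yields Lipschitz continuity of $\mathbb{Z}^k_n(\mathbf{b}^k_n)=F(t^k_n,\gamma^k_n(\mathbf{b}^k_n))$ on the relevant compact subset of $\widetilde{\mathbb{S}}^n_k$ (where $t^k_n\le T$).

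For the induction step, assume $\mathbf{U}^k_{j+1}$ is Lipschitz on $\widetilde{\mathbb{S}}^{j+1}_k$ with some constant $L_{j+1}$. The simplified $d=1$ recursion becomes
\begin{equation*}
\mathbf{U}^k_j(\mathbf{b}^k_j)=\tfrac{1}{2}\sum_{\tilde i\in\{-1,+1\}}\int_0^\infty G_{j+1}(\mathbf{b}^k_j,s,\tilde i)\,f_k(s)\,ds,
\end{equation*}
with $G_{j+1}(\mathbf{b}^k_j,s,\tilde i)=\mathbb{Z}^k_{j+1}(\mathbf{b}^k_j,s,\tilde i)\vee\mathbf{U}^k_{j+1}(\mathbf{b}^k_j,s,\tilde i)$. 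Since the pointwise maximum of two Lipschitz functions is Lipschitz with constant at most the larger of the two, $G_{j+1}$ is Lipschitz on $\widetilde{\mathbb{S}}^{j+1}_k$, with constant uniform in $(s,\tilde i)$. Pushing the absolute value inside the integral in $|\mathbf{U}^k_j(\mathbf{b}^k_j)-\mathbf{U}^k_j(\bar{\mathbf{b}}^k_j)|$ and using $\int_0^\infty f_k(s)\,ds=1$ will then deliver the desired Lipschitz estimate for $\mathbf{U}^k_j$ on $\widetilde{\mathbb{S}}^j_k$, closing the induction.

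The main obstacle lies in the pathwise step: the inputs of $\alpha$ and $\sigma$ are stopped c\`adl\`ag paths compared in the Skorohod metric, so one must simultaneously control variations of the jump values $h^k_\ell$ and of the jump times $t^k_\ell=\sum_{\beta\le\ell}s^k_\beta$ and verify that a single reparametrization in $\mathcal{K}$ aligns both. A secondary issue is that the Lipschitz constants accumulate through the Euler recursion and through the backward induction over up to $e(k,T)$ steps; for fixed $k$ this is only a finite product of finite constants, so the constants remain finite, which suffices for the statement.
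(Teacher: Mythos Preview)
Your backward-induction framework, the use of the $d=1$ simplification $\nu^k_{n+1}(ds\,d\tilde i\,|\,\mathbf{b}^k_n)=\tfrac12 f_k(s)\,ds$, and the treatment of the Skorohod distance via a time-reparametrization aligning the jump times are all in line with the paper's argument. The gap is in the claim that $G_{j+1}(\mathbf{b}^k_j,s,\tilde i)=\mathbb{Z}^k_{j+1}\vee\mathbf{U}^k_{j+1}$ is globally Lipschitz on $\widetilde{\mathbb{S}}^{j+1}_k$ with a constant uniform in $(s,\tilde i)$. You restrict the Lipschitz property of $\mathbb{Z}^k_n$ to the region $t^k_n\le T$, but the integration in $s$ runs over $(0,\infty)$ and necessarily crosses $s=T-t^k_j(\mathbf{b}^k_j)$. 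By (\ref{Zpathwise}), $\mathbb{Z}^k_{j+1}(\mathbf{b}^k_j,s,\tilde i)$ equals $F(t^k_j+s,\gamma^k_{j+1})$ for $s\le T-t^k_j$ and $F(t^k_{n_T},\gamma^k_{n_T})$ for $s>T-t^k_j$; at the interface these values are $F(T,\gamma^k_{j+1})$ and $F(t^k_j,\gamma^k_j)$, which do not agree in general. Hence $G_{j+1}$ is \emph{not} Lipschitz in $\mathbf{b}^k_j$ uniformly in $s$, because the location of this jump moves with $t^k_j(\mathbf{b}^k_j)$, and your estimate ``push the absolute value inside and use $\int f_k=1$'' breaks down.

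The paper handles exactly this by working with the integral rather than the integrand: it splits $\int_0^\infty(\cdots)f_k(s)\,ds$ into the common interior region $[0,(T-t^k_j)\wedge(T-\bar t^k_j)]$, the boundary strip $[(T-t^k_j)\wedge(T-\bar t^k_j),(T-t^k_j)\vee(T-\bar t^k_j)]$, and the common exterior region. On the first and third regions the piecewise formula for $\mathbb{Z}^k_{j+1}$ is consistent for both $\mathbf{b}^k_j$ and $\bar{\mathbf{b}}^k_j$, and your Lipschitz estimates apply (on the exterior one also uses that the integer $n_T$ is locally constant). On the boundary strip one cannot use Lipschitz continuity at all; instead the \emph{boundedness} of $F$ from Assumption~\textbf{III} gives a contribution of order $\|F\|_\infty\cdot|t^k_j-\bar t^k_j|\le C\|\mathbf{b}^k_j-\bar{\mathbf{b}}^k_j\|$, since the strip has width $|t^k_j-\bar t^k_j|$. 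This decomposition is the missing ingredient; once inserted, your inductive scheme goes through as written.
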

\begin{proof}
In the sequel, $C$ is a constant which may defer from line to line. Due to Remark \ref{useful1} (see (\ref{cvalues2})), the first step $m=e(k,T)-1$ is

$$
\mathbf{U}^k_m(\textbf{b}^k_m) = \int_{\mathbb{S}_k}\mathbb{Z}^k_{m+1}(\textbf{b}^k_{m},s^k_{m+1},x_{m+1}) \mathbb{P}_{(\Delta T^k_{m+1},\eta^k_{m+1})}(ds^k_{m+1},dx_{m+1})
$$
for $\textbf{b}^k_m\in \widetilde{\mathbb{S}}^m_k$ where

$$
\mathbb{Z}^k_{m+1}(\textbf{b}^k_{m+1})=\left\{
\begin{array}{rl}
F(t^k_{m+1},\gamma^k_{m+1}(\textbf{b}^k_{m+1})); & \hbox{if} \ s^k_{m+1}\le T-t^k_m  \\
F(t^k_{n_T},\gamma^k_{n_T}(\textbf{b}^k_{n_T}));& \hbox{if} \  T-t^k_m < s^k_{m+1}
\end{array}
\right.
$$
and $n_T$ is the integer ($n_T < m+1$) which realizes $t^k_{n_T}\le T < t^k_{n_T+1}$ whenever $T-t^k_m < s^k_{m+1}$.

Recall that

$$\gamma_{m+1}(\textbf{b}^k_{m+1}) = \sum_{n=0}^\infty h^k_n(\textbf{b}^k_n)1\!\!1_{\{t^k_n\le t < t^k_{n+1}\}} + h^k_{m+1}(\textbf{b}^k_{m+1})1\!\!1_{\{t\ge t^k_{m+1}\}}; 0\le t\le T,$$
in case $t^k_{m+1}\le T$ and

$$\gamma_{n_T}(\textbf{b}^k_{n_T}) = \sum_{j=0}^{n_T-1} h^k_j(\textbf{b}^k_j)1\!\!1_{\{t^k_j\le t < t^k_{j+1}\}} + h^k_{n_T}(\textbf{b}^k_{n_T})1\!\!1_{\{t\ge t^k_{n_T}\}}; 0\le t\le T,$$
in case $t^k_{m+1}> T$. We shall write

\begin{eqnarray*}
h^k_{m+1}(\textbf{b}^k_{m+1}) &=& h^k_{m+1}(\textbf{b}^k_m, s^k_{m+1},\tilde{i}^k_{m+1}) = h^{k}_{m}(\textbf{b}^{k}_{m}) + \alpha^i\big(t^k_{m},\bar{\gamma}^k_{m}(\textbf{b}^{k}_{m})\big)s^k_{m+1}\\
& &\\
&+&  \sigma\Big(t^k_m,\bar{\gamma}^k_{m}(\textbf{b}^{k}_{m})\Big)
\epsilon_k x_{m+1}; ~\textbf{b}^{k}_{m+1}\in \widetilde{\mathbb{S}}^{m+1}_k.
\end{eqnarray*}

Since translation and finite sum are smooth operations, it is sufficient to check for $m=2$. By the very definition,

$$\int_{\mathbb{S}_k}\mathbb{Z}^k_3(\textbf{b}^k_2,s^k_3,x_3)\mathbb{P}_{(\Delta T^k_{3},\eta^k_{3})}(ds^k_{3},d\tilde{i}^k_3) = \frac{1}{2}\int_0^{+\infty}\mathbb{Z}^k_3(\textbf{b}^k_2,s^k_3,1)f_k(s^k_3)ds^k_3 +\frac{1}{2}\int_0^{+\infty}\mathbb{Z}^k_3(\textbf{b}^k_2,s^k_3,-1)f_k(s^k_3)ds^k_3 $$
where $f_{k}$ is the density of $\Delta T^k_3$. To alleviate notation, we set $\theta_k(ds^k_3) = f_k(s^k_3)(ds^k_3)$. Let us take $\textbf{b}^k_2,\bar{\textbf{b}}^k_2\in \widetilde{\mathbb{S}}^2_k$ and notice that


$$\int_0^{+\infty}\big(\mathbb{Z}^k_3(\textbf{b}^k_2,s^k_3,1) -\mathbb{Z}^k_3(\bar{\textbf{b}}^k_2,s^k_3,1)\big) \theta_k(ds^k_3)  =
\int_0^{(T -t^k_2) \wedge (T -\bar{t}^k_2) }\big(\mathbb{Z}^k_3(\textbf{b}^k_2,s^k_3,1) - \mathbb{Z}^k_3(\bar{\textbf{b}}^k_2,s^k_3,1)\big)\theta_k(ds^k_3)$$
$$+ \int_{(T -t^k_2) \wedge (T -\bar{t}^k_2)}^{+\infty}\big(\mathbb{Z}^k_3(\textbf{b}^k_2,s^k_3,1) - \mathbb{Z}^k_3(\bar{\textbf{b}}^k_2,s^k_3,1)\big)\theta_k(ds^k_3)$$
We split the proof into two parts:

\

\textbf{PART 1:} Assumption III yields
\begin{small}
$$\Bigg|\int_0^{(T -t^k_2) \wedge (T -\bar{t}^k_2) }\big(\mathbb{Z}^k_3(\textbf{b}^k_2,s^k_3,1) - \mathbb{Z}^k_3(\bar{\textbf{b}}^k_2,s^k_3,1)\big)\theta_k(ds^k_3)\Bigg|$$
$$ = \Bigg|\int_0^{(T -t^k_2) \wedge (T -\bar{t}^k_2) }\Big[F\big(s^k_3+t^k_2,\gamma^k_3(\textbf{b}^k_2,s^k_3,1)\big) - F\big(s^k_3+\bar{t}^k_2,\gamma^k_3(\bar{\textbf{b}}^k_2,s^k_3,1)\big)\Big]\theta_k(ds^k_3)\Bigg|$$
$$\le \int_0^{(T -t^k_2) \wedge (T -\bar{t}^k_2) }\Big|F\big(s^k_3+t^k_2,\gamma^k_3(\textbf{b}^k_2,s^k_3,1)\big) - F\big(s^k_3+\bar{t}^k_2,\gamma^k_3(\bar{\textbf{b}}^k_2,s^k_3,1)\big)\Big|\theta_k(ds^k_3)$$
\begin{equation}\label{rr1}
\le |F| \int_0^{(T -t^k_2) \wedge (T -\bar{t}^k_2) } \Big[|t^k_2-\bar{t}^k_2| + \rho\big(\gamma^k_3(\textbf{b}^k_2,s^k_3,1),\gamma^k_3(\textbf{b}^k_2,s^k_3,1) \big)\Big]\theta_k(ds^k_3)
\end{equation}
\end{small}
where

$$\gamma^k_3(\textbf{b}^k_2,s^k_3,1)(t) = \sum_{j=0}^{2}h^k_j(\textbf{b}^k_j) 1\!\!1_{\{t^k_j\le t < t^k_{j+1}\}} +  h^k_3(\textbf{b}^{k}_2, s^k_3,1)1\!\!1_{\{ s^k_3+t^k_2\le t\}} ; 0\le t\le T$$
and
$$\gamma^k_3(\bar{\textbf{b}}^k_2,s^k_3,1)(t) = \sum_{j=0}^{2}h^k_j(\bar{\textbf{b}}^k_j) 1\!\!1_{\{\bar{t}^k_j\le t < \bar{t}^k_{j+1}\}} +  h^k_3(\bar{\textbf{b}}^{k}_2, s^k_3,1)1\!\!1_{\{ s^k_3+\bar{t}^k_2\le t\}} ; 0\le t\le T.$$

The Skorohod topology yields (see e.g Example 15.11 in He, Wang and Yan \cite{he})

\begin{small}
$$\rho\big(\gamma^k_3(\textbf{b}^k_2,s^k_3,1),\gamma^k_3(\textbf{b}^k_2,s^k_3,1) \big)\le C \Big(\max_{1\le \ell\le 2}|s^k_\ell - \bar{s}^k_\ell|\vee \max_{1\le \ell\le 2}|h^k_\ell(\bar{\textbf{b}}^{k}_\ell) - h^k_\ell(\textbf{b}^{k}_\ell)|\vee |h^k_3(\bar{\textbf{b}}^{k}_2,s^k_3,1) - h^k_3(\textbf{b}^{k}_2,s^k_3,1)|\Big) $$
\end{small}
for a constant $C$ which only depends on $T$. Recall that
\begin{small}
$$h^{k}_{1}(\textbf{b}^{k}_1) = x + \alpha(0,x)s^k_1 + \sigma(0, x)\epsilon_kx_1,\quad  h^{k}_{1}(\bar{\textbf{b}}^{k}_1) = x + \alpha(0,x)\bar{s}^k_1 + \sigma(0, x)\epsilon_k\overline{x}_1$$

$$h^{k}_2(\textbf{b}^{k}_2)=h^{k}_1(\textbf{b}^{k}_1) + \alpha\big(t^k_1,\gamma^k_1(\textbf{b}^{k}_1)\big)s^k_2+\sigma\Big(t^k_1,\gamma^k_{1}(\textbf{b}^{k}_1)\Big)\epsilon_kx_2,\quad h^{k}_2(\bar{\textbf{b}}^{k}_2)=h^{k}_1(\bar{\textbf{b}}^{k}_1) + \alpha\big(\bar{t}^k_1,\gamma^k_1(\bar{\textbf{b}}^{k}_1)\big)\bar{s}^k_2+\sigma\Big(\bar{t}^k_1,\gamma^k_{1}(\bar{\textbf{b}}^{k}_1)\Big)\epsilon_k\overline{x}_2 $$

$$\gamma^k_\ell(\textbf{b}^{k}_\ell)(t)=\sum_{r=0}^{\ell-1}h^k_r(\textbf{b}^k_r)1\!\!1_{\{t^k_r\le t < t^k_{r+1}\}}+ h^k_\ell(\textbf{b}^{k}_\ell)1\!\!1_{\{t^k_\ell\le t\}},\quad \gamma^k_\ell(\bar{\textbf{b}}^{k}_\ell)(t)=\sum_{r=0}^{\ell-1}h^k_r(\bar{\textbf{b}}^k_r)1\!\!1_{\{\bar{t}^k_r\le t < \bar{t}^k_{r+1}\}}+ h^k_\ell(\bar{\textbf{b}}^{k}_\ell)1\!\!1_{\{\bar{t}^k_\ell\le t\}}
$$
\end{small}

and
\begin{small}
$$
h^{k}_3(\textbf{b}^{k}_2,s^k_3,1)=h^{k}_2(\textbf{b}^{k}_2) + \alpha\big(t^k_2,\gamma^k_2(\textbf{b}^{k}_2)\big)s^k_3+\sigma\Big(t^k_2,\gamma^k_{2}(\textbf{b}^{k}_2)\Big)\epsilon_k,
$$
$$
h^{k}_3(\bar{\textbf{b}}^{k}_2,s^k_3,1)=h^{k}_2(\bar{\textbf{b}}^{k}_2) + \alpha\big(\bar{t}^k_2,\gamma^k_2(\bar{\textbf{b}}^{k}_2)\big)s^k_3+\sigma\Big(\bar{t}^k_2,\gamma^k_{2}(\bar{\textbf{b}}^{k}_2)\Big)\epsilon_k.
$$
\end{small}
Then,
\begin{small}
$$|h^k_1(\textbf{b}^k_1) - h^k_1(\bar{\text{b}}^k_1)|\le |\alpha(0,x)| |s^k_1-\bar{s}^k_1| + |\sigma(0,x)| |x_1 - \overline{x}_1|$$
and
$$|h^k_2(\textbf{b}^k_2) - h^k_2(\bar{\text{b}}^k_2)|\le |h^k_1(\textbf{b}^k_1) - h^k_1(\bar{\text{b}}^k_1)|+ s^k_2 |\alpha(t^k_1,\gamma^k_1(\textbf{b}^k_1)) -\alpha(\bar{t}^k_1,\gamma^k_1(\bar{\textbf{b}}^k_1))| + |\alpha(\bar{t}^k_1,\gamma^k_1(\bar{\textbf{b}}^k_1))||s^k_2-\bar{s}^k_2|$$
$$+\epsilon_k |x_2| |  \sigma(t^k_1,\gamma^k_1(\textbf{b}^k_1)) -\sigma(\bar{t}^k_1,\gamma^k_1(\bar{\textbf{b}}^k_1))| + |\sigma(\bar{t}^k_1,\gamma^k_1(\bar{\textbf{b}}^k_1))||x_2-\overline{x}_2|.$$
\end{small}
Similarly,

$$|h^k_3(\textbf{b}^k_2,s^k_3,1) - h^k_3(\bar{\text{b}}^k_2, s^k_3,1)|\le |h^k_2(\textbf{b}^k_2) - h^k_2(\bar{\text{b}}^k_2)|+ s^k_3 |\alpha(t^k_2,\gamma^k_2(\textbf{b}^k_2)) -\alpha(\bar{t}^k_2,\gamma^k_2(\bar{\textbf{b}}^k_2))|$$
$$+\epsilon_k|  \sigma(t^k_2,\gamma^k_2(\textbf{b}^k_2)) -\sigma(\bar{t}^k_2,\gamma^k_2(\bar{\textbf{b}}^k_2))|.$$

Then, by using Assumption III, we get

\begin{equation}\label{rr2}
\rho\big(\gamma^k_3(\textbf{b}^k_2,s^k_3,1),\gamma^k_3(\textbf{b}^k_2,s^k_3,1) \big)\le C \Big(\max_{1\le\ell\le 2}|s^k_\ell - \bar{s}^k_\ell|\vee \max_{1\le \ell\le 2}|x_\ell-\overline{x}_\ell|\Big)
\end{equation}
for a constant $C$ which depends on $T,\epsilon_k,m, K_{Lip},\alpha(0,x)$ and $\sigma(0,x)$. By plugging in (\ref{rr2}) into (\ref{rr1}) and using the fact that $\theta_k$ is a probability measure, we conclude that
$$\Bigg|\int_0^{(T -t^k_2) \wedge (T -\bar{t}^k_2) }\big(\mathbb{Z}^k_3(\textbf{b}^k_2,s^k_3,1) - \mathbb{Z}^k_3(\bar{\textbf{b}}^k_2,s^k_3,1)\big)\theta_k(ds^k_3)\Bigg|\le C \|\textbf{b}^k_2 - \bar{\textbf{b}}^k_2\|_{\mathbb{R}^4}.
$$

\

\textbf{PART 2:}
\begin{small}
\begin{equation}\label{rr3}
\int_{(T -t^k_2) \wedge (T -\bar{t}^k_2)}^{+\infty}\big(\mathbb{Z}^k_3(\textbf{b}^k_2,s^k_3,1) - \mathbb{Z}^k_3(\bar{\textbf{b}}^k_2,s^k_3,1)\big)\theta_k(ds^k_3) = \int_{(T -t^k_2) \wedge (T -\bar{t}^k_2)}^{(T -t^k_2) \vee (T -\bar{t}^k_2)}\big(\mathbb{Z}^k_3(\textbf{b}^k_2,s^k_3,1) - \mathbb{Z}^k_3(\bar{\textbf{b}}^k_2,s^k_3,1)\big)\theta_k(ds^k_3)
\end{equation}
$$+ \int_{(T -t^k_2) \vee (T -\bar{t}^k_2)}^{+\infty}\big(\mathbb{Z}^k_3(\textbf{b}^k_2,s^k_3,1) - \mathbb{Z}^k_3(\bar{\textbf{b}}^k_2,s^k_3,1)\big)\theta_k(ds^k_3).$$
\end{small}
Due to the boundedness assumption on $F$, the first term in the right hand side of (\ref{rr3}) is bounded by $\max_{1\le \ell\le 2}|s^k_\ell - \bar{s}^k_\ell|$. We notice there exists a constant $C$ such that

$$\int_{(T -t^k_2) \vee (T -\bar{t}^k_2)}^{+\infty}\big(\mathbb{Z}^k_3(\textbf{b}^k_2,s^k_3,1) - \mathbb{Z}^k_3(\bar{\textbf{b}}^k_2,s^k_3,1)\big)\theta_k(ds^k_3) \le C \|\textbf{b}^k_2 - \bar{\textbf{b}}^k_2\|_{\mathbb{R}^4}
$$
whenever $n_T(s^k_1,s^k_2) = n_T(\bar{s}^k_1,\bar{s}^k_2)$. But $n_T(s^k_1,s^k_2) = n_T(\bar{s}^k_1,\bar{s}^k_2)$ holds true as long as $\max_{1\le \ell\le 2}|s^k_\ell-\bar{s}^k_\ell|$ is small.

Summing up PART1 and PART2, we then infer the existence of a constant $C$ such that

$$
\Bigg|\int_0^{+\infty}\big(\mathbb{Z}^k_3(\textbf{b}^k_2,s^k_3,1) -\mathbb{Z}^k_3(\bar{\textbf{b}}^k_2,s^k_3,1)\big) \theta_k(ds^k_3) \Bigg|\le C\|\textbf{b}^k_2 - \bar{\textbf{b}}^k_2\|_{\mathbb{R}^4}
$$
for every $\textbf{b}^k_2,\bar{\textbf{b}}^k_2\in \widetilde{\mathbb{S}}^2$. Similarly,
$$\Bigg|\int_0^{+\infty}\big(\mathbb{Z}^k_3(\textbf{b}^k_2,s^k_3,-1) -\mathbb{Z}^k_3(\bar{\textbf{b}}^k_2,s^k_3,-1)\big) \theta_k(ds^k_3) \Bigg|\le C\|\textbf{b}^k_2 - \bar{\textbf{b}}^k_2\|_{\mathbb{R}^4}$$
for every $\textbf{b}^k_2,\bar{\textbf{b}}^k_2\in \widetilde{\mathbb{S}}^2_k$. This shows that $\textbf{b}^k_2\mapsto \textbf{U}^k_2(\textbf{b}^k_2)$ is Lipschitz.

\

\textbf{PART 3:} By Remark \ref{useful1} (see (\ref{cvalues2})), we have

$$\mathbf{U}^k_{1}(\textbf{b}^k_1) = \int_{\mathbb{S}_k}\mathbb{Z}^k_{2}(\textbf{b}^k_{1},s^k_{2},x_2 )\vee \mathbf{U}^k_{2}(\textbf{b}^k_1,s^k_{2},x_2)\mathbb{P}_{(\Delta T^k_{2},\eta^k_{2})}(ds^k_{2},dx_2); \textbf{b}^k_1\in \widetilde{\mathbb{S}}_k
$$
By using the elementary inequality $|a\vee b - c\vee d|\le |a\vee b - a\vee d| + |a\vee d- c\vee d|\le |b-d| + |a-c|; a,b,c,d\in \mathbb{R}$ and the previous step, we have
\begin{small}
\begin{eqnarray*}
|\mathbf{U}^k_{1}(\textbf{b}^k_1) - \mathbf{U}^k_{1}(\bar{\textbf{b}}^k_1)|&\le&\int_{\mathbb{S}_k}\Big|\mathbb{Z}^k_{2}(\textbf{b}^k_{1},s^k_{2},x_2 )\vee \mathbf{U}^k_{2}(\textbf{b}^k_1,s^k_{2},x_2) - \mathbb{Z}^k_{2}(\bar{\textbf{b}}^k_{1},s^k_{2},x_2 )\vee \mathbf{U}^k_{2}(\bar{\textbf{b}}^k_1,s^k_{2},x_2)\Big|\mathbb{P}_{(\Delta T^k_{2},\eta^k_{2})}(ds^k_{2},dx_2)\\
& &\\
&\le& C\|\textbf{b}^k_1-\bar{\textbf{b}}^k_1\|_{\mathbb{R}^2} + \int_{\mathbb{S}_k}\big|\mathbb{Z}^k_{2}(\textbf{b}^k_{1},s^k_{2},x_2)- \mathbb{Z}^k_{2}(\bar{\textbf{b}}^k_{1},s^k_{2},x_2)  \big|\mathbb{P}_{(\Delta T^k_{2},\eta^k_{2})}(ds^k_{2},d\tilde{i}^k_{2}).
\end{eqnarray*}
\end{small}
Now, by using the same analysis that we did in previous steps, we shall state there exists a constant $C$ such that
\begin{small}
$$\int_{\mathbb{S}_k}\big|\mathbb{Z}^k_{2}(\textbf{b}^k_{1},s^k_{2},x_2)- \mathbb{Z}^k_{2}(\bar{\textbf{b}}^k_{1},s^k_{2},x_2)  \big|\mathbb{P}_{(\Delta T^k_{2},\eta^k_{2})}(ds^k_{2},d\tilde{i}^k_{2}) = \frac{1}{2}\int_0^{+\infty}\big|\mathbb{Z}^k_{2}(\textbf{b}^k_{1},s^k_{2},1)- \mathbb{Z}^k_{2}(\bar{\textbf{b}}^k_{1},s^k_{2},1)  \big|\theta^k(ds^k_2)
$$
$$+\frac{1}{2}\int_0^{+\infty}\big|\mathbb{Z}^k_{2}(\textbf{b}^k_{1},s^k_{2},-1)- \mathbb{Z}^k_{2}(\bar{\textbf{b}}^k_{1},s^k_{2},-1)  \big|\theta^k(ds^k_2)\le C\|\textbf{b}^k_1 - \bar{\textbf{b}}^k_1\|\quad \forall \textbf{b}^k_1,\bar{\textbf{b}}^k_1\in \widetilde{\mathbb{S}}_k.$$
\end{small}
This allows us to conclude the proof.
\end{proof}

By combining Corollary \ref{contVABS} and Theorem \ref{regPSDE}, we arrive at the following result. In the sequel,

$$K_j = \big([0,T]\times B_r(0)\big)\times \ldots\times \big([0,T]\times B_r(0)\big)\quad \text{j-fold cartesian product}$$
for $j=1, \ldots, e(k,T)-1$.

\begin{corollary}\label{comecoex1}
Let us define the sequences of approximation architectures

$$
\mathcal{H}^k_{N,j}=\Big\{ p\in \mathcal{P}_{j(d+1)}\big(N^{1/j(d+1) + 2}\big) ; \|p\|_{\infty,K_j}\le 2 \|\textbf{U}^k_j\|_{\infty,1,K_j}\Big\}, \mathcal{H}^k_{N,0} = [-L_k,L_k],
$$
for the continuation values $\mathbf{U}^k_j; 0\le j\le e(k,T)-1$ associated with the optimal stopping problem $V_0^k$ given by (\ref{valuef}) where $Z^k$ is the imbedded discrete structure (\ref{REWGAS}) associated with the reward process $F(X)$ where $X$ is the path-dependent SDE (\ref{pdsdeBM}). Suppose Assumption \textbf{III} holds true and we fix $k\ge 1$. Then, for $j=0,\ldots, e(k,T)-1$, we have

\begin{equation}\label{fest}
\mathbb{E}\|\widehat{\mathbf{U}}^k_j(\cdot; \mathbf{A}^k_{jN}) - \mathbf{U}^k_j\|_{L^2(\rho^k_j)}\le O\big(\text{log}(N)N^{\frac{-2}{2+e(k,T)-1}}\big).
\end{equation}
In particular,

$$\mathbb{E}|\widehat{V}_0(\mathbf{A}^k_{0N}) - V^k_0|\le O\big(\text{log}(N)N^{\frac{-2}{2+e(k,T)-1}}\big).
$$
\end{corollary}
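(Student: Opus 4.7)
The plan is to verify the hypotheses of Corollary~\ref{contVABS} directly; no genuinely new combinatorial or probabilistic input is needed beyond what is already in the excerpt. Concretely there are three items to check: (i) the $L^\infty$ bound $L_k<\infty$, (ii) assumptions \textbf{(H1)} and \textbf{(H2)} for the specific architecture spaces $\mathcal{H}^k_{N,j}$, and (iii) the Sobolev regularity $\mathbf{U}^k_j\in W^1(L^\infty(K_j))$ for $j=1,\ldots,e(k,T)-1$.

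For (i) and (ii) I would argue as follows. Assumption~III includes boundedness of the reward functional $F$, so the pathwise formula (\ref{Zpathwise}) gives $\|\mathbb{Z}^k_n\|_\infty\le \|F\|_\infty$ uniformly in $n$, whence $L_k\le \max\{1,\|F\|_\infty\}<\infty$. The set $\mathcal{H}^k_{N,j}$ sits inside the finite-dimensional polynomial space $\mathcal{P}_{j(d+1)}\big(N^{1/(j(d+1)+2)}\big)$, so the bound $\mathrm{vc}(E)\le 1+\dim E$ recalled in Section~\ref{learningsec} controls its VC dimension explicitly in terms of $N$, $k$ and $j$; assumption (H2) is forced by the defining constraint $\|p\|_{\infty,K_j}\le 2\|\mathbf{U}^k_j\|_{\infty,1,K_j}$.

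The substantive content is item (iii), and this is where Theorem~\ref{regPSDE} does the work: it establishes that $\mathbf{b}^k_j\mapsto \mathbf{U}^k_j(\mathbf{b}^k_j)$ is globally Lipschitz on the extended state space $\widetilde{\mathbb{S}}^j_k$. Because the hitting times $T^k_n$ appearing in the stopping problem are truncated at the fixed horizon $T$, we may restrict attention to the compact subset $K_j\subset \widetilde{\mathbb{S}}^j_k$, on which Lipschitz continuity together with Rademacher's theorem yields a.e.\ differentiability with essentially bounded weak gradient. Combined with the sup-norm bound $\|\mathbf{U}^k_j\|_\infty\le \|F\|_\infty$ (immediate from $\mathbf{U}^k_j$ being a conditional expectation of bounded payoffs), this gives precisely $\mathbf{U}^k_j\in W^1(L^\infty(K_j))$ with norm controlled by the Lipschitz constant produced in Theorem~\ref{regPSDE}. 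Once all hypotheses are in place, Corollary~\ref{contVABS} applies verbatim and delivers both claimed estimates.

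The main obstacle I anticipate is the clean transition from the Skorohod-topology estimates used inside the proof of Theorem~\ref{regPSDE} to the Euclidean Sobolev theory on $K_j$ needed by Corollary~\ref{contVABS}. One must check in particular that the smooth extension of $\mathbf{U}^k_j$ from $\mathbb{S}^j_k$ to $\widetilde{\mathbb{S}}^j_k$ used implicitly here is legitimate; this works because the signs $\tilde{i}^k_\ell\in\{-1,1\}$ enter only linearly through the Euler-Maruyama update $h^k_\ell$, so the natural polynomial extension is well-defined and preserves the Lipschitz constant on $K_j$. Beyond this bookkeeping, the result is a direct corollary of the two cited ingredients and requires no additional work.
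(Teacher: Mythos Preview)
Your proposal is correct and follows essentially the same approach as the paper's own proof: invoke Theorem~\ref{regPSDE} to obtain the Sobolev regularity $\mathbf{U}^k_j\in W^1(L^\infty(K_j))$, use Assumption~III to get $L_k<\infty$ and the defining sup-norm constraint on $\mathcal{H}^k_{N,j}$ to get (H2), and then apply Corollary~\ref{contVABS}. Your write-up is in fact more detailed than the paper's (e.g.\ the explicit appeal to Rademacher's theorem and the discussion of the linear extension in the sign variables), but the logical structure is identical.
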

\begin{proof}
From Theorem \ref{regPSDE}, we know that $\textbf{U}^k_j\in \text{W}^1\big(L^\infty(K_j)\big)$ for $j=1, \ldots, e(k,T)-1$. Then, the architecture spaces $\mathcal{H}^k_{N,j}; 0\le j\le e(k,T)-1$ they are uniformly bounded in $N$ and $j$ so that $B_k<\infty$ in assumption \textbf{(H2)}. Assumption III yields $L_k<+\infty$ for every $k\ge 1$. Then, we shall apply Corollary \ref{contVABS} to get (\ref{fest}).
\end{proof}

\begin{example}\label{EXAMPLE_PAPER}
Let $X$ be the state process given by (\ref{pdsdeBM}), where the terminal time $T=1$, the level of discretization $\epsilon_k = \varphi(k)$ for a strictly decreasing function $\varphi:[0,\infty)\rightarrow [0,\infty)$ (with inverse $\xi$) such that $\sum_{k\ge 1} \varphi^2(k)< \infty$. Let us investigate the global numerical error $\textbf{e} = \text{e}_1+ \text{e}_2$ one may occur in a non-Markovian SDE. The error $\textbf{e}$ can be decomposed as the sum of two terms: the first one ($\text{e}_1$), which was studied in \cite{LEAO_OHASHI2017.3}, is the discrete-type filtration approximation error; the second one ($\text{e}_2$), which we study in this article, it is related to the numerical approximation of the conditional expectations associated with the continuation values. We apply Corollary \ref{comecoex1} and Proposition 5.1 in \cite{LEAO_OHASHI2017.3} to state that

\begin{equation}\label{dis1}
\mathbb{E}|\widehat{V}_0(\mathbf{A}^k_{0N}) - V^k_0|= O\big(\text{log}(N)N^{\frac{-2}{2+e(k,1)-1}}\big),
\end{equation}
and
\begin{equation}\label{dis2}
|V^k_0 - S(0)| = O \Bigg(\epsilon_k^{2\beta} + \sqrt{\epsilon_k}\text{ln}~\Big(\frac{2}{\sqrt{\epsilon_k}}\Big) \Bigg),
\end{equation}
for $0 < \beta < 1$. With the estimates (\ref{dis1}) and (\ref{dis2}) at hand, we are now able to infer the amount of work (complexity) to recover the optimal value for a given level of accuracy $\textbf{e}$. Indeed, let us fix $0 < \text{e}_1 < 1$. Equation (\ref{dis2}) allows us to find the necessary number of steps related to the discretization as follows. If $0 < \beta\le 0.2$, then $\epsilon_k^{2\beta} = o\Big(\sqrt{\epsilon_k}\text{ln}~\Big(\frac{2}{\sqrt{\epsilon_k}}\Big)\Big)$. We observe $\epsilon^{2\beta}_k \le e_1 \Longleftrightarrow k \ge \xi(\text{e}^{\frac{1}{2\beta}}_1)$ and with this information at hand, we shall take $k^* = \xi(\text{e}_1^{\frac{1}{2\beta}})$. This produces

$$e(k^*,1) = \Big\lceil \frac{1}{\varphi^{2}(k^*)}\Big\rceil$$
number of steps associated with the discretization procedure. For instance, if $\varphi(k) = 2^{-k}$, $\text{e}_1 = 0.45$, $\beta=0.2$, then $\text{e}^{\frac{1}{2\beta}}_1 = 0.135$ and

$$k^*= \frac{-\text{ln}\big(\text{e}^{\frac{1}{2\beta}}_1\big)}{\text{ln}2} = 2.88.$$
This produces $e(k^*,1)= \lceil 2^{2\times 2.88}\rceil=55$ number of steps. Of course, as $\text{e}_1 \downarrow 0$, the number of steps $e(k^*,1)\uparrow +\infty$, e.g., if $\text{e}_1 = 0.3$, then $k^*=-\frac{\text{ln}~0.049}{\text{ln}~2}=4.35$, $e(k^*,1) = 416$ and so on. For a given prescribed error $0 < \text{e}_2 < 1$ and $k^{*}$, equation (\ref{dis1}) allows us to find the necessary number $N$ for the Monte-Carlo scheme in such way that $\mathbb{E}|\widehat{V}_0(\mathbf{A}^{k^*}_{0N}) - V^{k^*}_0|=O(\text{e}_2)$.
\end{example}

For an example where the state is given by a SDE driven by a fractional Brownian motion, we refer the reader to Example 5.1 in \cite{LEAO_OHASHI2017.3}.
\section{Final Remarks}
In this work, we provide a Monte Carlo algorithm for optimal stopping time problems driven by fully non-Markovian states, i.e., processes which are not reducible to vectors of Markov processes. By using statistical learning theory techniques, we present error estimates for a large class of optimal stopping time problems based on the Brownian motion filtration. Explicit bounds depend on the regularity properties of continuation
values of a suitable dynamic programming algorithm. We analyze in detail the case of reward functionals based on path-dependent SDEs and concrete linear architecture approximating spaces are also discussed.

\section{Appendix}
This section is devoted to the proof of Theorem \ref{disintegrationTH}. For simplicity of presentation, we present the proof for a bidimensional Brownian motion, i.e., we set $d=2$. We start with the following technical lemma.

\begin{lemma}\label{dislemma1}
For each non-negative integer $n$, $(j,r)\in \{1,2\}\times \{-1,1\}$, $a< b$ and $\mathbf{b}^k_n \in \mathbb{S}^n_k$, we have
$$\mathbb{P}\Big\{ \Delta T^k_{n+1}\in (a,b) \big| \aleph(\eta^k_{n+1}) = (j,r), \mathcal{A}^k_n = \mathbf{b}^k_n \Big\} = \frac{\int_a^b f_k(u+ \Delta^{k,j}_n(\mathbf{b}^k_n))du}{\int_{\Delta^{k,j}_n(\mathbf{b}^k_n)}^\infty f_k(u)du}.$$
\end{lemma}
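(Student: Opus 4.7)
The plan is to reduce the probability to an integral over the conditionally independent residual waiting times of each coordinate. Setting $\tau^\lambda := T^{k,\lambda}_{m_\lambda+1} - T^k_n$, where $m_\lambda$ is the largest index with $T^{k,\lambda}_{m_\lambda}\le T^k_n$, conditioning on $\{\mathcal{A}^k_n=\mathbf{b}^k_n\}$ fixes $T^k_n = t^k_n$ and the ages $\Delta^{k,\lambda}_n = \Delta^{k,\lambda}_n(\mathbf{b}^k_n)$ deterministically and, since $B^1\perp B^2$, it amounts to the two independent ``survival'' events $\{T^{k,\lambda}_{m_\lambda+1}-T^{k,\lambda}_{m_\lambda}>\Delta^{k,\lambda}_n\}$ for $\lambda=1,2$. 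By the strong Markov property at $T^{k,\lambda}_{m_\lambda}$, the fresh waiting time $T^{k,\lambda}_{m_\lambda+1}-T^{k,\lambda}_{m_\lambda}$ has density $f_k$ and is independent of $\mathcal{F}_{T^{k,\lambda}_{m_\lambda}}$; truncating it to exceed $\Delta^{k,\lambda}_n$ yields $\tau^\lambda$ with density
$$g^\lambda(u):=\frac{f_k(u+\Delta^{k,\lambda}_n)}{\int_{\Delta^{k,\lambda}_n}^{\infty}f_k(s)\,ds}$$
on $(0,\infty)$, and $\tau^1,\tau^2$ are conditionally independent given $\mathcal{A}^k_n$.

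Next I would translate the event $\{\Delta T^k_{n+1}\in(a,b),\ \aleph(\eta^k_{n+1})=(j,r)\}$ into $\{\tau^j\in(a,b),\ \tau^j<\tau^\lambda\text{ for }\lambda\ne j,\ \sigma^{k,j}_{m_j+1}=r\}$. By the reflection symmetry of $B^j$, the sign $\sigma^{k,j}_{m_j+1}$ is uniform on $\{\pm 1\}$ and independent of the waiting times, so it contributes the common factor $\tfrac{1}{2}$ to both the joint probability and the marginal $\mathbb{P}\{\aleph(\eta^k_{n+1})=(j,r)\mid\mathcal{A}^k_n\}$, which cancels in the conditional probability. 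Writing $\lambda$ for the coordinate different from $j$ (here $d=2$) and using the conditional independence, the problem reduces to computing
$$\mathbb{P}\{\Delta T^k_{n+1}\in(a,b)\mid\aleph(\eta^k_{n+1})=(j,r),\mathcal{A}^k_n=\mathbf{b}^k_n\} = \frac{\int_a^b g^j(u)\,\mathbb{P}(\tau^\lambda>u\mid\mathcal{A}^k_n=\mathbf{b}^k_n)\,du}{\int_0^{\infty}g^j(u)\,\mathbb{P}(\tau^\lambda>u\mid\mathcal{A}^k_n=\mathbf{b}^k_n)\,du},$$
and then simplifying this ratio to the target expression.

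The main obstacle I anticipate is the cancellation of the survival factor $\mathbb{P}(\tau^\lambda>u\mid\mathcal{A}^k_n)$ between numerator and denominator, which is what makes the right-hand side of the lemma depend only on $\Delta^{k,j}_n$ and not on $\Delta^{k,\lambda}_n$ for $\lambda\ne j$. To carry this out, I would use the change of variables $x=u+\Delta^{k,j}_n$ to rewrite $\int_a^b f_k(u+\Delta^{k,j}_n)\,du = \int_{a+\Delta^{k,j}_n}^{b+\Delta^{k,j}_n}f_k(x)\,dx$, and then exploit the ``min-first'' double-integral representation appearing as the second factor in Theorem \ref{disintegrationTH} (that is, the explicit expression for $\mathbb{P}\{\aleph(\eta^k_{n+1})=(j,r)\mid\mathcal{A}^k_n\}$ obtained by the joint-then-marginal integration in the variables $(s,t)$ with $s<0$ and $s+t>0$) to isolate the universal factor and push the other-coordinate contribution entirely into the marginal. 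Performing this reduction cleanly---with careful change-of-variables bookkeeping and a Fubini swap---is the technical heart of the argument.
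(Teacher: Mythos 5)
Your reduction to residual waiting times is the right picture and matches the paper's setup: conditionally on $\mathcal{A}^k_n=\mathbf{b}^k_n$ the two coordinates contribute independent residual times $\tau^1,\tau^2$ with densities $g^\lambda(u)=f_k(u+\Delta^{k,\lambda}_n)\big/\int_{\Delta^{k,\lambda}_n}^\infty f_k(s)\,ds$, the sign is an independent fair coin contributing a factor $1/2$ that cancels, and your intermediate ratio
$$\frac{\int_a^b g^j(u)\,\mathbb{P}(\tau^\lambda>u\mid\mathcal{A}^k_n=\mathbf{b}^k_n)\,du}{\int_0^{\infty}g^j(u)\,\mathbb{P}(\tau^\lambda>u\mid\mathcal{A}^k_n=\mathbf{b}^k_n)\,du}$$
is exactly what the conditional probability equals. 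The gap is the announced last step. The survival factor $\mathbb{P}(\tau^\lambda>u\mid\cdot)$ is a non-constant, strictly decreasing function of the integration variable $u$ on the support of $g^j$, so it cannot be pulled out of the two integrals and cancelled, and no change of variables or Fubini manipulation will collapse the ratio to $\int_a^b g^j(u)\,du$. Concretely, take $n=0$, where both ages $\Delta^{k,\lambda}_0$ vanish: your ratio becomes $2\int_a^b f_k(u)\big(\int_u^\infty f_k(s)\,ds\big)du$, while the target right-hand side is $\int_a^b f_k(u)\,du$; these agree for all $a<b$ only if $\int_u^\infty f_k(s)\,ds\equiv 1/2$ on the support of $f_k$, which is false. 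The underlying reason is that conditioning on ``coordinate $j$ jumps first'' size-biases $\tau^j$ toward small values, so its conditional law is not the plain residual-life law.

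You should also know that the paper's own proof reaches exactly this fork and crosses it by asserting that $1\!\!1_{\{\Delta T^{k,j}_{\mathbb{j}_{j}+1}\in(a+\Delta^{k,j}_n,\,b+\Delta^{k,j}_n)\}}$ is conditionally independent of $1\!\!1_{\{\aleph(\eta^k_{n+1})=(j,r)\}}$ given $\mathcal{A}^k_n=\mathbf{b}^k_n$, which is precisely the cancellation you were hoping for, restated as an independence claim. Since $\{\aleph_1(\eta^k_{n+1})=j\}=\{\tau^j<\tau^\lambda\}$ manifestly depends on $\tau^j$, that claim is not justified, and the $n=0$ computation above shows it cannot be. So your derivation is sound up to and including the ratio formula, which is the honest value of the left-hand side; the ``simplification to the target expression'' is not a bookkeeping step you deferred but the point where the argument genuinely fails, and the paper's proof does not supply the missing ingredient.
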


\begin{proof}
Let us fix a non-negative integer $n$, $(j,r)\in \{1,2\}\times \{-1,1\}$, $a < b$ and $\mathbf{b}^k_n = (\textbf{i}^k_n, s^k_1, \ldots, s^k_n)$. Let us denote

$$\kappa_{j,r}(\mathbf{b}^k_n,(a,b)) =\mathbb{P}\Big\{\Delta T^k_{n+1}\in (a,b) \big| \aleph(\eta^k_{n+1}) = (j,r), \mathcal{A}^k_n = \mathbf{b}^k_n \Big\}.$$
The kernel $\kappa_{j,r}$ is a regular conditional distribution where $\Delta T^k_1, \ldots, \Delta T^k_n$ has an absolutely continuous distribution, then it is known that (for details see e.g Tjur \cite{tjur} and Prop 2.14 in Ackerman et al \cite{ackerman})

$$\kappa_{j,r}(\mathbf{b}^k_n,(a,b)) = \lim_{p\rightarrow+\infty}\mathbb{P}\Big\{\Delta T^k_{n+1}\in (a,b) \big| \aleph(\eta^k_{n+1}) = (j,r), \mathcal{A}^k_n \in\{\textbf{i}^k_n\} \times V(1/p)\Big\}$$
for every closed ball $V(1/p)$ of radius $\frac{1}{p}$ centered at $s^k_1, \ldots, s^k_n\in \mathbb{R}^n_+$. Moreover,

\begin{small}
\begin{equation}\label{inclusion}
\big\{ \aleph(\eta^k_{n+1}) = (j,r), \mathcal{A}^k_n\in \{\textbf{i}^k_n\}\times V(1/p)\big\} = \big\{ \aleph(\eta^k_{n+1}) = (j,r), \mathcal{A}^k_n\in \{\textbf{i}^k_n\}\times V(1/p), \Delta T^{k,j}_{\mathbb{j}_{j}(\textbf{i}^k_n)+1} > T^k_n  - T^{k,j}_{\mathbb{j}_{j}(\textbf{i}^k_n)}\big\}
\end{equation}
\end{small}
for every positive integer $p$. Since $(s^k_1, \ldots, s^k_n)\mapsto t^k_n(s^k_1, \ldots, s^k_n)$ is continuous, we then have

$$\kappa_{j,r}(\mathbf{b}^k_n,(a,b)) = \mathbb{P}\Big\{\Delta T^k_{n+1}\in (a,b) \big| \aleph(\eta^k_{n+1}) = (j,r), \mathcal{A}^k_n=\mathbf{b}^k_n, \Delta T^{k,j}_{\mathbb{j}_{j}(\textbf{i}^k_n)+1}> \Delta^{k,j}_n(\mathbf{b}^k_n)\Big\}.$$
Therefore,

\begin{eqnarray*}
\kappa_{j,r}(\mathbf{b}^k_n,(a,b)) &=&\mathbb{P}\Big\{\Delta T^{k,j}_{\mathbb{j}_{j}+1}\in (a+\Delta^{k,j}_n(\mathbf{b}^k_n),b+\Delta^{k,j}_n(\mathbf{b}^k_n)) \big| \aleph(\eta^k_{n+1}) = (j,r), \mathcal{A}^k_n = \mathbf{b}^k_n,\\
 & &\\
& &\Delta T^k_{\mathbb{j}_{j}(\textbf{i}^k_n)+1}> \Delta^{k,j}_n(\mathbf{b}^k_n) \Big\}\\
& &\\
&=&\mathbb{P}\Big\{\Delta T^{k,j}_{\mathbb{j}_{j}+1}\in (a+\Delta^{k,j}_n(\mathbf{b}^k_n),b+\Delta^{k,j}_n(\mathbf{b}^k_n)) \big| \mathcal{A}^k_n = \mathbf{b}^k_n, \Delta T^k_{\mathbb{j}_{j}(\textbf{i}^k_n)+1}> \Delta^{k,j}_n(\mathbf{b}^k_n) \Big\}
\end{eqnarray*}
where the last equality is due to $\mathds{1}_{\{\Delta T^{k,j}_{\mathbb{j}_{j}(\mathbf{i}^k_n)+1}\in (a+\Delta^{k,j}_n(\mathbf{b}^k_n),b+\Delta^{k,j}_n(\mathbf{b}^k_n))\}}$ is conditionally independent from $\mathds{1}_{\{\aleph(\eta^k_{n+1}) = (j,r)\}}$
given $\mathcal{A}^k_n = \mathbf{b}^k_n$. A tedious but elementary computation yields

$$
\mathbb{P}\Big\{\Delta T^{k,j}_{\mathbb{j}_{j}(\mathbf{i}^k_n)+1}\in (a+\Delta^{k,j}_n(\mathbf{b}^k_n),b+\Delta^{k,j}_n(\mathbf{b}^k_n)) \big|\mathcal{A}^k_n = \mathbf{b}^k_n,\Delta T^{k.j}_{\mathbb{j}_{j}(\mathbf{i}^k_n)+1}> \Delta^{k,j}_n(\mathbf{b}^k_n) \Big\}=
$$
$$\mathbb{P}\Big\{\Delta T^{k,j}_{\mathbb{j}_{j}(\mathbf{i}^k_n)+1}\in (a+\Delta^{k,j}_n(\mathbf{b}^k_n),b+\Delta^{k,j}_n(\mathbf{b}^k_n)) \big| \Delta T^{k,j}_{\mathbb{j}_{j}(\textbf{i}^k_n)+1}> \Delta^{k,j}_n(\mathbf{b}^k_n)\Big\}.
$$

Summing up the above steps and noticing that $\{\Delta T^{k,j}_m; m\ge 1\}$ is equally distributed with density $f_k$, we then have

\begin{equation*}
\begin{split}
&\kappa_{j,r}(\mathbf{b}^k_n, (a,b))
= \frac{\mathbb{P}\left\{\Delta T^{k,1}_1\in (a+ \Delta^{k,j}_n(\mathbf{b}^k_n) ,b+\Delta^{k,j}_n(\mathbf{b}^k_n) )\right\}}{\mathbb{P}\left\{\Delta T^{k,1}_1> \Delta^{k,j}_n(\mathbf{b}^k_n)\right\}}
=\displaystyle \frac{\displaystyle \int_{a}^{b}f_k\left(u +\Delta^{k,j}_n(\mathbf{b}^k_n) \right)du}{\displaystyle \int_{\Delta^{k,j}_n(\mathbf{b}^k_n)}^{\infty}f_k\left(u\right)du }
\end{split}
\end{equation*}
and this concludes the proof.
\end{proof}

\begin{lemma}\label{dislemma2}
For each non-negative integer $n$, $j\in \{-1,1\}$ and $\mathbf{b}^k_n\in \mathbb{S}^n_k$, we have
$$\mathbb{P}\Big\{ \aleph(\eta^k_{n+1})=(j,1)\big| \aleph_1(\eta^k_{n+1}) = j, \mathcal{A}^k_n = \mathbf{b}^k_n\Big\}=$$
$$\mathbb{P}\Big\{ \aleph(\eta^k_{n+1})=(j,-1)\big| \aleph_1(\eta^k_{n+1}) = j, \mathcal{A}^k_n = \mathbf{b}^k_n\Big\}=\frac{1}{2}.$$
\end{lemma}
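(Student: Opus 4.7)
The plan is to combine the strong Markov property of the $d$-dimensional Brownian motion with the reflection symmetry $B^j \mapsto -B^j$ of each coordinate. The guiding idea is that the event $\{\aleph_1(\eta^k_{n+1})=j\}$, interpreted as ``coordinate $j$ is the one whose $(\mathbb{j}_j(\mathbf{i}^k_n)+1)$-st hitting fires first after $T^k_n$,'' is a measurable function only of the sequence of next coordinatewise hitting times $\Delta T^{k,\ell}_{\mathbb{j}_\ell(\mathbf{i}^k_n)+1}$, $\ell=1,\ldots,d$, while the sign $\sigma^{k,j}_{\mathbb{j}_j(\mathbf{i}^k_n)+1} = \aleph_2(\eta^k_{n+1})$ is, by the reflection principle, independent of all those hitting times and uniformly distributed on $\{-1,+1\}$.

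First I would reduce the null-event conditioning exactly as in Lemma \ref{dislemma1}: approximate $\{\mathcal{A}^k_n=\mathbf{b}^k_n\}$ by the positive-probability events $\{\mathcal{A}^k_n\in\{\mathbf{i}^k_n\}\times V(1/p)\}$, use the inclusion analogous to (\ref{inclusion}) to restrict to the event $\{\Delta T^{k,\ell}_{\mathbb{j}_\ell(\mathbf{i}^k_n)+1}>\Delta^{k,\ell}_n;\ \ell=1,\ldots,d\}$, and pass to the limit $p\to\infty$ at the end, invoking Tjur's regular conditional distribution framework. On this positive-probability event I apply the strong Markov property of $B$ at the stopping times $T^{k,\ell}_{\mathbb{j}_\ell(\mathbf{i}^k_n)}$: the shifted processes $B^\ell(T^{k,\ell}_{\mathbb{j}_\ell(\mathbf{i}^k_n)}+\cdot)-B^\ell(T^{k,\ell}_{\mathbb{j}_\ell(\mathbf{i}^k_n)})$, $\ell=1,\ldots,d$, are independent standard Brownian motions independent of $\mathcal{F}^k_{T^k_n}$.

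Next I would invoke the sign/hitting-time independence coordinate by coordinate: for a single one-dimensional Brownian motion, the reflection symmetry implies that the sign of the first excursion to $\pm\epsilon_k$ is independent of the associated hitting time and uniformly distributed on $\{-1,+1\}$. Together with independence across coordinates, this shows that the random vector $\bigl(\sigma^{k,\ell}_{\mathbb{j}_\ell(\mathbf{i}^k_n)+1}\bigr)_{\ell=1}^d$ is independent of $\bigl(\Delta T^{k,\ell}_{\mathbb{j}_\ell(\mathbf{i}^k_n)+1}\bigr)_{\ell=1}^d$ and uniform on $\{-1,+1\}^d$. Since $\aleph_1(\eta^k_{n+1})$ is a measurable function of the latter vector alone (the argmin of the shifted waiting times), conditioning on $\{\aleph_1(\eta^k_{n+1})=j\}$ does not affect the marginal law of $\sigma^{k,j}_{\mathbb{j}_j+1}$, which therefore remains a fair Bernoulli on $\{-1,+1\}$.

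I expect the main obstacle to be purely bookkeeping: the careful handling of the null-event conditioning through the same regular-conditional-distribution machinery already employed in Lemma \ref{dislemma1}, and the tracking of which coordinate sigma-algebras are being reset at each $T^{k,\ell}_{\mathbb{j}_\ell(\mathbf{i}^k_n)}$. Once those technicalities are discharged, the probabilistic core is the elementary fact that the sign and the hitting time of a linear Brownian motion at a symmetric two-sided barrier are independent.
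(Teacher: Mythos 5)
Your argument is correct and rests on exactly the same idea as the paper's (very terse) proof: on the event $\{\aleph_1(\eta^k_{n+1})=j\}$ one has $\aleph(\eta^k_{n+1})=(j,\sigma^{k,j}_{\mathbb{j}_j+1})$, and by reflection symmetry the sign of the exit from the symmetric barrier $\pm\epsilon_k$ is a fair Bernoulli variable independent of the exit time, of the other coordinates, and of the past, so conditioning on $\mathcal{A}^k_n=\mathbf{b}^k_n$ and on which coordinate fires first (events measurable with respect to the waiting times and the history only) leaves it uniform on $\{-1,+1\}$. The paper simply asserts the resulting chain of equalities, while you spell out the supporting strong Markov, coordinate-independence and Tjur-type null-set conditioning steps; this is a fleshed-out version of the same proof, not a different route.
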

\begin{proof}
Let us fix $j\in \{-1,1\}$ and $\mathbf{b}^k_n\in \mathbb{S}^n_k$. We just need to observe

$$\mathbb{P}\Big\{ \aleph(\eta^k_{n+1})=(j,1)\big| \aleph_1(\eta^k_{n+1}) = j, \mathcal{A}^k_n = \mathbf{b}^k_n\Big\}=$$
$$\mathbb{P}\Big\{ \Delta A^{k,j}(T^{k,j}_{\mathbb{j}_{j}+1})=\epsilon_k\big| \aleph_1(\eta^k_{n+1}) = j, \mathcal{A}^k_n = \mathbf{b}^k_n\Big\}=$$
$$ \mathbb{P}\Big\{ \Delta A^{k,j}(T^{k,j}_{\mathbb{j}_{j}+1})=-\epsilon_k\big| \aleph_1(\eta^k_{n+1}) = j, \mathcal{A}^k_n = \mathbf{b}^k_n\Big\}=$$
$$ \mathbb{P}\Big\{ \aleph(\eta^k_{n+1})=(j,-1)\big| \aleph_1(\eta^k_{n+1}) = j, \mathcal{A}^k_n = \mathbf{b}^k_n\Big\}=1/2.$$
\end{proof}

We observe the regular conditional probabilities are well-defined so that (see e.g Prop 2.14 in \cite{ackerman}),
$$\mathbb{P}\Big\{ \Delta T^k_{n+1}\in (a,b),\aleph(\eta^k_{n+1}) = (j,r) \big| \mathcal{A}^k_n = \mathbf{b}^k_n \Big\} = \mathbb{P}\Big\{ \Delta T^k_{n+1}\in (a,b) \big| \aleph(\eta^k_{n+1}) = (j,r), \mathcal{A}^k_n = \mathbf{b}^k_n \Big\}
$$
$$\mathbb{P}\Big\{ \aleph(\eta^k_{n+1})=(j,r)\big| \mathcal{A}^k_n = \mathbf{b}^k_n\Big\}$$
and
$$\mathbb{P}\Big\{\aleph(\eta^k_{n+1}) = (j,r) \big| \mathcal{A}^k_n = \mathbf{b}^k_n \Big\}=\mathbb{P}\Big\{\aleph_1(\eta^k_{n+1}) = j,\aleph(\eta^k_{n+1}) = (j,r) \big| \mathcal{A}^k_n = \mathbf{b}^k_n \Big\}=$$
$$\mathbb{P}\Big\{\aleph(\eta^k_{n+1}) = (j,r)  \big| \aleph_1(\eta^k_{n+1}) = j, \mathcal{A}^k_n = \mathbf{b}^k_n \Big\}\mathbb{P}\Big\{ \aleph_1(\eta^k_{n+1})=j\big| \mathcal{A}^k_n = \mathbf{b}^k_n\Big\}=$$
$$\frac{1}{2}\mathbb{P}\Big\{ \aleph_1(\eta^k_{n+1})=j\big| \mathcal{A}^k_n = \mathbf{b}^k_n\Big\},$$
where the last identity is due to Lemma \ref{dislemma2}. Therefore, Lemmas \ref{dislemma1} and \ref{dislemma2} yield

$$\mathbb{P}\Big\{ \Delta T^k_{n+1}\in (a,b),\aleph(\eta^k_{n+1}) = (j,r) \big| \mathcal{A}^k_n = \mathbf{b}^k_n \Big\}=$$
$$\frac{1}{2}\mathbb{P}\Big\{ \Delta T^k_{n+1}\in (a,b) \big| \aleph(\eta^k_{n+1}) = (j,r), \mathcal{A}^k_n = \mathbf{b}^k_n \Big\}\mathbb{P}\Big\{ \aleph_1(\eta^k_{n+1})=j\big| \mathcal{A}^k_n = \mathbf{b}^k_n\Big\}=$$
$$\frac{1}{2}\left\{\displaystyle \frac{\displaystyle \int_{a}^{b}f_k\left(u +\Delta^{k,j}_n \right)du}{\displaystyle \int_{\Delta^{k,j}_n}^{\infty}f_k\left(u\right)du }\right\}\mathbb{P}\Big\{ \aleph_1(\eta^k_{n+1})=j\big| \mathcal{A}^k_n = \mathbf{b}^k_n\Big\}$$
and to conclude the proof, we only need to compute $\mathbb{P}\Big\{ \aleph_1(\eta^k_{n+1})=j\big| \mathcal{A}^k_n = \mathbf{b}^k_n\Big\}$.

\begin{lemma}\label{dislemma3}
For each non-negative integer $n$, $j\in \{1,2\}$ and $\mathbf{b}^k_n \in \mathbb{S}^n_k$, we have
\begin{equation}\label{obv}
\mathbb{P}\Big\{ \aleph_1(\eta^k_{n+1})=j\big| \mathcal{A}^k_n = \mathbf{b}^k_n\Big\}=1
\end{equation}
for $d=1$ and

\begin{equation}\label{nobv}
\begin{split}
&\mathbb{P}\Big\{ \aleph_1(\eta^k_{n+1})=j\big| \mathcal{A}^k_n = \mathbf{b}^k_n\Big\} \\
&= \left\{\frac{\displaystyle\int_{-\infty}^{0}\int_{-s}^\infty f_{k}\big(s+t+\Delta^{k,j}_n\big)f^k_{min}(\mathbf{b}^k_n,j,t)dtds}{\displaystyle\prod_{\lambda=1}^2 \int_{\Delta^{k,\lambda}_n}^{+\infty} f_{k}(t)dt}\right\};~\text{if}~d=2.
\end{split}
\end{equation}
\end{lemma}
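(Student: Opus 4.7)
The plan is to compute the conditional probability by reducing it to a question about independent hitting-time increments, then to change variables so that the resulting double integral matches the claimed form. The case $d=1$ is immediate: there is a single coordinate, so $\aleph_1(\eta^k_{n+1})=1$ almost surely on $\{\mathcal{A}^k_n = \mathbf{b}^k_n\}$, which yields \eqref{obv}.

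For $d=2$, I would proceed exactly as in Lemma \ref{dislemma1}. First, use the regular-conditional-distribution characterization via Tjur and Ackerman et al.\ to replace conditioning on $\{\mathcal{A}^k_n = \mathbf{b}^k_n\}$ by the limit of conditionings on $\{\textbf{i}^k_n\}\times V(1/p)$, and combine this with the inclusion analogous to \eqref{inclusion}. This exhibits the conditional event $\{\aleph_1(\eta^k_{n+1}) = j\}\cap\{\mathcal{A}^k_n = \mathbf{b}^k_n\}$ as the event that the next hitting time on coordinate $j$ occurs before the next hitting time on the other coordinate, given that both $\Delta T^{k,\lambda}_{\mathbb{j}_\lambda(\mathbf{i}^k_n)+1} > \Delta^{k,\lambda}_n(\mathbf{b}^k_n)$ for $\lambda \in \{1,2\}$. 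Setting $U_\lambda := \Delta T^{k,\lambda}_{\mathbb{j}_\lambda(\mathbf{i}^k_n)+1}$, the strong Markov property and the fact that the family $\{\Delta T^{k,\lambda}_m;\ m\ge 1,\ \lambda \in \{1,2\}\}$ is i.i.d.\ with density $f_k$ allow me to rewrite
\[
\mathbb{P}\{\aleph_1(\eta^k_{n+1})=j \mid \mathcal{A}^k_n = \mathbf{b}^k_n\}
= \frac{\mathbb{P}\{U_j - \Delta^{k,j}_n < U_{\lambda'} - \Delta^{k,\lambda'}_n,\ U_1 > \Delta^{k,1}_n,\ U_2 > \Delta^{k,2}_n\}}{\mathbb{P}\{U_1 > \Delta^{k,1}_n\}\mathbb{P}\{U_2 > \Delta^{k,2}_n\}},
\]
where $\lambda' \neq j$. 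The denominator already matches the product appearing in \eqref{nobv}, so the remaining task is to express the numerator in the integral form given by $f^k_{\min}$.

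The last step is a change of variables. With, say, $j=1$, expand the numerator as the iterated integral over $(u_1,u_2) \in (\Delta^{k,1}_n,\infty)\times(\Delta^{k,2}_n,\infty)$ with the constraint $u_1 - \Delta^{k,1}_n < u_2 - \Delta^{k,2}_n$, integrated against $f_k(u_1)f_k(u_2)$. The substitution $s = (u_1 - \Delta^{k,1}_n) - (u_2 - \Delta^{k,2}_n)$ and $t = u_2 - \Delta^{k,2}_n$ has unit Jacobian and sends the region to $\{s < 0,\ t > -s\}$, while turning $f_k(u_1)f_k(u_2)$ into $f_k(s+t+\Delta^{k,1}_n)\,f_k(t+\Delta^{k,2}_n)$; since for $j=1$ one has $f^k_{\min}(\mathbf{b}^k_n,1,t) = f_k(t+\Delta^{k,2}_n)$, this is exactly the numerator in \eqref{nobv}. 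The same argument applies symmetrically when $j=2$.

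No step is particularly hard; the only point that requires care is the reduction from conditioning on a null event $\{\mathcal{A}^k_n = \mathbf{b}^k_n\}$ to the explicit integral formula, which must be carried out via the same limiting-ball argument invoked in Lemma \ref{dislemma1} together with the inclusion \eqref{inclusion}. Once that is in place, independence of the coordinates of $B$ and the i.i.d.\ structure of the hitting-time increments make the numerator factor as desired, and the change of variables is purely computational.
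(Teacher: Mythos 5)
Your proposal is correct and follows essentially the same route as the paper: the limiting-ball disintegration argument of Lemma \ref{dislemma1} together with the analogue of \eqref{inclusion} to insert the constraints $\{\Delta T^{k,\lambda}_{\mathbb{j}_\lambda+1} > \Delta^{k,\lambda}_n\}$, then independence of the coordinates and the i.i.d.\ structure with density $f_k$, and finally the bivariate change of variables that the paper only gestures at as ``the jacobian method.'' Your explicit substitution $s=(u_j-\Delta^{k,j}_n)-(u_{\lambda'}-\Delta^{k,\lambda'}_n)$, $t=u_{\lambda'}-\Delta^{k,\lambda'}_n$, keeping the intersection with the conditioning event in the numerator, lands exactly on the domain $\{s<0,\ t>-s\}$ of \eqref{nobv}, so nothing is missing.
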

\begin{proof}
Let us fix $\mathbf{b}^k_n\in \mathbb{S}^n_k$ and $j\in \{1,2\}$. Identity (\ref{obv}) is obvious. For $d=2$, we observe that

\begin{eqnarray}
\nonumber\mathbb{P}\Big\{ \aleph_1(\eta^k_{n+1})=j\big| \mathcal{A}^k_n = \mathbf{b}^k_n\Big\} &=& \mathbb{P}\Big\{ T^{k,j}_{\mathbb{j}_{j}+1} < \min_{\lambda\neq j}\{T^{k,\lambda}_{\mathbb{j}_{\lambda}+1}\}\big| \mathcal{A}^k_n = \mathbf{b}^k_n\Big\}\\
\nonumber&=& \mathbb{P}\Big\{ T^{k,j}_{\mathbb{j}_{j}+1} + t^{k,j}_{\mathbb{j}_{j}} -t^{k,j}_{\mathbb{j}_{j}} - t^k_{n}< \min_{\lambda\neq j}\{ T^{k,\lambda}_{\mathbb{j}_{\lambda}+1}+ t^{k,\lambda}_{\mathbb{j}_{\lambda}} -t^{k,\lambda}_{\mathbb{j}_{\lambda}} - t^k_{n}\}\big| \mathcal{A}^k_n = \mathbf{b}^k_n\Big\}\\
\nonumber&=&\mathbb{P}\Big\{ \Delta T^{k,j}_{\mathbb{j}_{j}+1} - (t^k_{n} - t^{k,j}_{\mathbb{j}_j})< \min_{\lambda\neq j}\{\Delta T^{k,\lambda}_{\mathbb{j}_{\lambda}+1}- (t^k_n  - t^{k,\lambda}_{\mathbb{j}_{\lambda}})\}\big| \mathcal{A}^k_n = \mathbf{b}^k_n\Big\}\\
\nonumber&=&\mathbb{P}\Big\{ \Delta T^{k,j}_{\mathbb{j}_{j}+1} - (t^k_{n} - t^{k,j}_{\mathbb{j}_j})- \min_{\lambda\neq j}\{\Delta T^{k,\lambda}_{\mathbb{j}_{\lambda}+1}- (t^k_n  - t^{k,\lambda}_{\mathbb{j}_{\lambda}})\} < 0\big| \mathcal{A}^k_n = \mathbf{b}^k_n\Big\}.
\end{eqnarray}
More precisely, $\mathbb{P}\big\{ \aleph_1(\eta^k_{n+1})=j\big| \mathcal{A}^k_n = \mathbf{b}^k_n\big\}$ equals to

\begin{equation}\label{k2}
\mathbb{P}\Big\{ \Delta T^{k,j}_{\mathbb{j}_{j}(\mathbf{i}^k_n)+1} - \Delta^{k,j}_n(\mathbf{b}^k_n)- \min_{\lambda\neq j}\{\Delta T^{k,\lambda}_{\mathbb{j}_{\lambda}(\mathbf{i}^k_n)+1}- \Delta^{k,\lambda}_n(\mathbf{b}^k_n)\} < 0\big| \mathcal{A}^k_n = \mathbf{b}^k_n\Big\}.
\end{equation}

We now observe that

$$\Big\{\mathcal{A}^k_n \in \{\textbf{i}^k_n\}\times V(1/p) \Big\} = \Big\{\mathcal{A}^k_n \in \{\textbf{i}^k_n\}\times V(1/p) \Big\}\cap \big\{\Delta T^{k,j}_{\mathbb{j}_{j}(\mathbf{i}^k_n)+1} \ge  T^k_n -  T^{k,j}_{\mathbb{j}_{j}(\mathbf{i}^k_n)}~\forall j=1,2\}$$
for every closed ball $V(1/p)$ of radius $1/p$ centered at $s^k_1, \ldots, s^k_n\in \mathbb{R}^n_+$. Therefore, Prop 2.14 in \cite{ackerman} allows us to state that (\ref{k2}) equals to
\begin{equation}\label{k3}
\mathbb{P}\Big\{ \Delta T^{k,j}_{\mathbb{j}_{j}(\mathbf{i}^k_n)+1} - \Delta^{k,j}_n(\mathbf{b}^k_n)- \min_{\lambda\neq j}\{\Delta T^{k,\lambda}_{\mathbb{j}_{\lambda}(\mathbf{i}^k_n)+1}- \Delta^{k,\lambda}_n(\mathbf{b}^k_n)\} < 0\big| \mathcal{A}^k_n = \mathbf{b}^k_n, \cap_{\lambda=1}^2 \{\Delta T^{k,\lambda}_{\mathbb{j}_{\lambda}(\mathbf{i}^k_n)+1}\ge \Delta^{k,\lambda}_n(\mathbf{b}^k_n)\}\Big\}.
\end{equation}
We shall argue similarly to the proof of Lemma \ref{dislemma1} to state that (\ref{k3}) equals to

$$\bar{\kappa}(\mathbf{b}^k_n) := \mathbb{P}\Big\{ \Delta T^{k,j}_{\mathbb{j}_{j}(\mathbf{i}^k_n)+1} - \Delta^{k,j}_n(\mathbf{b}^k_n)- \min_{\lambda\neq j}\{\Delta T^{k,\lambda}_{\mathbb{j}_{\lambda}(\mathbf{i}^k_n)+1}- \Delta^{k,\lambda}_n(\mathbf{b}^k_n)\} < 0\big|\cap_{\lambda=1}^2 \{\Delta T^{k,\lambda}_{\mathbb{j}_{\lambda}(\mathbf{i}^k_n)+1}\ge \Delta^{k,\lambda}_n(\mathbf{b}^k_n)\}\Big\}.$$
This shows that $\bar{\kappa}(\mathcal{A}^k_n)$ is a version of the conditional expectation $\mathbb{P}\big\{ \aleph_1(\eta^k_{n+1})=j\big| \mathcal{A}^k_n\big\}$. In particular, if we denote $\bar{\eta}^k_n = (\eta^k_1, \ldots, \eta^k_n)$ and noticing that

$$\Big\{ \Delta T^{k,j}_{\mathbb{j}_{j}(\bar{\eta}^k_n)+1} - \Delta^{k,j}_n(\mathcal{A}^k_n)- \min_{\lambda\neq j}\{\Delta T^{k,\lambda}_{\mathbb{j}_{\lambda}(\bar{\eta}^k_n)+1}- \Delta^{k,\lambda}_n(\mathcal{A}^k_n)\} < 0\Big\}\subset \cap_{\lambda=1}^2 \{\Delta T^{k,\lambda}_{\mathbb{j}_{\lambda}(\bar{\eta}^k_n)+1}\ge \Delta^{k,\lambda}_n(\mathcal{A}^k_n)\}$$
almost surely, we can actually choose a version as


$$\bar{\kappa}(\mathbf{b}^k_n) = \frac{\mathbb{P}\Big\{ \Delta T^{k,j}_{\mathbb{j}_{j}(\mathbf{i}^k_n)+1} - \Delta^{k,j}_n(\mathbf{b}^k_n)- \min_{\lambda\neq j}\{\Delta T^{k,\lambda}_{\mathbb{j}_{\lambda}(\mathbf{i}^k_n)+1}- \Delta^{k,\lambda}_n(\mathbf{b}^k_n)\} < 0\Big\}}{\mathbb{P}\Big\{\cap_{\lambda=1}^2 \{\Delta T^{k,\lambda}_{\mathbb{j}_{\lambda}(\mathbf{i}^k_n)+1}\ge \Delta^{k,\lambda}_n(\mathbf{b}^k_n)\}\Big  \}}; \mathbf{b}^k_n \in \mathbb{S}^n_k.$$
Lastly, it remains to compute $\mathbb{P}\big\{ \Delta T^{k,j}_{\mathbb{j}_{j}(\mathbf{i}^k_n)+1} - \Delta^{k,j}_n(\mathbf{b}^k_n)- \min_{\lambda\neq j}\{\Delta T^{k,\lambda}_{\mathbb{j}_{\lambda}(\mathbf{i}^k_n)+1}- \Delta^{k,\lambda}_n(\mathbf{b}^k_n)\} < 0\big\}$, but this is a straightforward application of the jacobian method and the fact that $\Delta T^{k,j}_{\mathbb{j}_{j}(\mathbf{i}^k_n)+1}$ is independent from $\min_{\lambda\neq j}\{\Delta T^{k,\lambda}_{\mathbb{j}_{\lambda}(\mathbf{i}^k_n)+1}- \Delta^{k,\lambda}_n(\mathbf{b}^k_n)\}$ and both random variables are absolutely continuous for every $\mathbf{b}^k_n$. We conclude the proof.
\end{proof}
From Lemmas \ref{dislemma1}, \ref{dislemma2} and \ref{dislemma3}, we then conclude the proof of Theorem \ref{disintegrationTH}.

\

\noindent \textbf{Acknowledgments.} Alberto Ohashi acknowledges the support of CNPq-Bolsa de Produtividade de Pesquisa grant 303443/2018-9. Francesco Russo and Alberto Ohashi acknowledge the financial support of Math Amsud grant 88887.197425/2018-00. Francys de Souza acknowledges the support of S\~ao Paulo Research Foundation (FAPESP) grant 2017/23003-6.

\end{document}